\newtheorem{thm}{Theorem}[section]
\newtheorem{cor}[thm]{Corollary}
\newtheorem{lem}[thm]{Lemma}
\theoremstyle{definition}
\newtheorem{defin}{Definition}[section]
\theoremstyle{remark}
\newtheorem*{remark}{Remark}
\author{Andr\'es E. Ace\~na\\
{\small Max Planck Institute for Gravitational Physics,}\\
{\small Am M\"uhlenberg 1, D-14476 Golm, Germany}\\
{\small E-mail: acena@aei.mpg.de}}
\date{}
\title{Convergent null data expansions at space-like infinity of stationary vacuum solutions}
\begin{document}
\maketitle
\begin{abstract}
We present a characterization of the asymptotics of all asymptotically flat stationary vacuum solutions of Einstein's field equations. This characterization is given in terms of two sequences of symmetric trace free tensors (we call them  the `null data'), which determine a formal expansion of the solution, and which are in a one to one correspondence to Hansen's multipoles. We obtain necessary and sufficient growth estimates on the null data to define  an absolutely convergent series in a neighbourhood of spatial infinity. This provides a complete characterization of all asymptotically flat stationary vacuum solutions to the field equations.
\end{abstract}

\section{Introduction}
A stationary vacuum spacetime is given by $(\tilde{M},\tilde{g}_{\mu\nu},\xi^{\mu})$, where $\tilde{M}$ is a four-dimensional manifold, $\tilde{g}_{\mu\nu}$ is a Lorentzian metric with signature $(+---)$ that satisfy Einstein's vacuum equations (i.e. $Ric[\tilde{g}]=0$), and $\xi^{\mu}$ is a time-like Killing vector field with complete orbits. The metric can be written locally as
\begin{equation}\label{stationaryMetric}
\tilde{g}=V(dt+\gamma_ad\tilde{x}^a)^2+V^{-1}\tilde{h}_{ab}d\tilde{x}^ad\tilde{x}^b,\,\,\,\,\,a,b=1,2,3,
\end{equation}
where $V$, $\gamma_a$ and $\tilde{h}_{ab}$ depend only on the spatial coordinates $\tilde{x}^a$. As shown by Geroch \cite{Geroch71} the description of this spacetime can be done in terms of fields defined in an abstract three-dimensional manifold $\tilde{N}$ defined as  the quotient space of $\tilde{M}$ with respect to the trajectories of $\xi^{\mu}$. The fields $V$, $\gamma_a$, $\tilde{h}_{ab}$
on $\tilde{M}$ can be obtained as pull-backs of fields on $\tilde{N}$ under the projection map. The latter will be
denoted by the same symbols, being $\tilde{h}_{ab}$ the negative definite metric on $\tilde{N}$. In the following we shall only work
on $\tilde{N}$.\\
The vacuum Einstein's field equations in $\tilde{M}$ imply that on $\tilde{N}$ the quantity
\begin{equation*}
\omega_a=-V^2\tilde{\epsilon}_{abc}\tilde{D}^b\gamma^c
\end{equation*}
is curl-free, i.e.
\begin{equation*}
\tilde{D}_{[a}\omega_{b]}=0,
\end{equation*}
where $\tilde{D}$ is the covatiant derivative with respect to $\tilde{h}_{ab}$ and $\tilde{\epsilon}_{abc}=\tilde{\epsilon}_{[abc]}$, $\tilde{\epsilon}_{123}=|\det \tilde{h}_{ab}|^\frac{1}{2}$. We are interested in the asymptotics of the space-time at spatial infinity, so it will be assumed that $\tilde{N}$ is diffeomorphic to the complement of a closed ball $\bar{B}_R(0)$ in $\mathbb{R}^3$. Thus $\tilde{N}$ is simply connected and hence there exists a scalar field $\omega$ such that
\begin{equation*}
\tilde{D}_a\omega=\omega_a.
\end{equation*}
Instead of working with $V$ and $\omega$ it is convenient to use the combinations
\begin{equation*}
\tilde{\phi}_M=\frac{V^2+\omega^2-1}{4V},
\end{equation*}
\begin{equation*}
\tilde{\phi}_S=\frac{\omega}{2V},
\end{equation*}
introduced by Hansen \cite{Hansen74}. Einstein's vacuum field equations in this setting are equivalent to
\begin{equation}\label{stationaryEq1}
\Delta_{\tilde{h}}\tilde{\phi}_{A}=2R[\tilde{h}]\tilde{\phi}_{A},\;\;\;\;\mbox{\footnotesize{$A=M,S,$}}
\end{equation}
\begin{equation}\label{stationaryEq2}
R_{ab}[\tilde{h}]=2[(\tilde{D}_{a}\tilde{\phi}_{M})(\tilde{D}_{b}\tilde{\phi}_{M})+(\tilde{D}_{a}\tilde{\phi}_{S})(\tilde{D}_{b}\tilde{\phi}_{S})-(\tilde{D}_{a}\tilde{\phi}_{K})(\tilde{D}_{b}\tilde{\phi}_{K})],
\end{equation}
where $\tilde{\phi}_K=\left(\tfrac{1}{4}+\tilde{\phi}_M^2+\tilde{\phi}_S^2\right)^\frac{1}{2}$. Equations \eqref{stationaryEq1}, \eqref{stationaryEq2} will be referred to as the \emph{stationary vacuum field equations}. Having $(\tilde{M},\tilde{g}_{\mu\nu},\xi^{\mu})$ is equivalent to having $(\tilde{N},\tilde{h}_{ab},\tilde{\phi}_M,\tilde{\phi}_S)$. We are looking for solutions of \eqref{stationaryEq1} and \eqref{stationaryEq2}.\\
The asymptotic flatness condition is usually stated by assuming $(\tilde{N},\tilde{h}_{ab})$ to admit a smooth conformal extension in the following way: there exist a smooth Riemannian manifold $(N,h_{ab})$ and a function $\Omega\in C^2(N)\cap C^{\infty}(\tilde{N})$ such that $N=\tilde{N}\cup\{i\}$, where $i$ is a single point,
\begin{equation*}
\Omega>0\mbox{  on  }\tilde{N},
\end{equation*}
\begin{equation*}
h_{ab}=\Omega^2\tilde{h}_{ab}\mbox{ on }\tilde{N},
\end{equation*}
\begin{equation}\label{condOmega}
\Omega|_i=0,\;\;D_{a}\Omega|_i=0,\;\;D_{a}D_{b}\Omega|_i=2h_{ab}|_i,
\end{equation}
where $D$ is the covariant derivative operator defined by $h$. This makes $N$ diffeomorphic to an open ball in $\mathbb{R}^3$, with center at the point $i$, which represents space-like infinity. From now on we assume $\tilde{N}$ to be asymptotically flat in the stated sense.\\
Considering that $\tilde{N}$ is diffeomorphic to the complement of a closed ball $\bar{B}_R(0)$ in $\mathbb{R}^3$ is natural in the present context. It corresponds to the idea of an isolated system, where the material sources are confined to a bounded region outside of which is vacuum. Lichnerowicz \cite{Lichnerowicz52} has shown that if $\tilde{N}$ is diffeomorphic to $\mathbb{R}^3$ then $\tilde{N}$ is flat.\\
Reula \cite{Reula89} has shown existence and uniqueness of asymptotically flat solutions to \eqref{stationaryEq1}, \eqref{stationaryEq2}, in terms of a boundary value problem, when data are prescribed on the sphere $\partial\tilde{N}$.\\
In order to be able to control the precise asymptotic behaviour of the spacetime it would be convenient to have a complete description of the asymptotically flat stationary vacuum solutions in terms of asymptotic quantities. Candidates for this task are Hansen's multipoles \cite{Hansen74}. With the previous assumptions Hansen proposes a definition of multipoles, which extends Geroch's definition of multipoles for asymptotically flat static space-times \cite{Geroch70} to the stationary case. He defines the conformal potentials
\begin{equation}\label{conformalPotentials}
\phi_{A}=\Omega^{-\frac{1}{2}}\tilde{\phi}_{A},\,\,\,\mbox{\footnotesize{$A=M,S,$}}
\end{equation}
and two sequences of tensor fields near $i$ through
\begin{eqnarray}
\label{multipoleField1} P^{A}=\phi_{A},\,\,\,P^{A}_a=D_aP^A,\,\,\,P^{A}_{a_2a_1}={\cal C}\left(D_{a_2}P^A_{a_1}-\tfrac{1}{2}P^AR_{a_2a_1}\right),\\
\label{multipoleField2} P^{A}_{a_{s+1}...a_{1}}={\cal C}\left[D_{a_{s+1}}P^{A}_{a_{s}...a_{1}}-\tfrac{1}{2}s(2s-1)P^{A}_{a_{s+1}...a_{3}}R_{a_{2}a_{1}}\right],\,\,\,\mbox{\footnotesize{$A=M,S,$}}
\end{eqnarray}
where $R_{ab}$ is the Ricci tensor of $h_{ab}$ and ${\cal C}$ is the projector onto the symmetric trace free part of the respective tensor fields. The multipole moments are then defined as the tensors
\begin{equation}\label{multipoles}
\nu^A=P^A(i),\,\,\,\nu^A_{a_p...a_1}=P^A_{a_p...a_1}(i),\,\,\,\mbox{\footnotesize{$A=M,S,$}}\,\,p=1,2,3,...
\end{equation}
Keeping aside the monopoles, $\nu^A$, we will denote the two sequences of remaining multipoles by
\begin{equation*}
{\cal D}^A_{mp}=\{\nu^A_{a_1},\nu^A_{a_2a_1},\nu^A_{a_3a_2a_1},...\},\,\,\,\mbox{\footnotesize{$A=M,S$}}.
\end{equation*}
The multipole moments are proposed as a way to characterize solutions of \eqref{stationaryEq1}, \eqref{stationaryEq2}. So a natural question is to what extent do the multipoles determine the metric $h$ and the potentials $\phi_M$, $\phi_S$. For this to be the case the metric and the potentials should be real analytic even at $i$ in suitable coordinates and conformal rescaling. Beig and Simon \cite{BeigSimon81} and Kundu \cite{Kundu81} have shown that the metric and the potentials do extend in a suitable gauge as real analytic fields to $i$ if it is assumed that
\begin{equation*}
(\nu^M)^2+(\nu^S)^2\neq 0.
\end{equation*}
As explained in \cite{Simon80} (cf. also \cite{BeigSimon83}), in order for a solution of \eqref{stationaryEq1}, \eqref{stationaryEq2} to lead to an asymptotically flat space-time $\tilde{M}$ it is necessary that $\nu^S=0$. So, we assume from now on that
\begin{equation}\label{notZeroCondition}
\nu^M\neq 0,\,\,\,\,\,\nu^S=0.
\end{equation}
In \cite{BeigSimon81} and \cite{Kundu81} it is also shown that for given multipoles there is a unique formal expansion of a `formal solution' to the stationary field equations, but it is not touched upon the convergence of the expansion.\\
B\"ackdahl and Herberthson \cite{BackdahlHerberthson06} have found, assuming a given asymptotically flat solution of the stationary field equations, necessary bounds on the multipoles.\\
The question that remains open is under which conditions a pair of sequences, taken as the multipoles, do indeed determine a convergent expansion of a stationary solution. This question has been studied for the axisymmetric case by B\"ackdahl \cite{Backdahl07}. In the static case, where there is only one sequence of multipoles, Friedrich \cite{Friedrich07} has used as data a sequence of trace-free symmetric tensors, different but related to the multipoles. He has shown that imposing certain types of estimates on the data he prescribes is necessary and sufficient for the existence of asymptotically flat static space-times. However, so far the question for the general case has never been answered.\\
The purpose of this work is to derive, under the assumption \eqref{notZeroCondition}, necessary and sufficient conditions for certain minimal sets of asymptotic data, different to the multipoles, denoted collectively by ${\cal D}_n^\phi$, ${\cal D}_n^S$, and referred to as \emph{null data}, to determine (unique) real analytic solutions of \eqref{stationaryEq1} and \eqref{stationaryEq2} and thus to provide a complete characterization of all possible asymptotically flat solutions to the stationary vacuum field equations.\\
In the following we shall work in terms of the conformally rescalled fields, the conformal factor will be specified in more detail later
on.\\
For the same reasons that justify $\tilde{N}$ to be considered diffeomorphic to the complement of a closed ball in $\mathbb{R}^3$, we shall treat the case in which $N$ may comprise a small neighbourhood of the point $i$, without worring about the behaviour of the solution in the large (note that in terms of $\tilde{h}$ a neighbourhood of $i$ cover an infinite domain extending to space-like infinity). This work generalizes the work by Friedrich \cite{Friedrich07} from the static to the stationary case in a way discussed later on this section.\\
The multipoles are defined for any conformal gauge, but for our analysis it is convenient to remove the conformal gauge freedom and use, following Beig and Simon \cite{BeigSimon81},
\begin{equation}\label{conformalFactor}
\Omega=\frac{1}{2}m^{-2}\left[\left(1+4\tilde{\phi}_M^2+4\tilde{\phi}_S^2\right)^\frac{1}{2}-1\right].
\end{equation}
With this conformal factor they derive fall-off conditions and then show that under some assumptions the rescaled metric can be extended on a suitable neighbourhood of space-like infinity and in suitable coordinates as a real analytic metric at $i$. The potentials $\phi_M$ and $\phi_S$ are then also real analytic at $i$, so that the multipoles are well defined. Using this gauge, and taking into account that the angular momentum monopole vanish, we get
\begin{equation*}
\nu^M=m,\,\,\,\nu^M_a=0.
\end{equation*}
Instead of using the multipoles sequences, it will be convenient for our analysis to use, in the given gauge, the following two sequences
\begin{equation*}
{\cal D}^{\phi}_n=\{{\cal C}(D_{a_1}\phi)(i),{\cal C}(D_{a_2}D_{a_1}\phi)(i),{\cal C}(D_{a_3}D_{a_2}D_{a_1}\phi)(i),...\},
\end{equation*}
\begin{equation}\label{nullDataSgen}
{\cal D}^S_n=\{S_{a_2a_1}(i),{\cal C}(D_{a_3}S_{a_2a_1})(i),{\cal C}(D_{a_4}D_{a_3}S_{a_2a_1})(i),...\},
\end{equation}
where $\phi=\phi_S$ and $S_{ab}$ is the trace free part of the Ricci tensor of $h$.\\
We express now the tensors in ${\cal D}^{\phi}_n,{\cal D}^S_n$ in terms of an $h$-orthonormal frame field $c_{\bf a},\,{\bf a}=1,2,3$, near $i$, which is $h$-parallelly propagated along the geodesics through $i$, denoting by $D_{\bf a}$ the covariant derivative in the direction of $c_{\bf a}$, and write
\begin{equation}\label{nullDataPhi}
{\cal D}^{\phi*}_n=\{{\cal C}(D_{{\bf a}_1}\phi)(i),{\cal C}(D_{{\bf a}_2}D_{{\bf a}_1}\phi)(i),{\cal C}(D_{{\bf a}_3}D_{{\bf a}_2}D_{{\bf a}_1}\phi)(i),...\},
\end{equation}
\begin{equation}\label{nullDataS}
{\cal D}^{S*}_n=\{S_{{\bf a}_2{\bf a}_1}(i),{\cal C}(D_{{\bf a}_3}S_{{\bf a}_2{\bf a}_1})(i),{\cal C}(D_{{\bf a}_4}D_{{\bf a}_3}S_{{\bf a}_2{\bf a}_1})(i),...\}.
\end{equation}
These tensors are defined uniquely up to rigid rotations in $\mathbb{R}^3$. These two series will be referrred to as the \emph{null data of $h$ in the frame $c_{\bf a}$}.\\
For a real analytic metric $h$ near $i$ there exist constants $M,r>0$ such that the components of these tensors satisfy the estimates
\begin{equation*}
|{\cal C}(D_{{\bf a}_p}...D_{{\bf a}_1}\phi)(i)|\leq \frac{Mp!}{r^p},\,\,\,{\bf a}_p,...,{\bf a}_1=1,2,3,\,\,\,p=0,1,2,...,
\end{equation*}
\begin{equation}\label{estimatesDerS}
|{\cal C}(D_{{\bf a}_p}...D_{{\bf a}_1}S_{{\bf b}{\bf c}})(i)|\leq \frac{Mp!}{r^p},\,\,\,{\bf a}_p,...,{\bf a}_1,{\bf b},{\bf c}=1,2,3,\,\,\,p=0,1,2,....
\end{equation}
Althoug these estimates have similar form to Cauchy estimates they are not the same, the difference being that here the estimates are on the symmetric trace free part of the derivatives instead of being directly on the derivatives. These estimates are derived from Cauchy estimates in Section \ref{exactSets}. Remarkably, the converse is also true. This consitutes our main result, given in the following theorem.\\
\begin{thm}\label{mainThm}
Suppose $m\neq 0$ and
\begin{equation}\label{abstractNullDataPhi}
\hat{{\cal D}}^{\phi}_n=\{\psi_{{\bf a}_1},\psi_{{\bf a}_2{\bf a}_1},\psi_{{\bf a}_3{\bf a}_2{\bf a}_1},...\},
\end{equation}
\begin{equation}\label{abstractNullDataS}
\hat{{\cal D}}^S_n=\{\Psi_{{\bf a}_2{\bf a}_1},\Psi_{{\bf a}_3{\bf a}_2{\bf a}_1},\Psi_{{\bf a}_4{\bf a}_3{\bf a}_2{\bf a}_1},...\},
\end{equation}
are two infinite sequences of symmetric, trace free tensors given in an orthonormal frame at the origin of a 3-dimensional Euclidean space. If there exist constants $M,r>0$ such that the components of these tensors satisfy the estimates
\begin{equation*}
|\psi_{{\bf a}_p...{\bf a}_2{\bf a}_1}|\leq \frac{Mp!}{r^p},\,\,\,{\bf a}_p,...,{\bf a}_1=1,2,3,\,\,\,p=1,2,...,
\end{equation*}
\begin{equation*}
|\Psi_{{\bf a}_p...{\bf a}_2{\bf a}_1{\bf b}{\bf c}}|\leq \frac{Mp!}{r^p},\,\,\,{\bf a}_p,...,{\bf a}_1,{\bf b},{\bf c}=1,2,3,\,\,\,p=0,1,2,...,
\end{equation*}
then there exists an analytic, asymptotically flat, stationary vacuum solution $(\tilde{h},\tilde{\phi}_M,\tilde{\phi}_S)$ with mass monopole $m$ and zero angular momentum monopole, unique up to isometries, so that the null data implied by $h=\tfrac{1}{4}m^{-4}[(1+4\tilde{\phi}_M^2+4\tilde{\phi}_S^2)^\frac{1}{2}-1]^2\tilde{h}$ and $\phi_{S}=2^\frac{1}{2}m[(1+4\tilde{\phi}_M^2+4\tilde{\phi}_S^2)^\frac{1}{2}-1]^{-\frac{1}{2}}\tilde{\phi}_S$ in a suitable frame $c_{\bf a}$ as described above satisfy
\begin{eqnarray*}
{\cal C}(D_{{\bf a}_q}...D_{{\bf a}_1}\phi_S)(i)=\psi_{{\bf a}_q...{\bf a}_1},\,\,\,{\bf a}_q,...,{\bf a}_1=1,2,3,\,\,\,q=1,2,...,
\end{eqnarray*}
\begin{eqnarray*}
{\cal C}(D_{{\bf a}_q}...D_{{\bf a}_3}S_{{\bf a}_2{\bf a}_1})(i)=\Psi_{{\bf a}_q...{\bf a}_1},\,\,\,{\bf a}_q,...,{\bf a}_1=1,2,3,\,\,\,q=2,3,... .
\end{eqnarray*}
\end{thm}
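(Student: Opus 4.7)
The plan is to build a formal Taylor expansion for $h_{ab}$, $\phi_M$, $\phi_S$ at $i$ directly from the abstract null data, and then to prove its convergence by a majorant argument, in the spirit of Friedrich's treatment of the static case \cite{Friedrich07}. I would first fix the gauge: use $h$-normal coordinates centred at $i$ adapted to an $h$-parallelly propagated orthonormal frame $c_{\bf a}$, which reduces the information at each order to the ${\cal C}$-parts of iterated frame derivatives of $\phi_M$, $\phi_S$ and of the trace-free Ricci tensor $S_{{\bf a}{\bf b}}$. The hypotheses $m\neq 0$, $\nu^S=0$ and the Beig--Simon conformal gauge \eqref{conformalFactor} then fix $\phi_M(i)=m$, $\phi_S(i)=0$, $D_{\bf a}\phi_M(i)=0$ and $h_{{\bf a}{\bf b}}(i)=\delta_{{\bf a}{\bf b}}$; the first datum $\psi_{\bf a}$ of $\hat{{\cal D}}^{\phi}_n$ supplies $D_{\bf a}\phi_S(i)$.

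The central algebraic step is to verify that the stationary vacuum system \eqref{stationaryEq1}--\eqref{stationaryEq2}, combined with the identity defining $\tilde\phi_K$, the gauge \eqref{conformalFactor} and the Bianchi identities, constitutes an exact set in Penrose's sense. Concretely, at order $p$ the ${\cal C}$-parts of $D^{p}\phi_S$ and of $D^{p-2}S_{{\bf b}{\bf c}}$ are provided by $\hat{{\cal D}}^{\phi}_n$ and $\hat{{\cal D}}^S_n$; the remaining trace parts and the corresponding iterated derivatives of $\phi_M$ and $h$ are expressed, via \eqref{stationaryEq1}--\eqref{stationaryEq2} and commutator identities, as polynomial expressions in strictly lower-order Taylor coefficients that have already been determined. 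This yields a unique recursion producing every Taylor coefficient of $h_{ab}$, $\phi_M$ and $\phi_S$ in terms of $m$ and the prescribed data, and hence a unique formal power series in normal coordinates.

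Convergence of this formal series is where most of the work lies. Given the assumed bounds $Mp!/r^p$ on the null data, I would introduce a norm on formal series of symmetric trace-free tensors on Euclidean $3$-space that behaves well under tensor multiplication and under commuting $D$ with the projector ${\cal C}$, and lift the exact-set recursion to a majorant recursion generalising the one used by Friedrich \cite{Friedrich07}. If the majorant closes on a sequence of the form $M'p!/r'^{p}$ with some $r'>0$, the Taylor series of $h_{ab}$, $\phi_M$, $\phi_S$ converges absolutely on $\{|x|<r'\}$. The resulting analytic fields solve \eqref{stationaryEq1}--\eqref{stationaryEq2} because these equations hold at every order of the formal expansion; inverting \eqref{conformalFactor} on $N\setminus\{i\}$ then produces an asymptotically flat stationary vacuum solution with the prescribed mass $m$, vanishing angular momentum monopole and prescribed null data. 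Uniqueness up to rigid rotations of $c_{\bf a}$, i.e.\ up to isometries, is automatic because the gauge together with the null data fixes every Taylor coefficient.

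The main obstacle is precisely the closure of the majorant. Two features make it substantially harder than in the static case: the coupling of the two potentials through the nonpolynomial quantity $\tilde\phi_K=(\tfrac14+\tilde\phi_M^2+\tilde\phi_S^2)^{1/2}$ appearing in the Ricci equation \eqref{stationaryEq2}, and the combinatorial penalty incurred each time one reassembles a full symmetric tensor from its ${\cal C}$-part and its traces inside a product arising in the recursion. Designing a norm on formal symmetric trace-free series that absorbs both effects, and checking that the stationary exact-set recursion is contractive with respect to it, is the step I expect to require the most care and to occupy the technical core of the paper.
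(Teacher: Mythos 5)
Your outline reproduces the paper's first step (the exact-set argument of Section \ref{exactSets}, Lemma \ref{formalExpansion}: the abstract null data determine a unique formal normal expansion, and uniqueness up to rotations of $c_{\bf a}$ follows), but the step that carries the entire weight of the theorem --- convergence --- is left as a conditional (``if the majorant closes \dots''), and the route you propose for it is precisely the one the paper rejects as unworkable. The difficulty is not merely that the majorant is delicate: in the normal-coordinate recursion, every contraction inside the decomposition \eqref{symmetricCont} of $D_{A_pB_p}\dots D_{A_1B_1}\phi$ or $D_{A_pB_p}\dots D_{A_1B_1}S_{CDEF}$ forces a chain of derivative interchanges, each producing curvature terms, followed by a substitution from the field equations, each producing products of lower-order coefficients with rational coefficients in $\Omega$ and $\phi$. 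The number and the size of these terms grow in a way that is not controlled by any natural norm on symmetric trace-free series, and no contractive majorant recursion of the kind you describe is known to close. This is stated explicitly at the start of Section \ref{characteristicProblem}. Note also that your appeal to Friedrich is misplaced: \cite{Friedrich07} does \emph{not} prove convergence by a majorant on the normal expansion either; he already uses the device the present paper adapts.

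What the paper actually does is change the analytic setting entirely. The fields are extended holomorphically, the data are repackaged as two holomorphic functions $\phi(u,v)$, $S_0(u,v)$ on the complex null cone ${\cal N}_i$ at $i$ (Section \ref{characteristicProblem}), and the conformal field equations are written in a gauge adapted to ${\cal N}_i$, where a chosen subsystem becomes a hierarchy of transport equations (the $\partial_u$- and $\partial_w$-equations) of Cauchy--Kowalevskaya type. The estimates \eqref{convergenceEstimates} then close by induction along this hierarchy (Lemma \ref{mainEstimates}). Two further blocks, absent from your proposal, are indispensable: one must show that the solution of the chosen subsystem satisfies the full overdetermined system (the subsidiary equations of Section \ref{completeSet}), and one must undo the singular gauge --- the null-cone coordinates miss the singular generator, so the solution is recovered on a full punctured neighbourhood of $i$ only by patching together rotated gauges $c^t_{AB}$ and invoking the removability of isolated singularities of holomorphic functions in several variables (Section \ref{analyticity}). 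Unless you can actually exhibit a norm for which the normal-coordinate recursion is contractive --- which would be a genuinely new and simpler argument --- the proposal as it stands does not prove the theorem.
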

Two sequences of data of the form \eqref{abstractNullDataPhi}, \eqref{abstractNullDataS}, not necessarily satisfying any estimates, will be referred to as \emph{abstract null data}.\\
The type of estimates imposed here on the abstract null data does not depend on the orthonormal frame in which they are given. Since these estimates are necessary as well as sufficient, all possible asymptotically flat solutions of the stationary vacuum field equations are characterized by the null data.\\
In relation with the works of Corvino and Schoen \cite{CorvinoSchoen06} and Chru\'sciel and Delay \cite{ChruscielDelay03}, as they need a family of asymptotically flat stationary solutions to perform the gluing procedure, this result gives a complete survey of the possible stationary asymptotics that can be attained, beyond the known exact solutions.\\
As both the multipoles and the null data determine the metric and the potentials then there is a bijective map between them.  Thus the sets ${\cal D}_n^\phi$, ${\cal D}_n^S$ and ${\cal D}_{mp}^M$, ${\cal D}_{mp}^S$ contain the same information. We prefer to work with the null data because the expressions are linear in $\phi$ and $S_{ab}$.\\
This work contains the static case as a special case. Starting from \eqref{stationaryMetric} the static case can be attained by making $\gamma_a=0$, which gives $\omega=0$, $\tilde{\phi}_S=0$ and $\phi_S=0$. This implies that all tensors in ${\cal D}_n^\phi$ are zero. Conversely, if all tensors in ${\cal D}_n^\phi$ are zero then all tensors in ${\cal D}_{mp}^S$ are zero and by Xanthopoulos' work \cite{Xanthopoulos79} the space-time is static. So we are left with ${\cal D}_{n}^S$ as the free data in the static case.\\
Friedrich \cite{Friedrich07} has given the same result for the static case using a different conformal metric. Let us assume for now that we are in the static case, then Friedrich uses a metric $\breve{h}$, wich is conformally related to our metric $h$ by
\begin{equation}\label{metricFriedrich}
\breve{h}=\breve{\Omega}^2h,
\end{equation}
where
\begin{equation*}
\breve{\Omega}=\frac{4\left[(1+m^2\Omega)^\frac{1}{2}+m\Omega^\frac{1}{2}\right]}{\left[(1+m^2\Omega)^\frac{1}{2}+m\Omega^\frac{1}{2}+1\right]^2}.
\end{equation*}
Using $\breve{h}$ he defines a sequence of symmetric trace-free tensors $\breve{\cal D}_n$ in the same way as we defined ${\cal D}_{n}^S$ \eqref{nullDataSgen}. He shows that impossing estimates of the type \eqref{estimatesDerS} on the tensors in $\breve{\cal D}_n$ is necessary and sufficient for the existence of an asymptotically flat static vacuum solution of the Einstein's equations. To see that this result is equivalent to our result in the static case, we have to show that having estimates of the type \eqref{estimatesDerS} on the tensors in ${\cal D}_{n}^S$ imply estimates of the same type on the tensors in $\breve{\cal D}_n$ and vice versa. This is done through the theorem \ref{mainThm} and the relation \eqref{metricFriedrich}. If the tensors in ${\cal D}_{n}^S$ satisfy estimates of the type \eqref{estimatesDerS} then there exist $h$ and $\Omega$ analytic, and then $\breve{h}$ given by \eqref{metricFriedrich} is also analytic, thus the tensors in $\breve{\cal D}_n$ satisfy estimates of the type \eqref{estimatesDerS}, the converse is shown in the same way using Friedrich's result. Hence this work generalizes the work by Friedrich \cite{Friedrich07} from the static to the stationary case. The procedure that we use in the present work follows similar steps and several of the technics in \cite{Friedrich07} will be used.  For completeness we include them.\\
Theorem \ref{mainThm} will be proven in terms of the conformal metric $h$. Thus we shall express in Section \ref{conformalSetting} the stationary vacuum field equations as \emph{`conformal stationary vacuum field equations'}. In Section \ref{exactSets} we show, by going to space-spinor formalism, that the abstract null data indeed determine the expansion coefficients of a formal expansion of a solution to the conformal stationary vacuum field equations. Showing convergence in this way appears difficult, and for this reason the problem is cast in a certain setting, which is necessarily singular at a certain subset of the manifold, as a characteristic initial value problem in Section \ref{characteristicProblem}. In Section \ref{sectionConformalEquations} it is shown how to determine a formal solution to a subset of the confomal field equations from a given set of abstract null data. Then, in Section \ref{convergence}, the convegence of the obtained series is shown. In Section \ref{completeSet} it is shown that the obtained solution satisfy the full set of conformal field equations. Finally, in Section \ref{analyticity}, the convergence result is translated into a gauge which is regular near $i$, allowing us to prove Theorem \ref{mainThm}.

\section{The stationary field equations in the conformal setting}\label{conformalSetting}
The existence problem will be analyzed completely in terms of the conformally rescaled metric $h$, so we need to express the stationary field equations in terms of the conformal fields. If we directly transform the fields in \eqref{stationaryEq1} and \eqref{stationaryEq2} we arrive at a system of equations that is singular at $i$. To overcome this problem we follow the work of Beig and Simon \cite{BeigSimon81}. Using \eqref{conformalFactor} as the conformal factor, where by a constant conformal rescaling it can allways be achieved $m=1$ and for simplicity we use this scale from now, and standard formulae for conformal transformations they manipulate the stationary field equations, arriving at the following equivalent system of equations:
\begin{equation}\label{Omega}
\Omega:=\phi_{M}^2+\phi_{S}^2-1,
\end{equation}
\begin{equation}\label{pi}
\pi_{ab}:=D_{a}\phi_{M}D_{b}\phi_{M}+D_{a}\phi_{S}D_{b}\phi_{S},
\end{equation}
\begin{equation*}
\Delta\phi_{A}=-\frac{1}{2}\left[R-\frac{5}{2}D_a\Omega D^a\Omega+10(1+\Omega)\pi_a\,^a\right]\phi_{A},\,\,\,\mbox{\scriptsize{A=M,S}},
\end{equation*}
\begin{eqnarray*}
D_{a}D_{b}\Omega & = & -\Omega R_{ab}-\frac{1}{3}h_{ab}R+\left(\Omega+\frac{2}{3}\right)h_{ab}D_c\Omega D^c\Omega\\
&&-4\left(\Omega+\frac{2}{3}\right)(\Omega+1)h_{ab}\pi_c\,^c-\frac{1}{2}(\Omega-1)D_{a}\Omega D_{b}\Omega+2\Omega^{2}\pi_{ab},
\end{eqnarray*}
\begin{eqnarray*}
D_{a}R & = & 7D^{b}\Omega D_{a}D_{b}\Omega+3R_{ab}D^{b}\Omega+4(3\Omega-2)\pi_b\,^bD_{a}\Omega\\
&&-\frac{3}{2}D_b\Omega D^b\Omega D_{a}\Omega-6\Omega\pi_{ab}D^{b}\Omega-2(7\Omega+4)D_{a}\pi_b\,^b,
\end{eqnarray*}
\begin{eqnarray*}
D_{[c}R_{b]a} & = & 2(3\Omega-1)\pi_d\,^d h_{a[b}D_{c]}\Omega-h_{a[b}D_{c]}\Omega D_d\Omega D^d\Omega\\
&&-2(\Omega-1)h_{a[b}\pi_{c]d}D^{d}\Omega-2(2\Omega+1)h_{a[b}D_{c]}\pi_d\,^d\\
&&+2h_{a[b}D_{c]}D_{d}\Omega D^{d}\Omega+\frac{1}{2}D_{[c}\Omega D_{b]}D_{a}\Omega-(\Omega-4)\pi_{a[b}D_{c]}\Omega\\
&&+2\Omega D_{[c}\pi_{b]a}+\frac{1}{2}R_{a[b} D_{c]}\Omega+h_{a[b}R_{c]d} D^{d}\Omega.
\end{eqnarray*}
These equations are regular even at $i$. They form a quasi-linear, overdetermined system of PDE's which implies, by applying formal derivatives to some of the equations, elliptic equations for all unknowns in a suitable gauge. Considering the fall-off conditions on the fields Beig and Simon \cite{BeigSimon81} deduced a certain smoothness of the conformal fields at $i$. Invoking a general theorem of Morrey on elliptic systems of this type they concluded that the solutions are in fact real analytic at $i$. Later Kennefick and O'Murchadha \cite{KennefickO'Murchadha95} showed that the fall-off conditions are reasonable, as they are implied by the space-time being asymptotically flat. To avoid introducing additional constraints by taking derivatives, we shall deal with the system as it is.\\
For our pourposes it is convenient to make some changes to this system. We separate the Ricci tensor into its trace free part and the Ricci scalar,
\begin{equation*}
R_{ab}=S_{ab}+\frac{1}{3}h_{ab}R.
\end{equation*}
We also get rid of $\pi_{ab}$ by using \eqref{pi} in the other equations. From \eqref{Omega} we see that $\Omega$, $\phi_M$ and $\phi_S$ are not independent, we use this equation to get rid of $\phi_M$ in the other equations. With these changes and the change of notation $\phi_S\rightarrow\phi$ the system of equations takes the form
\begin{eqnarray}\label{D2phiT}
\Delta\phi & = & -\phi\left\{\frac{1}{2}R+\frac{5}{1+\Omega-\phi^2}\left[\frac{1}{4}\phi^2D^a\Omega D_a\Omega\right.\right.\\
\nonumber&& \left.\left.\frac{}{}-(1+\Omega)\phi D^a\Omega D_a\phi+(1+\Omega)^2D^a\phi D_a\phi\right]\right\},
\end{eqnarray}
\begin{eqnarray}\label{DDOmegaT}
D_{a}D_{b}\Omega &= & -\Omega S_{ab}-\frac{1}{3}(1+\Omega)h_{ab}R\\
\nonumber&& +\frac{1}{1+\Omega-\phi^2}\left\{\frac{1}{2}\left[1+(-1+\Omega)\phi^2\right]D_{a}\Omega D_{b}\Omega\right.\\
\nonumber&& -\frac{1}{3}(2+3\Omega)\phi^2 h_{ab}D^c\Omega D_c\Omega-2\Omega^2 \phi D_{(a}\Omega D_{b)}\phi\\
\nonumber&& +\frac{4}{3}(1+\Omega)(2+3\Omega)\phi h_{ab}D^c\Omega D_c\phi+2\Omega^2 (1+\Omega)D_{a}\phi D_{b}\phi\\
\nonumber&& \left.-\frac{4}{3}(1+\Omega)^2(2+3\Omega)h_{ab}D^c\phi D_c\phi\right\},
\end{eqnarray}
\begin{eqnarray}\label{DRT}
&& D_{a}R\\
\nonumber&& =\frac{1}{1+\Omega-\phi^2}\left\{\frac{}{}2(4+7\Omega)\phi D^{b}\Omega D_{b}D_{a}\phi-4(1+\Omega)(4+7\Omega)D^{b}\phi D_{b}D_{a}\phi\right.\\
\nonumber&& +\left[3+(-3+7\Omega)\phi^2\right]D^{b}\Omega S_{ba}-2\Omega(4+7\Omega)\phi D^{b}\phi S_{ba}\\
\nonumber&& \left.+\frac{1}{3}(4+7\Omega)\phi^2 R D_{a}\Omega-\frac{2}{3}(1+\Omega)(4+7\Omega)\phi R D_{a}\phi \right\}\\
\nonumber&& +\frac{1}{3(1+\Omega-\phi^2)^2}\left\{\frac{1}{2}\phi^2 \left[-12+(40+21\Omega)\phi^2\right]D^b\Omega D_b\Omega D_{a}\Omega\right.\\
\nonumber&& -2\phi \left[-18(1+\Omega)+(46+61\Omega+21\Omega^2)\phi^2\right]D^b\Omega D_b\phi D_{a}\Omega\\
\nonumber&& +2(1+\Omega)\left[-24(1+\Omega)+(52+61\Omega+21\Omega^2)\phi^2\right]D^b\phi D_b\phi D_{a}\Omega\\
\nonumber&& -\phi \left[12(1+\Omega)+(16+61\Omega+21\Omega^2)\phi^2\right]D^b\Omega D_b\Omega D_{a}\phi\\
\nonumber&& +4(1+\Omega)\left[6(1+\Omega)+(22+61\Omega+21\Omega^2)\phi^2\right]D^b\Omega D_b\phi D_{a}\phi\\
\nonumber&& \left.\frac{}{}-4(1+\Omega)^2 (28+61\Omega+21\Omega^2)\phi D^b\phi D_b\phi D_{a}\phi \right\},
\end{eqnarray}
\begin{eqnarray}\label{DST}
D_{[c}S_{b]a} & = & \frac{1}{1+\Omega-\phi^2}\left\{\frac{}{}\Omega\phi D_{a}D_{[b}\phi D_{c]}\Omega-2\Omega(1+\Omega)D_{a}D_{[b}\phi D_{c]}\phi\right.\\
\nonumber&& -\frac{2}{3}(1+\Omega)\phi h_{a[b}D_{c]}D_{d}\phi D^{d}\Omega+\frac{4}{3}(1+\Omega)^2 h_{a[b}D_{c]}D_{d}\phi D^{d}\phi\\
\nonumber&& +\frac{1}{2}[1+(-1+\Omega)\phi^2]S_{a[b}D_{c]}\Omega-\Omega^2 \phi S_{a[b}D_{c]}\phi\\
\nonumber&& -\frac{1}{3}\Omega\phi^2 h_{a[b}S_{c]d}D^{d}\Omega+\frac{2}{3}\Omega(1+\Omega)\phi h_{a[b}S_{c]d}D^{d}\phi\\
\nonumber&& +\frac{1}{18}(-2+\Omega)\phi^2 R h_{a[b}D_{c]}\Omega-\frac{1}{9}(-2+\Omega)(1+\Omega)\phi R h_{a[b}D_{c]}\phi\\
\nonumber&& \left.+2\phi D_{a}\Omega D_{[b}\Omega D_{c]}\phi-4(1+\Omega)D_{a}\phi D_{[b}\Omega D_{c]}\phi \right\}\\
\nonumber&& +\frac{1}{9(1+\Omega-\phi^2)^2}\left\{\frac{1}{2}\phi^2 [3+2(-5+3\Omega)\phi^2] h_{a[b}D_{c]}\Omega D^d\Omega D_d\Omega\right.\\
\nonumber&& -\phi [6(1+\Omega)+(-13-4\Omega+6\Omega^2)\phi^2] h_{a[b}D_{c]}\phi D^d\Omega D_d\Omega\\
\nonumber&& -2(-7-4\Omega+6\Omega^2)\phi^3 h_{a[b}D_{c]}\Omega D^d\Omega D_d\phi\\
\nonumber&& +4(1+\Omega)^2 [3+2(-5+3\Omega)\phi^2] h_{a[b}D_{c]}\phi D^d\Omega D_d\phi\\
\nonumber&& +2(1+\Omega)[-3(1+\Omega)+(-4-4\Omega+6\Omega^2)\phi^2] h_{a[b}D_{c]}\Omega D^d\phi D_d\phi\\
\nonumber&& \left.\frac{}{}-4(1+\Omega)^2 (-7-4\Omega+6\Omega^2)\phi h_{a[b}D_{c]}\phi D^d\phi D_d\phi\right\}.
\end{eqnarray}
Besides \eqref{D2phiT}, \eqref{DDOmegaT}, \eqref{DRT}, \eqref{DST} we need an equation for the metric or for the frame field and the connection coefficients. This equation is
\begin{equation}\label{RicH}
R_{ab}[h]=S_{ab}+\frac{1}{3}h_{ab}R,
\end{equation}
where the expression on the left hand side is understood as the Ricci operator acting on the metric $h$.\\
The system of equations \eqref{RicH},\eqref{D2phiT},\eqref{DDOmegaT},\eqref{DRT},\eqref{DST}, together with conditions \eqref{condOmega}, which imply
\begin{equation}\label{initialCondR}
R|_i=-\left(6+8D^a\phi D_a\phi\right)|_i,
\end{equation}
will be referred to as the \emph{conformal stationary vacuum field equations} for the unknown fields
\begin{equation}\label{unknownFields}
h_{ab},\,\phi,\,\Omega,\,R,\,S_{ab}.
\end{equation}

\section{The exact sets of equations argument}\label{exactSets}
To see that it is possible to construct solutions to the conformal stationary vacuum field equations from the null data we study expansions of the conformal fields \eqref{unknownFields} in normal coordinates.\\
We assume from now on $N$ to be small enough to coincide with a convex $h$-normal neighbourhood of $i$. Let $c_{\bf a}$, ${\bf a}=1,2,3$, be an $h$-orthonormal frame field on $N$ which is parallelly transported along the $h$-geodesics through $i$ and let $x^a$ denote normal coordinates centered at $i$ so that $c^b\,_{\bf a}\equiv \langle dx^b,c_{\bf a}\rangle=\delta^b\,_{\bf a}$ at $i$. We refer to such a frame as \emph{normal frame centered at} $i$. Its dual frame will be denoted by $\chi^{\bf c}=\chi^{\bf c}\,_b dx^b$. In the following all tensor fields, except the frame field $c_{\bf a}$ and the coframe field $\chi^{\bf c}$, will be expressed in terms of this frame field, so that the metric is given by $h_{\bf ab}\equiv h(c_{\bf a},c_{\bf b})=-\delta_{\bf ab}$. With $D_{\bf a}\equiv D_{c_{\bf a}}$ denoting the covariant derivative in the $c_{\bf a}$ direction, the connection coefficients with respect to $c_{\bf a}$ are defined by $D_{\bf a}c_{\bf c}=\Gamma_{\bf a}\,^{\bf b}\,_{\bf c}c_{\bf b}$.\\
An analytic tensor field $T_{{\bf a}_1...{\bf a}_k}$ on $N$ has in the normal coordinates $x^a$ a \emph{normal expansion} at $i$, which can be written
\begin{equation}\label{normalExpansion}
T_{{\bf a}_1...{\bf a}_k}(x)=\sum_{p\geq 0}\frac{1}{p!}x^{c_p}...x^{c_1}D_{{\bf c}_p}...D_{{\bf c}_1}T_{{\bf a}_1...{\bf a}_k}(i),
\end{equation}
where we assume from now on that the summation convention does not distinguish between bold face and other indices.\\
Since $h_{\bf ab}=-\delta_{\bf ab}$ it remains to be seen how to obtain normal expansions for
\begin{equation}\label{unknownFieldsNormalCoord}
\phi,\,\Omega,\,R,\,S_{\bf ab},
\end{equation}
using the field equations and the null data.\\
The algebra simplifies considerably in the space-spinor formalism. To do the transition we introduce the constant van der Waerden symbols $\alpha^{AB}\,_a$, $\alpha^a\,_{AB}$, $a=1,2,3$, $A,B=0,1$, which are symmetric in $AB$ and whose components, if readed as matrices, are
\begin{eqnarray*}
& \alpha^{AB}\,_1=\frac{1}{\sqrt{2}}\left(\begin{array}{cc}-1 & 0 \\ 0 & 1 \end{array}\right),\,\alpha^{AB}\,_2=\frac{1}{\sqrt{2}}\left(\begin{array}{cc}-i & 0 \\ 0 & -i \end{array}\right),\,\alpha^{AB}\,_3=\frac{1}{\sqrt{2}}\left(\begin{array}{cc} 0 & 1 \\ 1 & 0 \end{array}\right),\\
& \alpha^1\,_{AB}=\frac{1}{\sqrt{2}}\left(\begin{array}{cc}-1 & 0 \\ 0 & 1 \end{array}\right),\,\alpha^2\,_{AB}=\frac{1}{\sqrt{2}}\left(\begin{array}{cc} i & 0 \\ 0 & i \end{array}\right),\,\alpha^3\,_{AB}=\frac{1}{\sqrt{2}}\left(\begin{array}{cc} 0 & 1 \\ 1 & 0 \end{array}\right).
\end{eqnarray*}
The relation between tensors \emph{given in the frame} $c_{\bf a}$ and space-spinors is made by $T^{{\bf a}_1...{\bf a}_p}\,_{{\bf b}_1...{\bf b}_q}\rightarrow T^{A_1B_1...A_pB_p}\,_{C_1D_1...C_qD_q}$, where
\begin{equation*}
T^{A_1B_1...A_pB_p}\,_{C_1D_1...C_qD_q}=T^{{\bf a}_1...{\bf a}_p}\,_{{\bf b}_1...{\bf b}_q}\alpha^{A_1B_1}\,_{a_1}...\alpha^{b_q}\,_{C_qD_q}.
\end{equation*}
With the summation rule also applying to capital indices we get
\begin{eqnarray*}
\delta^b\,_a=\alpha^b\,_{AB}\alpha^{AB}\,_a,\,\,\,-\delta_{ab}\alpha^a\,_{AB}\alpha^b\,_{CD}=-\epsilon_{A(C}\epsilon_{D)B}\equiv h_{ABCD},\\a,b=1,2,3,\,\,A,B,C,D=0,1,
\end{eqnarray*}
where the constant $\epsilon$-spinor satisfies $\epsilon_{AB}=-\epsilon_{BA}$, $\epsilon_{01}=1$. It is used to move indices according to the rules $\iota_B=\iota^A\epsilon_{AB}$, $\iota^A=\epsilon^{AB}\iota_B$, so that $\epsilon_A\,^B$ corresponds to the Kronecker delta.\\
As the spinors are in general complex, we need a way to sort out those that arise from real tensors. For this we define
\begin{equation*}
\tau^{AA'}=\epsilon_0\,^A\epsilon_0\,^{A'}+\epsilon_1\,^A\epsilon_1\,^{A'}.
\end{equation*}
Primed indices take values 0, 1 and the summation rule also applies to them. A bar denotes complex conjugation and indices acquire a prime under complex conjugation, an exception being $\epsilon_{A'B'}$, the complex conjugate of $\epsilon_{AB}$. We define
\begin{equation*}
\xi^+_{A...H}=\tau_A\,^{A'}..\tau_H\,^{H'}\bar{\xi}_{A'...H'}.
\end{equation*}
Then a space spinor field $T_{A_1B_1...A_pB_p}=T_{(A_1B_1)...(A_pB_p)}$ arises from a real tensor field $T_{{\bf a}_1...{\bf a}_p}$ if and only if
\begin{equation}\label{realityCond}
T_{A_1B_1...A_pB_p}=(-1)^pT^+_{A_1B_1...A_pB_p}.
\end{equation}
Any spinor field $T_{A...H}$ admits a decomposition into products of totally symmetric spinor fields and epsilon spinors which can be written schematically in the form
\begin{equation}\label{symmetricCont}
T_{A...H}=T_{(A...H)}+\sum \epsilon's\,\times\,\mbox{\emph{symmetrized contractions of }}T.
\end{equation}
It will be important that if $T_{A_1B_1...A_pB_p}$ arises from $T_{{\bf a}_1...{\bf a}_p}$ then
\begin{equation*}
T_{(A_1B_1...A_pB_p)}={\cal C}(T_{{\bf a}_1...{\bf a}_p})\alpha^{a_1}\,_{A_1B_1}...\alpha^{a_p}\,_{A_pB_p}.
\end{equation*}
To discuss vector analysis in terms of spinors, a complex frame field and its dual 1-form field are defined by
\begin{equation*}
c_{AB}=\alpha^a\,_{AB}c_{\bf a},\,\,\,\chi^{AB}=\alpha^{AB}\,_a\chi^{\bf a},
\end{equation*}
so that $h(c_{AB},c_{CD})=h_{ABCD}$. From this one sees that $c_{00}$ and $c_{11}$ are null vectors orthogonal to $c_{01}$. The derivative of a function $f$ in the direction of $c_{AB}$ is denoted by $c_{AB}(f)=f_{,a}c^a\,_{AB}$ and the spinor connection coefficients are defined by
\begin{equation*}
\Gamma_{AB}\,^C\,_D=\tfrac{1}{2}\Gamma_{\bf a}\,^{\bf b}\,_{\bf c}\alpha^a\,_{AB}\alpha^{CH}\,_b\alpha^c\,_{DH},\,\,\,\Gamma_{ABCD}=\Gamma_{(AB)(CD)},
\end{equation*}
then the covariant derivative of a spinor field $\iota^A$ is given by
\begin{equation*}
D_{AB}\iota^C=c_{AB}(\iota^C)+\Gamma_{AB}\,^C\,_B\iota^B.
\end{equation*}
If it is required to satisfy the Leibniz rule with respect to tensor products, then covariant derivatives in the $c_{\bf a}$-frame formalism translate under contractions with the van de Waerden symbols into spinor covariant derivatives and vice versa.
We also have
\begin{equation}\label{spinorCommutator}
(D_{CD}D_{EF}-D_{EF}D_{CD})\iota^A=R^A\,_{BCDEF}\iota^B,
\end{equation}
\begin{equation}\label{spinorRicci}
R_{ABCDEF}=\tfrac{1}{2}\left[\left(S_{ABCE}-\tfrac{1}{6}Rh_{ABCE}\right)\epsilon_{DF}+\left(S_{ABDF}-\tfrac{1}{6}Rh_{ABDF}\right)\epsilon_{CE}\right],
\end{equation}
where $R$ is the Ricci scalar of $h$ and $S_{ABCD}=S_{\bf ab}\alpha^a_{AB}\alpha^b_{CD}=S_{(ABCD)}$ represents the trace free part of the Ricci tensor of $h$.\\
Equations \eqref{D2phiT}, \eqref{DST} take in the space-spinor formalism the form
\begin{eqnarray}\label{DDphiSpinor}
D^P\,_BD_{AP}\phi& = & -\tfrac{1}{4}\epsilon_{AB}\phi\left(R+\frac{10}{1+\Omega-\phi^2}\left[\frac{1}{4}\phi^2 D_{PQ}\Omega D^{PQ}\Omega\right.\right.\\
\nonumber&& \left.\left.-\frac{}{}(1+\Omega)\phi D^{PQ}\Omega D_{PQ}\phi+(1+\Omega)^2 D_{PQ}\phi D^{PQ}\phi\right]\right),
\end{eqnarray}
\begin{eqnarray}\label{DSSpinor}
&& D^{P}\,_{A}S_{BCDP}\\
\nonumber&& =\frac{1}{1+\Omega-\phi^2}\left\{\frac{}{}\Omega\phi D_{A}\,^{P}\Omega D_{(BC}D_{D)P}\phi-2\Omega(1+\Omega)D_{A}\,^{P}\phi D_{(BC}D_{D)P}\phi\right.\\
\nonumber&& +(1+\Omega)\phi D_{PQ}\Omega D_{(BC}D^{PQ}\phi\epsilon_{D)A}-2(1+\Omega)^2 D_{PQ}\phi D_{(BC}D^{PQ}\phi\epsilon_{D)A}\\
\nonumber&& +\frac{1}{2}\left[1+(-1+\Omega)\phi^2\right]D_{A}\,^{P}\Omega S_{PBCD}-\Omega^2\phi D_{A}\,^{P}\phi S_{PBCD}\\
\nonumber&& +\frac{1}{2}\Omega\phi^2 D^{PQ}\Omega S_{PQ(BC}\epsilon_{D)A}-\Omega(1+\Omega)\phi D^{PQ}\phi S_{PQ(BC}\epsilon_{D)A}\\
\nonumber&& +\frac{1}{6}(1+\Omega)\phi^2 R D_{(BC}\Omega\epsilon_{D)A}-\frac{1}{3}(1+\Omega)^2\phi R D_{(BC}\phi\epsilon_{D)A}\\
\nonumber&& +2\phi \left(D_{A}\,^{P}\phi D_{P(B}\Omega D_{CD)}\Omega-D_{A}\,^{P}\Omega D_{P(B}\Omega D_{CD)}\phi\right)\\
\nonumber&& \left.+4\frac{}{}(1+\Omega)\left(D_{A}\,^{P}\Omega D_{P(B}\phi D_{CD)}\phi-D_{A}\,^{P}\phi D_{P(B}\phi D_{CD)}\Omega\right)\right\}\\
\nonumber&& +\frac{1}{(1+\Omega-\phi^2)^2}\left\{\frac{1}{24}\phi^2\left[-6+(20+3\Omega)\phi^2\right]D^{PQ}\Omega D_{PQ}\Omega D_{(BC}\Omega\epsilon_{D)A}\right.\\
\nonumber&& -\frac{1}{6}(14+23\Omega+3\Omega^2)\phi^3 D^{PQ}\Omega D_{PQ}\phi D_{(BC}\Omega\epsilon_{D)A}\\
\nonumber&& +\frac{1}{6}(1+\Omega)\left[6(1+\Omega)+(8+23\Omega+3\Omega^2)\phi^2\right]D^{PQ}\phi D_{PQ}\phi D_{(BC}\Omega\epsilon_{D)A}\\
\nonumber&& -\frac{1}{12}\phi\left[-12(1+\Omega)+(26+23\Omega+3\Omega^2)\phi^2\right]D^{PQ}\Omega D_{PQ}\Omega D_{(BC}\phi\epsilon_{D)A}\\
\nonumber&& +\frac{1}{3}(1+\Omega)^2\left[-6+(20+3\Omega)\phi^2\right]D^{PQ}\Omega D_{PQ}\phi D_{(BC}\phi\epsilon_{D)A}\\
\nonumber&& \left.-\frac{1}{3}(1+\Omega)^2\left(14+23\Omega+3\Omega^2\right)\phi D^{PQ}\phi D_{PQ}\phi D_{(BC}\phi\epsilon_{D)A}\right\}.
\end{eqnarray}
Equations \eqref{DDOmegaT}, \eqref{DRT} are translated into the space-spinor formalism by making the index replacements $a\rightarrow AB$, $b\rightarrow CD$, $c\rightarrow EF$.\\
We use equations \eqref{DDphiSpinor}, \eqref{DSSpinor}, the spinor version of equations \eqref{DDOmegaT}, \eqref{DRT} and the theory of `exact sets of fields' to prove the next result.
\begin{lem}\label{formalExpansion}
Let there be two given sequences
\begin{equation*}
\hat{{\cal D}}^{\phi}_n=\{\psi_{A_1B_1},\psi_{A_2B_2A_1B_1},\psi_{A_3B_3A_2B_2A_1B_1},...\},
\end{equation*}
\begin{equation*}
\hat{{\cal D}}^S_n=\{\Psi_{A_2B_2A_1B_1},\Psi_{A_3B_3A_2B_2A_1B_1},\Psi_{A_4B_4A_3B_3A_2B_2A_1B_1},...\},
\end{equation*}
of totally symmetric spinors satisfying the reality condition \eqref{realityCond}. Assume that there exists a solution $h$, $\phi$, $\Omega$, $R$, $S_{ABCD}$ to the conformal stationary field equations \eqref{RicH},\eqref{D2phiT},\eqref{DDOmegaT},\eqref{DRT},\eqref{DST} satisfying \eqref{condOmega},\eqref{initialCondR} so that the spinors given by $\hat{{\cal D}}^{\phi}_n,\hat{{\cal D}}^S_n$ coincide with the null data ${\cal D}^{\phi*}_n,{\cal D}^{S*}_n$ given by \eqref{nullDataPhi}, \eqref{nullDataS} of the metric $h$ in terms of an $h$-orthonormal normal frame centered at $i$, i.e.,
\begin{equation}\label{identificationPsi1}
\psi_{A_pB_p...A_1B_1}=D_{(A_pB_p}...D_{A_1B_1)}\phi(i),\,\,\,p\geq 1,
\end{equation}
\begin{equation}\label{identificationPsi2}
\Psi_{A_pB_p...A_1B_1}=D_{(A_pB_p}...D_{A_3B_3}S_{A_2B_2A_1B_1)}(i),\,\,\,p\geq 2.
\end{equation}
Then the coefficients of the normal expansions \eqref{normalExpansion} of the fields \eqref{unknownFieldsNormalCoord},
\begin{eqnarray*}
& D_{A_pB_p}...D_{A_1B_1}\phi,\,\,\,D_{A_pB_p}...D_{A_1B_1}\Omega,\,\,\,D_{A_pB_p}...D_{A_1B_1}R,\\
& D_{A_pB_p}...D_{A_1B_1}S_{ABCD}(i),\,\,\,p\geq 0,
\end{eqnarray*}
are uniquely determined by the data $\hat{{\cal D}}^{\phi}_n,\hat{{\cal D}}^S_n$ and satisfy the reality conditions.
\end{lem}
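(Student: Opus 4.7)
The plan is to proceed by strong induction on the order $p$ of covariant derivatives at $i$, showing at each stage that every component of $D^{p}\phi(i)$, $D^{p}\Omega(i)$, $D^{p}R(i)$ and $D^{p}S_{ABCD}(i)$ is uniquely expressible in terms of the null data $\hat{\cal D}^\phi_n, \hat{\cal D}^S_n$ together with the base conditions. At order $0$ the initial data fix everything directly: $\Omega(i)=0$, $D_{AB}\Omega(i)=0$, and $D_{AB}D_{CD}\Omega(i)=2h_{ABCD}$ from \eqref{condOmega}; $\phi(i)=0$ from $\nu^S=0$; $D_{AB}\phi(i)=\psi_{AB}$ and $S_{ABCD}(i)=\Psi_{ABCD}$ from $\hat{\cal D}^\phi_n,\hat{\cal D}^S_n$; and $R(i)$ then from \eqref{initialCondR}.

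For the inductive step, fix a field $f\in\{\phi,\Omega,R,S_{ABCD}\}$ and consider $D_{A_{p+1}B_{p+1}}\cdots D_{A_{1}B_{1}}f(i)$. By the decomposition \eqref{symmetricCont} this spinor splits into its totally symmetric part plus a sum of terms of schematic form $\epsilon\,\times$ symmetrized contractions. The totally symmetric part, for $f=\phi$, is literally $\psi_{A_{p+1}B_{p+1}\cdots A_{1}B_{1}}$; for $f$ a $(p-1)$-fold derivative of $S_{ABCD}$ (at order $\ge 2$) it is the corresponding $\Psi$; for $f=\Omega$ and $f=R$ it is not a datum but is supplied by the equations described below. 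The non-symmetric pieces involve contractions of two derivative indices, and the crucial observation is that any such contraction can, after using the commutator identity \eqref{spinorCommutator} to move the contracted pair into adjacent leftmost positions, be evaluated by the field equations: \eqref{DDphiSpinor} supplies $D^{P}{}_{B}D_{AP}\phi$, \eqref{DSSpinor} supplies $D^{P}{}_{A}S_{BCDP}$, while the spinor forms of \eqref{DDOmegaT} and \eqref{DRT} supply $D_{AB}D_{CD}\Omega$ and $D_{AB}R$ in full. Every commutation costs a curvature factor which by \eqref{spinorRicci} is algebraic in $R$ and $S_{ABCD}$ at strictly lower differential order, hence controlled by the inductive hypothesis. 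Iterating this reduction after applying up to $p-1$ further derivatives to the field equations then expresses $D^{p+1}f(i)$ purely in terms of null data and quantities of order $\le p$.

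The main obstacle I anticipate is the bookkeeping needed to verify \emph{completeness}: that every irreducible component of $D^{p+1}f(i)$ is captured either by a null datum, by a field equation applied after commutation, or by the inductive hypothesis. This is exactly the ``exact set of fields'' criterion in the Penrose--Rindler sense. What makes the system \eqref{RicH},\eqref{D2phiT},\eqref{DDOmegaT},\eqref{DRT},\eqref{DST} an exact set relative to the chosen null data is that \eqref{DDphiSpinor} fills precisely the trace of $D_{AB}D_{CD}\phi$ complementary to the symmetric trace-free datum $\psi$, and \eqref{DSSpinor} does the same for $\Psi$ against $S_{ABCD}$, while \eqref{DDOmegaT} and \eqref{DRT} determine all derivatives of $\Omega$ and $R$ from the remaining fields. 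Preservation of the reality condition \eqref{realityCond} is automatic, since \eqref{DDphiSpinor}--\eqref{DSSpinor} and the commutator \eqref{spinorCommutator} are spinor transcriptions of real tensor equations and the input data satisfy \eqref{realityCond}; uniqueness is immediate from the recursive construction.
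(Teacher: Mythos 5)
Your plan is essentially the paper's own argument: strong induction on the derivative order at $i$, the decomposition \eqref{symmetricCont} of $D^{p+1}f(i)$ into its totally symmetric part (supplied by $\hat{\cal D}^{\phi}_n$, $\hat{\cal D}^{S}_n$ for $\phi$ and $S_{ABCD}$, and by the field equations for $\Omega$ and $R$) plus $\epsilon$-spinors times contractions, and the use of \eqref{spinorCommutator}, \eqref{spinorRicci} to commute derivatives at the cost of lower-order curvature terms so that \eqref{DDphiSpinor}, \eqref{DSSpinor} and the spinor forms of \eqref{DDOmegaT}, \eqref{DRT} become applicable. Your base case and your remark on the reality condition also coincide with the paper's.

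There is one step that, as written, does not close: the contraction of \emph{two derivative indices} acting on $S_{ABCD}$. After moving such a contracted pair to the leftmost position you are left with terms of the form $D_{H(A}D^{H}{}_{B)}S_{CDEF}$ and $D_{PQ}D^{PQ}S_{CDEF}$, and none of the field equations evaluates these directly: \eqref{DSSpinor} only supplies the divergence $D^{P}{}_{A}S_{BCDP}$, i.e.\ a contraction between a derivative index and an index of $S$. The paper closes this case with the two auxiliary identities
\begin{equation*}
D_{H(A}D^{H}{}_{B)}S_{CDEF}=-2S_{H(CDE}S_{F)}{}^{H}{}_{AB}+\tfrac{1}{3}R\,S_{H(CDE}h_{F)}{}^{H}{}_{AB},
\end{equation*}
\begin{equation*}
D_{AB}D^{AB}S_{CDEF}=-2D_{F}{}^{G}D_{G}{}^{H}S_{CDEH}+3S_{GH(CD}S_{E)F}{}^{GH}+\tfrac{1}{2}R\,S_{CDEF},
\end{equation*}
which express the antisymmetrized second derivative as pure (lower-order) curvature and the Laplacian of $S$ as an iterated divergence plus curvature, after which \eqref{DSSpinor} applies. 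Both identities follow from \eqref{spinorCommutator}, \eqref{spinorRicci} and \eqref{DSSpinor}, i.e.\ from tools you already invoke, so this is a gap in detail rather than in idea; but the blanket claim that every contraction is ``evaluated by the field equations'' after commutation is not literally true for this case, and the reduction does not terminate without these identities.
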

\begin{proof}
It holds $\phi(i)=0$, $D_{AB}\phi(i)=\psi_{AB}$ and $S_{ABCD}(i)=\Psi_{ABCD}$ by assumption and the expansion coefficients for $\Omega$ and $R$ of lowest order are given by \eqref{condOmega} and \eqref{initialCondR}. Assume the expansion coefficients of $\phi$ and $\Omega$ up to order $p$ and the expansion coefficients of $R$ and $S_{ABCD}$ up to order $p-1$ are known.\\
To discuss the induction step we start with $D_{A_{p+1}B_{p+1}}...D_{A_1B_1}\phi(i)$ and its decomposition in the form \eqref{symmetricCont}. By assumption, the totally symmetric part of it is given by $\psi_{A_{p+1}B_{p+1}...A_1B_1}$. The other terms in the decomposition contain contractions. Let's consider $A_i$ contracted with $A_j$. We can commute the operators $D_{A_iB_i}$ and $D_{A_jB_j}$ with other covariant derivatives, generating by \eqref{spinorCommutator} and \eqref{spinorRicci} only terms of lower order, until we have
\begin{equation*} D_{A_{p+1}B_{p+1}}...D_{A_{i+1}B_{i+1}}D_{A_{i-1}B_{i-1}}...D_{A_{j+1}B_{j+1}}D_{A_{j-1}B_{j-1}}...D_{A_1B_1}D^P\,_{B_{i}}D_{PB_{j}}\phi(i).
\end{equation*}
Equation \eqref{DDphiSpinor} then shows how to express the resulting term by quantities of lower order that are already known.\\
For $D_{A_{p+1}B_{p+1}}...D_{A_1B_1}\Omega(i)$ and $D_{A_pB_p}...D_{A_1B_1}R(i)$ we just use the spinor versions of \eqref{DDOmegaT} and \eqref{DRT} to express them by quantities of lower order.\\
Finally, dealing with $D_{A_pB_p}...D_{A_1B_1}S_{CDEF}(i)$ is quite similar to\\
$D_{A_{p+1}B_{p+1}}...D_{A_1B_1}\phi(i)$. The symmetric term is known by the data. If a contraction is performed between a derivative index and one of $C,D,E,F$ then \eqref{DSSpinor} is used after interchanging derivatives. If the contraction is between two derivatives, the general identities
\begin{equation*}
D_{H(A}D^H\,_{B)}S_{SDEF}=-2S_{H(CDE}S_{F)}\,^H\,_{AB}+\tfrac{1}{3}RS_{H(CDE}h_{F)}\,^H\,_{AB},
\end{equation*}
\begin{equation*}
D_{AB}D^{AB}S_{CDEF}=-2D_F\,^GD_G\,^HS_{CDEH}+3S_{GH(CD}S_{E)F}\,^{GH}+\tfrac{1}{2}RS_{CDEF},
\end{equation*}
implied by \eqref{spinorCommutator}, \eqref{spinorRicci}, together with \eqref{DSSpinor}  show that the corresponding term can be expressed in terms of quantities of lower order. The induction step is completed.\\
That the expansion coefficients satisfy the reality condition is a consequence of the formalism and the fact that they are satisfyed by the data.
\end{proof}
In order to show the convergence of the formal series determined in the previous Lemma we need to impose estimates on the free coefficients given by $\hat{{\cal D}}^{\phi}_n,\hat{{\cal D}}^S_n$. For this we have the following result.
\begin{lem}\label{estimatesNullData}
A necessary condition for the formal series determined in Lemma \ref{formalExpansion} to be absolutely convergent near the origin is that the data given by $\hat{{\cal D}}^{\phi}_n$, $\hat{{\cal D}}^S_n$ satisfy estimates of the type
\begin{equation}\label{estimates-psi}
|\psi_{A_pB_p...A_1B_1}|\leq \frac{p!M}{r^p},\,\,p=1,2,3,...,
\end{equation}
\begin{equation}\label{estimates-Psi}
|\Psi_{A_pB_p...A_1B_1CDEF}|\leq \frac{p!M}{r^p},\,\,p=0,1,2,...,
\end{equation}
with some constants $M,r>0$.
\end{lem}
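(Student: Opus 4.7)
The plan is to argue that absolute convergence forces analyticity, and analyticity forces the stated Cauchy-type bounds on the free data. Concretely, if the formal normal expansions produced in Lemma~\ref{formalExpansion} for the unknowns $\phi,\Omega,R,S_{ABCD}$ converge absolutely in a neighbourhood of $i$, then in particular $\phi$ and $S_{ABCD}$ are real analytic at $i$ in the given normal coordinates and parallelly propagated frame. The null data are, by \eqref{identificationPsi1}--\eqref{identificationPsi2}, totally symmetric parts of their covariant derivatives at $i$, so it suffices to bound those derivatives and then project.

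The first step is to extract Cauchy-type bounds from absolute convergence of the normal expansion \eqref{normalExpansion}. For a real analytic tensor field $T_{{\bf a}_1\ldots {\bf a}_k}$, absolute convergence of
\begin{equation*}
\sum_{p\geq 0}\tfrac{1}{p!}\,x^{c_p}\cdots x^{c_1}\,D_{{\bf c}_p}\cdots D_{{\bf c}_1}T_{{\bf a}_1\ldots {\bf a}_k}(i)
\end{equation*}
on a polydisc $\{|x^c|\leq \rho\}$ implies, by evaluating the sum with all $|x^c|=\rho$ and noting that each term of a convergent positive series is bounded, the existence of $M>0$ with
\begin{equation*}
\tfrac{\rho^p}{p!}\,|D_{{\bf c}_p}\cdots D_{{\bf c}_1}T_{{\bf a}_1\ldots {\bf a}_k}(i)|\leq M
\end{equation*}
for every choice of frame indices. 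Applying this to $T=\phi$ and to $T=S_{{\bf b}{\bf c}}$ gives frame-component bounds of the form $|D_{{\bf c}_p}\cdots D_{{\bf c}_1}\phi(i)|\leq M\,p!/\rho^p$ and $|D_{{\bf c}_p}\cdots D_{{\bf c}_1}S_{{\bf b}{\bf c}}(i)|\leq M\,p!/\rho^p$.

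The second step is to translate these into spinorial estimates on the null data. Since the spinor covariant derivatives are obtained from the frame derivatives by contracting with the constant van der Waerden symbols $\alpha^{a}{}_{AB}$, whose entries are of order one, the bounds transfer directly:
\begin{equation*}
|D_{A_pB_p}\cdots D_{A_1B_1}\phi(i)|\leq \tilde M\,p!/\tilde r^{\,p},
\qquad |D_{A_pB_p}\cdots D_{A_1B_1}S_{CDEF}(i)|\leq \tilde M\,p!/\tilde r^{\,p},
\end{equation*}
for suitable constants $\tilde M,\tilde r>0$ depending only on $M,\rho$ and the number of index sums (bounded by a power of $2$, hence absorbable into $\tilde M$). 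By \eqref{identificationPsi1} and \eqref{identificationPsi2} the null data $\psi_{A_pB_p\ldots A_1B_1}$ and $\Psi_{A_pB_p\ldots A_1B_1CDEF}$ are the totally symmetric parts of the above spinors. Totally symmetrisation is a bounded linear operation (an average over the permutation group), so the same form of estimate survives, possibly after further adjustment of the constants, giving precisely \eqref{estimates-psi} and \eqref{estimates-Psi}.

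The main point to be careful about, rather than a real obstacle, is the passage from absolute convergence of the full (non-symmetrised) multi-derivative series to a pointwise bound on individual coefficients; this is the standard Cauchy-type extraction outlined in the first step, and all remaining manipulations (contraction with the $\alpha$'s, totally symmetrisation) are bounded and affect only the multiplicative constants.
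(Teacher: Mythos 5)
Your argument is correct and is essentially the one the paper relies on (the paper defers to Lemma 3.2 of \cite{Friedrich07}, which is exactly this Cauchy-estimate extraction followed by the observation that contraction with the constant van der Waerden symbols and total symmetrisation only rescale $M$ and $r$). The one imprecision is in your first step: Abel's lemma applied to the absolutely convergent power series \eqref{normalExpansion} directly bounds the monomial coefficients, i.e.\ the \emph{symmetrised} derivatives $D_{({\bf c}_p}\ldots D_{{\bf c}_1)}T(i)$, not each individual ordered derivative as you state; but since the null data \eqref{identificationPsi1}, \eqref{identificationPsi2} involve only the totally symmetric parts, this is all that is needed and the proof goes through unchanged.
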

We skip the proof of this lemma because it uses the same argument as the proof of Lemma 3.2 in \cite{Friedrich07}.\\
Lemma \ref{formalExpansion} shows that the null data determines a formal solution to the stationary field equations. As shown by Beig and Simon \cite{BeigSimon81}, the multipole moments do the same. Thus there is a bijective map $\Theta$ from the null data to the multipoles sequences, $\Theta:\{{\cal D}_n^\phi,{\cal D}_n^S\}\rightarrow\{{\cal D}_{mp}^M,{\cal D}_{mp}^S\}$. Instead of using this argument, we can try to gain more information on the relation starting from \eqref{multipoleField1}, \eqref{multipoleField2}. It is convenient to work in space-spinor form, that means that we are using the $h$-orthonormal frame and normal coordinates previously defined. We get the following result.
\begin{lem}
The spinor fields $P^M_{A_pB_p...A_1B_1}$, $P^S_{A_pB_p...A_1B_1}$, near $i$, given by \eqref{multipoleField1}, \eqref{multipoleField2}, are of the form
\begin{eqnarray}\label{PM}
&&P^M_{A_pB_p...A_1B_1}\\
\nonumber &&=-\frac{1}{2}\left(1+\Omega-\phi^2\right)^{-\frac{1}{2}}\left(1+2\Omega-\phi^2\right)D_{(A_pB_p}...D_{A_3B_3}S_{A_2B_2A_1B_1)}\\
\nonumber &&-\left(1+\Omega-\phi^2\right)^{-\frac{1}{2}}\phi D_{(A_pB_p}...D_{A_1B_1)}\phi\\
\nonumber &&+\left(1+\Omega-\phi^2\right)^{-\frac{3}{2}}\phi\left(p-2\Omega^2\right)D_{(A_pB_p}\Omega D_{A_{p-1}B_{p-1}}...D_{A_1B_1)}\phi\\
\nonumber &&-\left(1+\Omega-\phi^2\right)^{-\frac{1}{2}}(1+\Omega)\left(p-2\Omega^2\right)D_{(A_pB_p}\phi D_{A_{p-1}B_{p-1}}...D_{A_1B_1)}\phi\\
\nonumber &&+F^M_{A_pB_p...A_1B_1},\,\,\,\,\,p\geq 3,
\end{eqnarray}
\begin{equation}\label{PS}
P^S_{A_pB_p...A_1B_1}=D_{(A_pB_p}...D_{A_1B_1)}\phi+F^S_{A_pB_p...A_1B_1},\,\,\,\,\,p\geq 2,
\end{equation}
with symmetric spinor-valued functions $F^M_{A_pB_p...A_1B_1}$ and $F^S_{A_pB_p...A_1B_1}$. The function $F^M_{A_pB_p...A_1B_1}$, $p\geq 3$, is at each point a real linear combination of symmetrized tensor products of
\begin{equation*}
D_{(A_{q-1}B_{q-1}}...D_{A_1B_1)}\phi,\,D_{(A_qB_q}...D_{A_3B_3}S_{A_2B_2A_1B_1)},\,D_{AB}\Omega,\,\,\,2\leq q\leq p-1,
\end{equation*}
with coefficients that depend on $\Omega$ and $\phi$. The function $F^S_{A_pB_p...A_1B_1}$, $p\geq 2$, is a real linear combination of symmetrized tensor products of
\begin{equation*}
D_{(A_{q-2}B_{q-2}}...D_{A_1B_1)}\phi,\,D_{(A_qB_q}...D_{A_3B_3}S_{A_2B_2A_1B_1)},\,\,\,2\leq q\leq p.
\end{equation*}
\end{lem}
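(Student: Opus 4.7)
The proof proceeds by induction on $p$, working throughout in the space-spinor formalism. There $\mathcal{C}$ becomes total symmetrization over all spinor indices, and $h_{ABCD}$-trace pieces are automatically annihilated since $h_{(ABCD)}=0$. The recursion \eqref{multipoleField2} therefore reads, for $p\geq 2$,
\[
P^A_{(A_pB_p\cdots A_1B_1)} = D_{(A_pB_p}P^A_{A_{p-1}B_{p-1}\cdots A_1B_1)} - \tfrac{1}{2}(p-1)(2p-3)\,P^A_{(A_pB_p\cdots A_3B_3}S_{A_2B_2A_1B_1)},
\]
the Ricci scalar part of $R_{ab}$ having dropped out.

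I would treat $P^S$ first. Since $\phi$ is a scalar its covariant derivatives commute, so $D_{(A_pB_p}\cdots D_{A_1B_1)}\phi = D_{A_pB_p}\cdots D_{A_1B_1}\phi$. The base case $p=2$ is the immediate computation $P^S_{A_2B_2A_1B_1}=D_{(A_2B_2}D_{A_1B_1)}\phi-\tfrac{1}{2}\phi\,S_{A_2B_2A_1B_1}$. For the inductive step one substitutes the inductive form at level $p$ into the recursion: differentiation of the leading term $D_{(A_pB_p}\cdots D_{A_1B_1)}\phi$ produces directly the leading term at level $p+1$, while Leibniz on $F^S_{A_pB_p\cdots A_1B_1}$ extends each atom by one derivative, with any rearrangements of derivatives acting on $S$-atoms reduced to $S$-polynomials via \eqref{spinorCommutator}, \eqref{spinorRicci}. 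The Ricci correction term, expanded through the inductive form of $P^S$ at level $p-1$ and multiplied by the $q=2$ $S$-atom, is already a polynomial in the listed atoms. All new contributions thus fit into $F^S_{A_{p+1}B_{p+1}\cdots A_1B_1}$ with real constant coefficients.

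For $P^M$ I would use $\phi_M=(1+\Omega-\phi^2)^{1/2}$, which gives
\[
D_{AB}\phi_M = \tfrac{1}{2}(1+\Omega-\phi^2)^{-1/2}\bigl(D_{AB}\Omega - 2\phi\,D_{AB}\phi\bigr),
\]
and induct from the explicit base case $p=3$. The informative point in the base case is that the leading $S$-derivative contribution splits into two sources which combine to the stated coefficient: the $-\tfrac{1}{2}\phi_M\,S_{A_2B_2A_1B_1}$ piece of $P^M_{A_2B_2A_1B_1}$ produces, upon differentiation, $-\tfrac{1}{2}\phi_M\,D_{(A_3B_3}S_{A_2B_2A_1B_1)}$, while the highest-order $DDD\,\Omega$ term of $D_{(A_3B_3}D_{A_2B_2}D_{A_1B_1)}\phi_M$, after using the spinor form of \eqref{DDOmegaT} to replace $D_{CD}D_{AB}\Omega$ by $-\Omega\,S_{ABCD}$ modulo an $h$-trace and first-derivative-quadratic terms, contributes $-\tfrac{1}{2}(1+\Omega-\phi^2)^{-1/2}\Omega\,D_{(A_3B_3}S_{A_2B_2A_1B_1)}$. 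Their sum is $-\tfrac{1}{2}(1+\Omega-\phi^2)^{-1/2}\bigl[(1+\Omega-\phi^2)+\Omega\bigr]=-\tfrac{1}{2}(1+\Omega-\phi^2)^{-1/2}(1+2\Omega-\phi^2)$, as required. The other three leading pieces of \eqref{PM} at $p=3$ arise analogously from the chain-rule expansion of $DDD\phi_M$, from Leibniz cross-terms when derivatives fall on the prefactors $\phi,(1+\Omega-\phi^2)^{\pm1/2}$, and from the $D\phi$-quadratic and $D\Omega D\phi$-cross terms on the right-hand side of \eqref{DDOmegaT}; all other contributions are products of the atoms listed in the lemma with coefficients polynomial in $\Omega,\phi$, and are absorbed into $F^M$. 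The inductive step then feeds the claimed form at level $p$ into the recursion: Leibniz acting on each explicit leading piece, combined with \eqref{DDOmegaT}-substitution whenever a second $\Omega$-derivative is produced, propagates the leading-term structure to level $p+1$ with the prescribed prefactors, and the Ricci correction, expanded via the inductive form of $P^M$ at level $p-1$ and multiplied by the $q=2$ $S$-atom, contributes only to $F^M$ at level $p+1$.

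The main obstacle is the coefficient bookkeeping in the $P^M$ inductive step, in particular matching the explicit prefactors $(1+2\Omega-\phi^2)$ and $(p-2\Omega^2)$ and tracking their $p$-dependence. At each stage this amounts to a lengthy but algorithmic polynomial calculation: one substitutes the explicit inductive forms of $P^M$ at levels $p$ and $p-1$ into the recursion, applies Leibniz and scalar commutation on $\phi$, rewrites every surviving $DD\Omega$ via \eqref{DDOmegaT} (whose $h$-trace vanishes under total symmetrization) and every derivative-reordering on $S$ via \eqref{spinorCommutator}, \eqref{spinorRicci}, and collects coefficients. The $P^S$ case is essentially automatic because $\phi$ is scalar and no chain-rule expansion is involved.
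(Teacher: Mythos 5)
Your proposal is correct and follows essentially the same route as the paper: induction on $p$ with base cases $p=2$ (for $P^S$) and $p=3$ (for $P^M$) verified by direct computation from \eqref{multipoleField1}, \eqref{multipoleField2} using $\phi_M=(1+\Omega-\phi^2)^{1/2}$, followed by insertion of the inductive form into the recursion \eqref{multipoleField2} together with the symmetrized spinor version of \eqref{DDOmegaT}. Your explicit tracing of how the coefficient $(1+2\Omega-\phi^2)$ assembles from the two sources in the base case is a useful elaboration of what the paper leaves as "direct calculation."
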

\begin{proof}
From \eqref{Omega} we get
\begin{equation*}
\phi_M=\left(1+\Omega-\phi^2\right)^\frac{1}{2},
\end{equation*}
and by direct calculations from \eqref{multipoleField1}, \eqref{multipoleField2} we see that \eqref{PM} is valid for $p=3$ and that \eqref{PS} is valid for $p=2$, with the stated properties for $F^M_{A_3B_3A_2B_2A_1B_1}$ and $F^S_{A_2B_2A_1B_1}$. Assuming that the lemma is true for $p\leq k$, inserting \eqref{PM} and \eqref{PS} into the recursion relation \eqref{multipoleField2}, and using the symmetrized spinor version of \eqref{DDOmegaT}, we see that the lemma is true for $p=k+1$.
\end{proof}
Using \eqref{multipoleField1}, \eqref{multipoleField2}, \eqref{multipoles} and the identification \eqref{identificationPsi1}, \eqref{identificationPsi2} we get for the lower order multipoles
\begin{eqnarray}
\label{lowerOrder1} & \nu^M_{A_1B_1}=0,\,\,\,\,\,\nu^M_{A_2B_2A_1B_1}=-\frac{1}{2}\Psi_{A_2B_2A_1B_1}-\psi_{(A_2B_2}\psi_{A_1B_1)},\\
\label{lowerOrder2} & \nu^M_{A_3B_3A_2B_2A_1B_1}=-\frac{1}{2}\Psi_{A_3B_3A_2B_2A_1B_1}-\psi_{(A_3B_3}\psi_{A_2B_2A_1B_1)},\\
\label{lowerOrder3} & \nu^S_{A_1B_1}=\psi_{A_1B_1},\,\,\,\,\,\nu^S_{A_2B_2A_1B_1}=\psi_{A_2B_2A_1B_1}.
\end{eqnarray}
Also restricting \eqref{PM} and \eqref{PS} to $i$ and with the identification \eqref{identificationPsi1}, \eqref{identificationPsi2} we get
\begin{eqnarray}
\label{nullDataToMultipolesM} \nu^M_{A_pB_p...A_1B_1} & = & -\frac{1}{2}\Psi_{A_pB_p...A_1B_1}-p\psi_{(A_pB_p}\psi_{A_{p-1}B_{p-1}...A_1B_1)}\\
\nonumber && +f^M_{A_pB_p...A_1B_1},\,\,\,p\geq 3,\\
\label{nullDataToMultipolesS} \nu^S_{A_pB_p...A_1B_1} & = & \psi_{A_pB_p...A_1B_1}+f^S_{A_pB_p...A_1B_1},\,\,\,p\geq 2,
\end{eqnarray}
where $f^M_{A_pB_p...A_1B_1}$, $p\geq 3$, is a real linear combination of symmetrized tensor products of
\begin{equation*}
\psi_{A_{q-1}B_{q-1}...A_1B_1},\,\,\,\,\,\Psi_{A_qB_q...A_1B_1},\,\,\,\,\,2\leq q\leq p-1,
\end{equation*}
and $f^S_{A_pB_p...A_1B_1}$, $p\geq 2$, is a real linear combination of symmetrized tensor products of
\begin{equation*}
\psi_{A_{q-2}B_{q-2}...A_1B_1},\,\,\,\,\,\Psi_{A_qB_q...A_1B_1},\,\,\,\,\,2\leq q\leq p.
\end{equation*}
Equations \eqref{lowerOrder1}, \eqref{lowerOrder2}, \eqref{lowerOrder3}, \eqref{nullDataToMultipolesM} and \eqref{nullDataToMultipolesS} give a nonlinear map $\Theta$, that can be read as a map
\begin{equation*}
\Theta:\{\hat{\cal D}_n^\phi,\hat{\cal D}_n^S\}\rightarrow\{\hat{\cal D}_{mp}^M,\hat{\cal D}_{mp}^S\}
\end{equation*}
of the set of \emph{abstract null data} into the set of \emph{abstract multipoles} (i.e., sequences of symmetric spinors not necessarily derived from a metric). Now is fairly easy to show that the map can be inverted.
\begin{cor}
The map $\Theta$ that maps sequences of abstract null data $\{\hat{\cal D}_n^\phi,\hat{\cal D}_n^S\}$ onto sequences of abstract multipoles $\{\hat{\cal D}_{mp}^M,\hat{\cal D}_{mp}^S\}$ is bijective.
\end{cor}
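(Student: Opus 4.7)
The plan is to invert $\Theta$ explicitly, constructing the preimage of any given sequence of abstract multipoles by induction on the order $p$ of the null datum being built. Formulas \eqref{lowerOrder1}--\eqref{lowerOrder3}, together with \eqref{nullDataToMultipolesM} and \eqref{nullDataToMultipolesS}, exhibit an upper-triangular structure in which each multipole of order $p$ is a nonzero multiple of the corresponding null datum of order $p$ plus an explicit polynomial in strictly lower-order data, so the inverse can be read off order by order.

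First I would handle the low-order cases directly from \eqref{lowerOrder1}--\eqref{lowerOrder3}: set $\psi_{A_1B_1}=\nu^S_{A_1B_1}$ and $\psi_{A_2B_2A_1B_1}=\nu^S_{A_2B_2A_1B_1}$, and then
\[
\Psi_{A_2B_2A_1B_1}=-2\nu^M_{A_2B_2A_1B_1}-2\psi_{(A_2B_2}\psi_{A_1B_1)}.
\]
For the induction step, assume $\psi_q$ has been determined for $1\le q\le p-1$ and $\Psi_q$ for $2\le q\le p-1$, with $p\ge 3$. The key observation is that the quadratic term $p\,\psi_{(A_pB_p}\psi_{A_{p-1}B_{p-1}\dots A_1B_1)}$ in \eqref{nullDataToMultipolesM} couples only $\psi_1$ with $\psi_{p-1}$ (indices are merely labels, not orders), while $f^M_p$ is, by the preceding lemma, a polynomial in null data of orders $\le p-1$. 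Hence \eqref{nullDataToMultipolesM} can be solved algebraically for $\Psi_p$ in terms of $\nu^M_p$ and quantities already known. With $\Psi_p$ in hand, equation \eqref{nullDataToMultipolesS} expresses $\nu^S_p$ as $\psi_p+f^S_p$, where $f^S_p$ is a polynomial in $\psi$'s of order $\le p-2$ and $\Psi$'s of order $\le p$, all of which are now available; thus $\psi_p=\nu^S_p-f^S_p$ is explicitly obtained, closing the induction.

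This simultaneously establishes injectivity (the formulas uniquely recover the null data from the multipoles) and surjectivity (applied to any abstract multipole sequence they produce a well-defined abstract null datum), while reality is inherited at every step because each operation is a real polynomial in spinors satisfying \eqref{realityCond}. The only step requiring care, and the main subtlety, is the order in which $\Psi_p$ and $\psi_p$ are extracted at each level: one must solve for $\Psi_p$ from $\nu^M_p$ \emph{before} solving for $\psi_p$ from $\nu^S_p$, precisely because $f^S_p$ is allowed to depend on $\Psi_p$, whereas $f^M_p$ depends only on data of strictly lower order. The preceding lemma is exactly what furnishes this asymmetry and thereby closes the recursion.
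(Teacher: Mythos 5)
Your proof is correct and follows essentially the same route as the paper, which simply notes that $f^M_3=0$, $f^S_2=0$ and that the stated dependence properties of $f^M_p$ and $f^S_p$ allow \eqref{nullDataToMultipolesM} and \eqref{nullDataToMultipolesS} to be inverted recursively. Your version merely makes the recursion explicit, correctly identifying the one subtle point (that $f^S_p$ may involve $\Psi_p$, so $\Psi_p$ must be extracted from $\nu^M_p$ before $\psi_p$ is extracted from $\nu^S_p$).
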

\begin{proof}
From \eqref{lowerOrder2}, \eqref{lowerOrder3} we see that $f^M_{A_3B_3A_2B_2A_1B_1}=0$, $f^S_{A_2B_2A_1B_1}=0$, with this and the stated properties for $f^M_{A_pB_p...A_1B_1}$ and $f^S_{A_pB_p...A_1B_1}$ an inverse for $\Theta$ can be constructed inverting the relations \eqref{nullDataToMultipolesM} and \eqref{nullDataToMultipolesS} recursively.
\end{proof}
Hence, for a given metric $h$, the sequences of multipoles and the sequences of null data in a given standard frame carry the same information on $h$. As said, we prefer to work with the null data because they are linear in $\phi$ and $S_{ABCD}$.

\section{The characteristic initial value problem}\label{characteristicProblem}
After showing that the null data determine the solution, one would have to show that the estimates \eqref{estimates-psi}, \eqref{estimates-Psi} imply Cauchy estimates for the expansion coefficients
\begin{equation*}
|D_{A_pB_p}...D_{A_1B_1}T(i)|\leq \frac{p!M}{r^p},\,\,\,A_p,B_p,...,A_1,B_1=0,1,\,\,\,p=0,1,2,...,
\end{equation*}
where $T$ is any of $\phi$, $\Omega$, $R$, $S_{ABCD}$. This would ensure the convergence of the normal expansion at $i$. The induction procedure used so far for calculating the expansion coefficients from the null data generates additional non-linear terms each time one interchanges a derivative or uses the conformal field equations. Thus, it does not seem suited for deriving estimates. Instead, we use the intrinsic geometric nature of the problem and the data to formulate the problem as a boundary value problem to which Cauchy-Kowalevskaya type arguments apply.\\
As the fields $h$, $\phi$, $\Omega$, $R$, $S_{ABCD}$ are real analytic in the normal coordinates $x^a$ and a standard frame $c_{AB}$ centered at $i$, they can be extended near $i$ by analyticity into the complex domain and considered as holomorphic fields on a complex analytic manifold $N_c$. Choosing $N_c$ to be a sufficiently small neighbourhood of $i$, we can assume the extended coordinates, again denoted by $x^a$, to define a holomorphic coordinate system on $N_c$ which identifies $N_c$ with an open neighbourhood of the origin in $\mathbb{C}$. The original manifold $N$ is then a real, 3-dimensional, real analytic submanifold of the real, 6-dimensional, real analytic manifold underlying $N_c$. Under the analytic extension the main differential geometric concepts and formulas remain valid. The coordinates $x^a$ and the extended frame, again denoted by $c_{AB}$, satisfy the same defining equations and the extended fields, denoted again by $h$, $\phi$, $\Omega$, $R$, $S_{ABCD}$, satisfy the conformal stationary vacuum field equations as before.\\
The analytic function $\Gamma=\delta_{ab}x^ax^b$ on $N$ extends to a holomorphic function on $N_c$. On $N$ it vanishes only at $i$, but the set
\begin{equation*}
{\cal N}_i=\{p\in N_c|\Gamma(p)=0\},
\end{equation*}
is an irreducible analytical set such that ${\cal N}_i\backslash\{i\}$ is a 2-dimensional complex submanifold of $N_c$. It is the cone swept out by the complex null geodesics through $i$ and we will refer to it as the \emph{null cone at} $i$.\\
Now let $u\rightarrow x^a(u)$ be a null geodesic through $i$ such that $x^a(0)=0$. Its tangent vector is then of the form $\dot{x}^{AB}=\iota^A\iota^B$ with a spinor field $\iota^A=\iota^A(u)$ satisfying $D_{\dot{x}}\iota^A=0$ along the geodesic. Then
\begin{equation}\label{phiNull}
\phi(u)=\phi(x(u)),
\end{equation}
\begin{equation}\label{SNull}
S_0(u)=\dot{x}^a\dot{x}^bS_{ab}(x(u))=\iota^A\iota^B\iota^C\iota^DS_{ABCD}(x(u)),
\end{equation}
are analytic functions of $u$ with Taylor expansion
\begin{equation*}
\phi(u)=\sum_{p=0}^\infty\frac{1}{p!}u^p\frac{d^p\phi}{du^p}(0),\,\,\,S_0(u)=\sum_{p=0}^\infty\frac{1}{p!}u^p\frac{d^pS_0}{du^p}(0),
\end{equation*}
where
\begin{equation*}
\frac{d^p\phi}{du^p}(0)=\dot{x}^{a_p}...\dot{x}^{a_1}D_{a_p}...D_{a_1}\phi(0)=\iota^{A_p}\iota^{B_p}...\iota^{A_1}\iota^{B_1}D_{(A_pB_p}...D_{A_1B_1)}\phi(i),
\end{equation*}
\begin{equation*}
\frac{d^pS_0}{du^p}(0)=\iota^{A_p}\iota^{B_p}...\iota^{A_1}\iota^{B_1}\iota^C\iota^D\iota^E\iota^FD_{(A_pB_p}...D_{A_1B_1}S_{CDEF)}(i).
\end{equation*}
This shows that knowing these expansion coefficients for initial null vectors $\iota^A\iota^B$ covering an open subset of the null directions at $i$ is equivalent to knowing the null data $\hat{{\cal D}}^{\phi}_n,\hat{{\cal D}}^S_n$ of the metric $h$.\\
Our problem can thus be formulated as the boundary value problem for the conformal stationary vacuum equations with data given by the functions \eqref{phiNull}, \eqref{SNull} on ${\cal N}_i$, where the $\iota^A\iota^B$ are parallely propagated null vectors tangent to ${\cal N}_i$.\\
${\cal N}_i$ is not a smooth hypersurface but an analytic set with a vertex at the point $i$, and we need a setting in which the mechanism of calculating the expansion coefficients allows us to derive estimates on the coefficients from the conditions imposed on the data. That is done in the next subsections.

\subsection{The geometric gauge}
We need to choose a gauge suitably adapted to the singular set ${\cal N}_i$. The coordinates and the frame field will then necessarily be singular and the frame will no longer define a smooth lift to the bundle of frames but a subset which becomes tangent to the fibres over some points.\\
We will use the principal bundle of normalized spin frames $SU(N)\xrightarrow{\pi}N$ with structure group $SU(2)$, which is the group of complex $2\times2$ matrices $(s^A\,_B)_{A,B=0,1}$ satisfying
\begin{equation}\label{conditionsSL}
\epsilon_{AB}s^A\,_Cs^B\,_D=\epsilon_{CD},\,\,\,\tau_{AB'}s^A\,_C\bar{s}^{B'}\,_{D'}=\tau_{CD'}.
\end{equation}
The $2:1$ covering homomorphism of $SU(2)$ onto $SO(3,\mathbb{R})$ is performed via
\begin{equation*}
SU(2)\ni s^A\,_B\rightarrow s^a\,_b=\alpha^a\,_{AB}s^A\,_Cs^B\,_D\alpha^{CD}\,_b\in SO(3,\mathbb{R}).
\end{equation*}
Under holomorphic extension the map above extends to a $2:1$ covering homomorphism of the group $SL(2,\mathbb{C})$ onto the group $SO(3,\mathbb{C})$, where $SL(2,\mathbb{C})$ denotes the group of complex $2\times 2$ matrices satisfying only the first of conditions \eqref{conditionsSL}.\\
A point $\delta\in SU(N)$ is given by a pair of spinors $\delta=(\delta^A_0,\delta^A_1)$ at a given point of $N$ which satisfies
\begin{equation}\label{conditionSpinFrame}
\epsilon(\delta_A,\delta_B)=\epsilon_{AB},\,\,\,\epsilon(\delta_A,\delta^+_{B'})=\tau_{AB'},
\end{equation}
and the action of the structure group is given for $s\in SU(2)$ by
\begin{equation*}
\delta\rightarrow\delta\cdot s\mbox{   where   }(\delta\cdot s)_A=s^B\,_A\delta_B.
\end{equation*}
The projection $\pi$ maps a frame $\delta$ into its base point in $N$. The bundle of spin frames is mapped by a $2:1$ bundle morphism $SU(N)\xrightarrow{p}SO(N)$ onto the bundle $SO(N)\xrightarrow{\pi'}N$ of oriented, orthonormal frames on $N$ so that $\pi'\circ p=\pi$. For any spin frame $\delta$ we can identify by \eqref{conditionSpinFrame} the matrix $(\delta^A_B)_{A,B=0,1}$ with an element of the group $SU(2)$. With this reading the map $p$ will be assumed to be realized by
\begin{equation*}
SU(N)\ni\delta\rightarrow p(\delta)_{AB}=\delta^E_A\delta^F_Bc_{EF}\in SO(N),
\end{equation*}
where $c_{AB}$ denotes the normal frame field on $N$ introduced before. We refer to $p(\delta)$ as the frame associated with the spin frame $\delta$.\\
Under holomorphic extension the bundle $SU(N)\xrightarrow{\pi} N$ is extended to the principal bundle $SL(N_c)\xrightarrow{\pi}N_c$ of spin frames $\delta=(\delta^A_0,\delta^A_1)$ at given points of $N_c$ which satisfy only the first of conditions \eqref{conditionSpinFrame}. Its structure group is $SL(2,\mathbb{C})$. The bundle $SU(N)\xrightarrow{\pi} N$ is embedded into $SL(N_c)\xrightarrow{\pi}N_c$ as a real analytic subbundle. The bundle morphism $p$ extends to a $2:1$ bundle morphism, again denoted by $p$, of $SL(N_c)\xrightarrow{\pi}N_c$ onto the bundle $SO(N_c)\xrightarrow{\pi'}N_c$ of oriented, normalized frames of $N_c$ with structure group $SO(3,\mathbb{C})$. We shall make use of several structures on $SM(N_c)$.\\
With each $\alpha\in sl(2,\mathbb{C})$, i.e., $\alpha=(\alpha^A\,_B)$ with $\alpha_{AB}=\alpha_{BA}$, is associated a \emph{vertical vector field} $Z_\alpha$ tangent to the fibres, which is given at $\delta\in SL(N_c)$ by $Z_\alpha(\delta)=\tfrac{d}{dv}(\delta\cdot \exp(v\alpha))|_{v=0}$, where $v\in\mathbb{C}$ and $\exp$ denotes the exponential map $sl(2,\mathbb{C})\rightarrow SL(2,\mathbb{C})$.\\
The $\mathbb{C}^3$-valued \emph{soldering form} $\sigma^{AB}=\sigma^{(AB)}$ maps a tangent vector $X\in T_\delta SL(N_c)$ onto the components of its projection $T_\delta(\pi)X\in T_{\pi(\delta)}N_c$ in the frame $p(\delta)$ associated with $\delta$ so that $T_\delta(\pi)X=\langle\sigma^{AB},X\rangle p(\delta)_{AB}$. It follows that $\langle\sigma^{AB},Z_\alpha\rangle =0$ for any vertical vector field $Z_\alpha$.\\
The $sl(2,\mathbb{C})$-valued \emph{connection form} $\omega^A\,_B$ on $SL(N_c)$ transforms with the adjoint transformation under the action of $SL(2,\mathbb{C})$ and maps any vertical vector field $Z_\alpha$ onto its generator so that $\langle\omega^A\,_B,Z_\alpha\rangle=\alpha^A\,_B$.\\
With $x^{AB}=x^{(AB)}\in\mathbb{C}^3$ is associated the \emph{horizontal vector field} $H_x$ on $SL(N_c)$ which is horizontal in the sense that $\langle\omega^A\,_B,H_x\rangle=0$ and which satisfies $\langle\sigma^{AB},H_x\rangle =x^{AB}$. Denoting by $H_{AB},A,B=0,1$, the horizontal vector fields satisfying $\langle\sigma^{AB},H_{CD}\rangle =h^{AB}\,_{CD}$, it follows that $H_x=x^{AB}H_{AB}$. An integral curve of a horizontal vector field projects onto an $h$-geodesic and represents a spin frame field which is parallelly transported along this geodesic.\\
A holomorphic spinor field $\psi$ on $N_c$ si represented on $SL(N_c)$ by a holomorphic spinor-valued function $\psi_{A_1...A_j}(\delta)$ on $SL(N_c)$, given by the components of $\psi$ in the frame $\delta$. We shall use the notation $\psi_k=\psi_{(A_1...A_j)_k},k=0,..,j$, where $(......)_k$ denotes the operation \emph{`symmetrize and set $k$ indices equal to $1$ the rest equal to $0$'}. These functions completely specify $\psi$ if $\psi$ is symmetric. They are then referred to as the \emph{essential components of $\psi$}.

\subsection{The submanifold $\hat{N}$ of $SL(N_c)$}
Using the available geometrical structure we construct a three-dimensional submanifold $\hat{N}$ of $SL(N_c)$ in such a way that it induces coordinates in $N_c$. By the construction procedure the induced coordinates are suitable adapted to the set ${\cal N}_i$.\\
We start by choosing a spin frame $\delta^*$ such that $\pi(\delta^*)=i$ and $p(\delta^*)_{AB}=c_{AB}$. The curve
\begin{equation*}
\mathbb{C}\ni v\rightarrow \delta(v)=\delta^*\cdot s(v)\in SL(N_c),
\end{equation*}
\begin{equation}\label{verticals}
s(v)=exp(v\alpha)=\left(\begin{array}{cc} 1 & 0 \\ v & 1 \end{array}\right),\,\,\,\alpha=\left(\begin{array}{cc} 0 & 0 \\ 1 & 0 \end{array}\right)\in sl(2,\mathbb{C}),
\end{equation}
defines a vertical, 1-dimensional, holomorphic submanifold $I$ of $SL(N_c)$ on which $v$ defines a coordinate. The associated family of frames $e_{AB}(v)$ at $i$ is given by $e_{AB}(v)=s^C\,_A(v)s^D\,_B(v)c_{CD}$, and explicitly by
\begin{equation*}
e_{00}=c_{00}+2vc_{01}+v^2c_{11},\,\,\,e_{01}(v)=c_{01}+vc_{11},\,\,\,e_{11}(v)=c_{11}.
\end{equation*}
We perform the following construction in a neighbourhood of $I$. If it is chosen small enough all the following statements will be correct.\\
The set $I$ is moved with the flow of $H_{11}$ to obtain a holomorphic 2-manifold $U_0$ of $SL(N_c)$. We denote by $w$ the parameter on the integral curves of $H_{11}$ that vanishes on $I$, and we extend $v$ to $U_0$ by assuming it to be constant on the integral curves of $H_{11}$. All these integral curves are mapped by $\pi$ onto the null geodesic $\gamma(w)$ with affine parameter $w$ and tangent vector $\gamma'(0)=c_{11}$ at $\gamma(0)=i$. The parameter $v$ specifies which frame fields are parallelly propagated along $\gamma$.\\
$U_0$ is now moved with the flow of $H_{00}$ to obtain a holomorphic 3-submanifold $\hat{N}$ of $SL(N_c)$. We denote by $u$ the parameter on the integral curves of $H_{11}$ that vanishes on $U_0$ and we extend $v$ and $w$ to $\hat{N}$ by assuming them to be constant along the integral curves of $H_{00}$. The functions $z^1=u,z^2=v,z^3=w$ define holomorphic coordinates on $\hat{N}$. We denote again $\pi$ the restiction of the projection to $\hat{N}$.\\
The projections of the integral curves of $H_{00}$ with a fixed value of $w$ sweep out, together with $\gamma$, the null cone ${\cal N}_{\gamma(w)}$ near $\gamma(w)$, which is generated by the null geodesics through the point $\gamma(w)$. On the null geodesics $u$ is an affine parameter which vanishes at $\gamma(w)$ while $v$ parametrizes the different generators. The set $W_0=\{w=0\}$ projects onto ${\cal N}_i\backslash\gamma$ and will define the initial data set for our problem. The map $\pi$ induces a biholomorphic diffeomorphism of $\hat{N}'\equiv\hat{N}\backslash U_0$ onto $\pi(\hat{N}')$. The singularity of the gauge at points of $U_0$ consists in $\pi$ dropping rank on $U_0$, where $\partial_v=Z_{\alpha}$. The null curve $\gamma(w)$ will be referred to as the \emph{the singular generator of ${\cal N}_i$ in the gauge determined by the spin frame $\delta^*$ resp. the corresponding frame $c_{AB}$ at $i$}.\\
The soldering an the connection form pull back to holomorphic 1-forms on $\hat{N}$, which will be denoted again by $\sigma^{AB}$ and $\omega^A\,_B$. If the pull back of the curvature form $\Omega^A\,_B=\tfrac{1}{2}r^A\,_{BCDEF}\sigma^{CD}\wedge\sigma^{EF}$ to $\hat{N}$ is denoted again by $\Omega^A\,_B$, then the soldering and the connection form satisfy the structural equations
\begin{equation*}
d\sigma^{AB}=-\omega^A\,_C\wedge\sigma^{CB}-\omega^B\,_C\wedge\sigma^{AC},\,\,\,d\omega^A\,_B=-\omega^A\,_C\wedge\omega^C\,_B+\Omega^A\,_B.
\end{equation*}
Using the way in which $\hat{N}$ is constructed, and in terms of the coordinates $z^a$, we get $\sigma^{AB}=\sigma^{AB}\,_adz^a$ on $\hat{N}'$, where
\begin{equation*}
(\sigma^{AB}\,_a)=\left(\begin{array}{ccc} 1 & \sigma^{00}\,_2 & \sigma^{00}\,_3 \\ 0 & \sigma^{01}\,_2 & \sigma^{01}\,_3 \\ 0 & 0 & 1 \end{array}\right)=\left(\begin{array}{ccc} 1 & {\cal O}(u^3) & {\cal O}(u^2) \\ 0 & u+{\cal O}(u^3) & {\cal O}(u^2) \\ 0 & 0 & 1 \end{array}\right)\mbox{ as }u\rightarrow 0.
\end{equation*}
On $\hat{N}'$ there exist unique, holomorphic vector fields $e_{AB}$ which satisfy
\begin{equation*}
\langle\sigma^{AB},e_{CD}\rangle=h^{AB}\,_{CD}.
\end{equation*}
If one writes $e_{AB}=e^a\,_{AB}\partial_{z^a}$, then
\begin{equation*}
(e^a\,_{AB})=\left(\begin{array}{ccc} 1 & e^1\,_{01} & e^1\,_{11} \\ 0 & e^2\,_{01} & e^2\,_{11} \\ 0 & 0 & 1 \end{array}\right)=\left(\begin{array}{ccc} 1 & {\cal O}(u^2) & {\cal O}(u^2) \\ 0 & \tfrac{1}{2u}+{\cal O}(u) & {\cal O}(u) \\ 0 & 0 & 1 \end{array}\right)\mbox{ as }u\rightarrow 0.
\end{equation*}
We shall write
\begin{equation*}
e^a\,_{AB}=e^{*a}\,_{AB}+\hat{e}^a\,_{AB},
\end{equation*}
with singular part
\begin{equation*}
e^{*a}\,_{AB}=\delta^a_1\epsilon_A\,^0\epsilon_B\,^0+\delta^a_2\frac{1}{u}\epsilon_{(A}\,^0\epsilon_{B)}\,^1+\delta^a_3\epsilon_A\,^1\epsilon_B\,^1,
\end{equation*}
and holomorphic functions $\hat{e}^a\,_{AB}$ on $\hat{N}$ which satisfy
\begin{equation}\label{FrameGauge}
\hat{e}^a_{AB}={\cal O}(u)\mbox{ as }u\rightarrow 0.
\end{equation}
We define the connection coefficients on $\hat{N}'$ by $\omega^A\,_B=\Gamma_{CD}\,^A\,_B\sigma^{CD}$ with $\Gamma_{CDAB}\equiv\langle\omega_{AB},e_{CD}\rangle$, so that $\Gamma_{ABCD}=\Gamma_{(AB)(CD)}$, and from the definition of the frame
\begin{equation*}
\Gamma_{00AB}=0\mbox{ on }\hat{N},\,\,\,\Gamma_{11AB}=0\mbox{ on }U_0,
\end{equation*}
and it follows that
\begin{equation*}
\Gamma_{ABCD}=\Gamma^*_{ABCD}+\hat{\Gamma}_{ABCD},
\end{equation*}
with singular part
\begin{equation*}
\Gamma^*_{ABCD}=-\frac{1}{u}\epsilon_{(A}\,^0\epsilon_{B)}\,^1\epsilon_C\,^0\epsilon_D\,^0,
\end{equation*}
and holomorphic functions $\hat{\Gamma}_{ABCD}$ on $\hat{N}$ which satisfy
\begin{equation}\label{GammaGauge}
\hat{\Gamma}_{ABCD}={\cal O}(u)\mbox{ as }u\rightarrow 0.
\end{equation}

\subsection{Tensoriality and expansion type}\label{ExpansionType}
As the induced map $\pi$ of $\hat{N}$ into $N_c$ is singular on $U_0$, not every holomorphic function of the $z^a$ can arise as a pull-back to $\hat{N}$ of a holomorphic function on $N_c$. The latter must have a special type of expansion in terms of the $z^a$ which reflects the particular relation between the `angular' corrdinate $v$ and the `radial' coordinate $u$. We take from \cite{Friedrich07} the following definition and lemma.
\begin{defin}
A holomorphic function $f$ on $\hat{N}$ is said to be of $v$-finite expansion type $k_f$, with $k_f$ an integer, if it has in terms of the coordinates $u$, $v$, and $w$ a Taylor expansion at the origin of the form
\begin{equation*}
f=\sum_{p=0}^\infty\sum_{m=0}^\infty\sum_{n=0}^{2m+k_f}f_{m,n,p}u^mv^nw^p,
\end{equation*}
where it is assumed that $f_{m,n,p}=0$ if $2m+k_f<0$.
\end{defin}
\begin{lem}\label{expansionType}
Let $\phi_{A_1...A_j}$ be a holomorphic, symmetric, spinor-valued function on $SL(N_c)$. Then the restrictions of its essential components $\phi_k=\phi_{(A_1...A_j)_k}$, $0\leq k\leq j$, to $\hat{N}$ satisfy
\begin{equation*}
\partial_v\phi_k=(j-k)\phi_{k+1},\,\,\,k=0,...,j,\mbox{ on }U_0,
\end{equation*}
(where we set $\phi_{j+1}=0$) and $\phi_k$ is of expansion type $j-k$.
\end{lem}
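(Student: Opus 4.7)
The plan is to exploit the explicit construction of $\hat{N}$ via horizontal flows and the right action of $SL(2,\mathbb{C})$ together with the transformation law of spinor components. The key algebraic input is that $s(v)=\exp(v\alpha)$ in \eqref{verticals} satisfies $s^B{}_1=\delta^B_1$ and $s^B{}_0=\delta^B_0+v\delta^B_1$, so the right action transforms the essential components according to
\begin{equation*}
\phi_k(\eta\cdot s(v))=\sum_{\ell=0}^{j-k}\binom{j-k}{\ell}v^\ell\,\phi_{k+\ell}(\eta),\qquad \eta\in SL(N_c),
\end{equation*}
a polynomial of degree $\leq j-k$ in $v$. Differentiating at $v=0$ and using $\ell\binom{j-k}{\ell}=(j-k)\binom{j-k-1}{\ell-1}$ gives the fibre identity $Z_\alpha\phi_k=(j-k)\phi_{k+1}$ on all of $SL(N_c)$.

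For the first assertion, I would check that $[H_{11},Z_\alpha]=0$: from the structural equations $\omega([H_{11},Z_\alpha])=-\Omega(H_{11},Z_\alpha)=0$ by horizontality of the curvature form, while $\sigma^{AB}([H_{11},Z_\alpha])$ reduces to a symmetrization of $\alpha^A{}_C h^{CB}{}_{11}$, which vanishes because $\alpha^A{}_1=0$ for the lower-triangular $\alpha$ and $h^{AB}{}_{11}=\delta^A_1\delta^B_1$. Commutation of the two flows then implies $\delta(0,v,w)=\bar{\delta}(w)\cdot s(v)$ with $\bar{\delta}(w)=\Phi_w^{H_{11}}(\delta^*)$, so $\partial_v|_{U_0}=Z_\alpha|_{U_0}$, and the fibre identity immediately gives $\partial_v\phi_k=(j-k)\phi_{k+1}$ on $U_0$.

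For the expansion type, I would describe $\delta(u,v,w)$ as the spin frame obtained by parallel-transporting $\bar{\delta}(w)\cdot s(v)$ along the null geodesic from $\gamma(w)$ whose initial tangent is the $00$-th frame vector of $\bar{\delta}(w)\cdot s(v)$,
\begin{equation*}
\dot\gamma(v,w)=p(\bar{\delta}(w)\cdot s(v))_{00}=c_{00}(w)+2v\,c_{01}(w)+v^2 c_{11}(w),
\end{equation*}
for affine parameter $u$. Taylor expanding along this geodesic in the parallel-propagated frame gives
\begin{equation*}
\phi_k(\delta(u,v,w))=\sum_{m\geq 0}\frac{u^m}{m!}\bigl(D_{\dot\gamma}^m\phi\bigr)_k\bigl(\bar{\delta}(w)\cdot s(v)\bigr),
\end{equation*}
and I estimate each coefficient by converting to the $v$-independent frame $c_{AB}(w)$: the $m$ insertions of $\dot\gamma^{AB}=h^{AB}{}_{00}+2v\,h^{AB}{}_{01}+v^2 h^{AB}{}_{11}$ contribute a polynomial of $v$-degree $\leq 2m$, while the frame change $s(v)$ applied to the essential component $\phi_k$ contributes $v$-degree $\leq j-k$ via the transformation law above. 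Hence the coefficient of $u^m$ is a polynomial in $v$ of degree at most $2m+(j-k)$, which is precisely the $v$-finite expansion type $j-k$.

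The main technical care lies in keeping two different frames straight---the $v$-dependent frame $p(\bar{\delta}(w)\cdot s(v))$, in which $\dot\gamma$ has constant components but $\phi_k$ carries nontrivial $v$-dependence through $s(v)$, versus the $v$-independent frame $c_{AB}(w)$, in which $\dot\gamma$ is a degree-$2$ polynomial in $v$ but the spinor components are $v$-independent---and in ensuring that $D_{\dot\gamma}^m\phi$ is interpreted unambiguously as the covariant derivative along the geodesic (which is fine since $\dot\gamma$ is parallel-transported). Once this bookkeeping is organized, the two polynomial $v$-degree contributions combine additively to the stated bound.
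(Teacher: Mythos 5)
The paper states this lemma without proof, importing it verbatim from \cite{Friedrich07}, so there is no in-paper argument to compare against; your proposal is correct and is essentially the argument used there. The two ingredients you isolate are exactly the right ones: the fibre transformation law $\phi_k(\eta\cdot s(v))=\sum_{\ell}\binom{j-k}{\ell}v^\ell\phi_{k+\ell}(\eta)$, which together with $\partial_v|_{U_0}=Z_\alpha$ (justified by $[H_{11},Z_\alpha]=0$, which your structure-equation computation correctly reduces to $\alpha^A{}_1=0$) gives the first identity, and the additive degree count $2m$ (from the $m$ contractions with $\dot\gamma^{AB}=s^A{}_0s^B{}_0$ in the $v$-independent frame) plus $j-k$ (from rotating the free indices by $s(v)$) for the coefficient of $u^m$, which is precisely expansion type $j-k$. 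Your closing remark about keeping the two frames straight is indeed the only place where the bookkeeping could go wrong, and you have handled it correctly.
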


\subsection{The null data on $W_0$}
As we have seen, prescribing the null data is equivalent to knowing $\phi$ and $S_0$ in the null cone. Now we need to know how this fit into our particular gauge. For this we derive an expansion of the restriction of $\phi$ and $S_0$ to the hypersurface $W_0$.\\
Consider the normal frame $c_{AB}$ on $N_c$ near $i$ which agrees at $i$ with the frame associated with $\delta^*$ and denote the null data of $h$ in this frame by
\begin{equation*}
{\cal D}^{\phi*}_n=\{D^*_{(A_pB_p}...D^*_{A_1B_1)}\phi(i),\,\,\,p=1,2,3,...\},
\end{equation*}
\begin{equation*}
{\cal D}^{S*}_n=\{D^*_{(A_pB_p}...D^*_{A_1B_1}S^*_{ABCD)}(i),\,\,\,p=0,1,2,3,...\}.
\end{equation*}
Choose now a fixed value of $v$ and consider $s(v)$ as in \eqref{verticals}, then the vector $H_{00}(\delta^*\cdot s)$ projects onto the null vector $e_{00}=s^A\,_0s^B\,_0c_{AB}$ at $i$ and is tangent to a null geodesic $\eta=\eta(u,v)$ on ${\cal N}_i$ with affine parameter $u$, $u=0$ at $i$. The integral curve of $H_{00}$ through $\delta^*\cdot s$ projects onto this null geodesic. Using the explicit expression for $s=s(v)$ follows that
\begin{eqnarray}\label{expansionPhi}
\nonumber\phi(u,v) & = & \phi|_{\eta(u,v)}=\sum_{m=0}^\infty\frac{1}{m!}u^mD^m_{00}\phi|_{\eta(0,v)}\\
\nonumber& = & \sum_{m=0}^\infty\frac{1}{m!}u^ms^{A_m}\,_0s^{B_m}\,_0...s^{A_1}\,_0s^{B_1}\,_0D^*_{(A_mB_m}...D^*_{A_1B_1)}\phi(i)\\
& = & \sum_{m=0}^\infty\sum_{n=0}^{2m}\psi_{m,n}u^mv^n,
\end{eqnarray}
with
\begin{equation*}
\psi_{m,n}=\frac{1}{m!}\binom{2m}{n}D^*_{(A_mB_m}...D^*_{A_1B_1)_n}\phi(i),\,\,\,0\leq n\leq 2m.
\end{equation*}
In the same way
\begin{eqnarray}\label{expansionS}
S_0(u,v) & = & S_{0000}|_{\eta(u,v)}\\
\nonumber& = & s^A\,_0s^B\,_0s^C\,_0s^D\,_0S^*_{ABCD}|_{\eta(u,v)}=\sum_{m=0}^\infty\sum_{n=0}^{2m+4}\Psi_{m,n}u^mv^n,
\end{eqnarray}
with
\begin{equation*}
\Psi_{m,n}=\frac{1}{m!}\binom{2m+4}{n}D^*_{(A_mB_m}...D^*_{A_1B_1}S^*_{ABCD)_n}(i),\,\,\,0\leq n\leq 2m.
\end{equation*}
This shows how to determine $\phi(u,v),S_0(u,v)$ from the null data ${\cal D}^{\phi*}_n,{\cal D}^{S*}_n$ and vice versa.

\section{The conformal stationary vacuum field equations on $\hat{N}$}\label{sectionConformalEquations}
Now we can use the frame calculus in its standard form. Given the fields $\Omega$, $\phi$, $R$ and $S_{ABCD}$, and using the frame $e_{AB}$ and the connection coefficients $\Gamma_{ABCD}$ on $\hat{N}$, we set
\begin{eqnarray*}
r_{ABCDEF} & \equiv & e_{CD}(\Gamma_{EFAB})-e_{EF}(\Gamma_{CDAB})+\Gamma_{EF}\,^{K}\,_{C}\Gamma_{DKAB}\\
&& +\Gamma_{EF}\,^{K}\,_{D}\Gamma_{CKAB}-\Gamma_{CD}\,^{K}\,_{E}\Gamma_{KFAB}-\Gamma_{CD}\,^{K}\,_{F}\Gamma_{EKAB}\\
&& +\Gamma_{EF}\,^{K}\,_{B}\Gamma_{CDAK}-\Gamma_{CD}\,^{K}\,_{B}\Gamma_{EFAK}-t_{CD}\,^{GH}\,_{EF}\Gamma_{GHAB},
\end{eqnarray*}
and we define there the quantities $t_{AB}\,^{EF}\,_{CD}$, $R_{ABCDEF}$, $A_{AB}$, $\Sigma_{AB}$, $\Phi_{AB}$, $\Pi_{AB}$, $\Sigma_{ABCD}$ and $H_{ABCD}$ by
\begin{eqnarray*}
t_{AB}\,^{EF}\,_{CD}e^{a}\,_{EF} & \equiv & 2\Gamma_{AB}\,^{E}\,_{(C}e^{a}\,_{D)E}-2\Gamma_{CD}\,^{E}\,_{(A}e^{a}\,_{B)E}\\
&& -e^{a}\,_{CD,b}e^{b}\,_{AB}+e^{a}\,_{AB,b}e^{b}\,_{CD},
\end{eqnarray*}
\begin{eqnarray*}
R_{ABCDEF} & \equiv & r_{ABCDEF}-\frac{1}{2}\left[\left(S_{ABCE}-\frac{1}{6}R h_{ABCE}\right)\epsilon_{DF}\right.\\
&& \left.+\left(S_{ABDF}-\frac{1}{6}R h_{ABDF}\right)\epsilon_{CE}\right],
\end{eqnarray*}
\begin{equation*}
A_{AB}\equiv D_{AB}\phi-\phi_{AB},
\end{equation*}
\begin{equation*}
\Sigma_{AB}\equiv D_{AB}\Omega-\Omega_{AB},
\end{equation*}
\begin{eqnarray*}
\Phi_{AB} & \equiv &  D^{P}\,_{B}\phi_{AP}+\tfrac{1}{4}\epsilon_{AB}\phi\bigg(R+\frac{10}{1+\Omega-\phi^2}\bigg[\frac{1}{4}\phi^2\Omega_{PQ}\Omega^{PQ}\\
&&-(1+\Omega)\phi\Omega^{PQ}\phi_{PQ}+(1+\Omega)^2\phi_{PQ}\phi^{PQ}\bigg]\bigg),
\end{eqnarray*}
\begin{eqnarray*}
\Pi_{AB} & \equiv & D_{AB}R-\frac{1}{1+\Omega-\phi^2}\bigg\{\frac{}{}2(4+7\Omega)\phi\Omega^{PQ}D_{AB}\phi_{PQ}\\
&& -4(1+\Omega)(4+7\Omega)\phi^{PQ}D_{AB}\phi_{PQ}\\
&& +[3+(-3+7\Omega)\phi^2]\Omega^{PQ}S_{PQAB}-2\Omega(4+7\Omega)\phi\phi^{PQ}S_{PQAB}\\
&& +\frac{1}{3}(4+7\Omega)\phi^2 R\Omega_{AB}-\frac{2}{3}(1+\Omega)(4+7\Omega)\phi R\phi_{AB}\bigg\}\\
&& -\frac{1}{3(1+\Omega-\phi^2)^2}\bigg\{\frac{1}{2}\phi^2\left[-12+(40+21\Omega)\phi^2\right]\Omega^{PQ}\Omega_{PQ}\Omega_{AB}\\
&& -2\phi\left[-18(1+\Omega)+(46+61\Omega+21\Omega^2)\phi^2\right]\Omega^{PQ}\phi_{PQ}\Omega_{AB}\\
&& +2(1+\Omega)\left[-24(1+\Omega)+(52+61\Omega+21\Omega^2)\phi^2\right]\phi^{PQ}\phi_{PQ}\Omega_{AB}\\
&& -\phi\left[12(1+\Omega)+(16+61\Omega+21\Omega^2)\phi^2\right]\Omega^{PQ}\Omega_{PQ}\phi_{AB}\\
&& +4(1+\Omega)\left[6(1+\Omega)+(22+61\Omega+21\Omega^2)\phi^2\right]\Omega^{PQ}\phi_{PQ}\phi_{AB}\\
&& -4(1+\Omega)^2(7+3\Omega)(4+7\Omega)\phi\phi^{PQ}\phi_{PQ}\phi_{AB}\bigg\},
\end{eqnarray*}
\begin{eqnarray*}
\Sigma_{ABCD} & \equiv & D_{AB}\Omega_{CD}+\Omega S_{ABCD}+\frac{1}{3}(1+\Omega)R h_{ABCD}\\
&& -\frac{1}{1+\Omega-\phi^2}\bigg\{\frac{1}{2}\left[1+(-1+\Omega)\phi^2\right]\Omega_{AB}\Omega_{CD}\\
&& -\Omega^2\phi(\Omega_{AB}\phi_{CD}+\Omega_{CD}\phi_{AB})+2\Omega^2(1+\Omega)\phi_{AB}\phi_{CD}\\
&& -\frac{4}{3}(2+3\Omega)\bigg[\frac{1}{4}\phi^2\Omega_{PQ}\Omega^{PQ}-(1+\Omega)\phi\Omega_{PQ}\phi^{PQ}\\
&& +(1+\Omega)^2\phi_{PQ}\phi^{PQ}\bigg]h_{ABCD}\bigg\},
\end{eqnarray*}
\begin{eqnarray*}
H_{ABCD} & \equiv & D^{P}\,_{A}S_{BCDP}\\
&& -\frac{1}{1+\Omega-\phi^2}\bigg\{\frac{}{}\Omega\phi\Omega_{A}\,^{P}D_{(BC}\phi_{D)P}-2\Omega(1+\Omega)\phi_{A}\,^{P}D_{(BC}\phi_{D)P}\\
&& +(1+\Omega)\phi\Omega_{PQ}D_{(BC}\phi^{PQ}\epsilon_{D)A}-2(1+\Omega)^2\phi_{PQ}D_{(BC}\phi^{PQ}\epsilon_{D)A}\\
&& +\frac{1}{2}\left[1+(-1+\Omega)\phi^2\right]\Omega_{A}\,^{P}S_{PBCD}-\Omega^2\phi\phi_{A}\,^{P}S_{PBCD}\\
&& +\frac{1}{2}\Omega\phi^2\Omega^{PQ}S_{PQ(BC}\epsilon_{D)A}-\Omega(1+\Omega)\phi\phi^{PQ}S_{PQ(BC}\epsilon_{D)A}\\
&& +\frac{1}{6}(1+\Omega)\phi^2 R\Omega_{(BC}\epsilon_{D)A}-\frac{1}{3}(1+\Omega)^2\phi R\phi_{(BC}\epsilon_{D)A}\\
&& +2\phi \left(\phi_{A}\,^{P}\Omega_{P(B}\Omega_{CD)}-\Omega_{A}\,^{P}\Omega_{P(B}\phi_{CD)}\right)\\
&& +4(1+\Omega)\left(\Omega_{A}\,^{P}\phi_{P(B}\phi_{CD)}-\phi_{A}\,^{P}\phi_{P(B}\Omega_{CD)}\right)\bigg\}\\
&& -\frac{1}{3(1+\Omega-\phi^2)^2}\bigg\{\frac{1}{8}\phi^2\left[-6+(20+3\Omega)\phi^2\right]\Omega^{PQ}\Omega_{PQ}\Omega_{(BC}\epsilon_{D)A}\\
&& -\frac{1}{2}(14+23\Omega+3\Omega^2)\phi^3\Omega^{PQ}\phi_{PQ}\Omega_{(BC}\epsilon_{D)A}\\
&& +\frac{1}{2}(1+\Omega)\left[6(1+\Omega)+(8+23\Omega+3\Omega^2)\phi^2\right]\phi^{PQ}\phi_{PQ}\Omega_{(BC}\epsilon_{D)A}\\
&& -\frac{1}{4}\phi\left[-12(1+\Omega)+(26+23\Omega+3\Omega^2)\phi^2\right]\Omega^{PQ}\Omega_{PQ}\phi_{(BC}\epsilon_{D)A}\\
&& +(1+\Omega)^2\left[-6+(20+3\Omega)\phi^2\right]\Omega^{PQ}\phi_{PQ}\phi_{(BC}\epsilon_{D)A}\\
&& -(1+\Omega)^2\left(7+\Omega\right)\left(2+3\Omega\right)\phi\phi^{PQ}\phi_{PQ}\phi_{(BC}\epsilon_{D)A}\bigg\}.
\end{eqnarray*}
The tensor fields on the left hand side have been introduced as labels for the equations and for discussing in an ordered manner the interdependencies of the equations. In terms of these tensor fields, the conformal stationary vacuum equations read
\begin{eqnarray*}
t_{AB}\,^{EF}\,_{CD}e^{a}\,_{EF}=0,\hspace{1cm}R_{ABCDEF}=0,\hspace{1cm}A_{AB}=0,\hspace{1cm}\Sigma_{AB}=0,\\
\Phi_{AB}=0,\hspace{1cm}\Pi_{AB}=0,\hspace{1cm}\Sigma_{ABCD}=0,\hspace{1cm}H_{ABCD}=0.
\end{eqnarray*}
The first equation is Cartan's first structural equation with the requirement that the metric conexion be torsion free. The second equation is Cartan's second structural equation, requiring the Ricci tensor to coincide with the appropriate combination of the trace free tensor $S_{ab}$ and the scalar $R$. The third and fourth equations define the symmetric spinors $\phi_{AB}$ and $\Omega_{AB}$ respectively. The rest of the equations have already been considered.\\
We want to calculate, using our particular gauge, a formal expansion of the conformal fields using the initial data in the form $\phi(u,v)$, $S_0(u,v)$. As the system of conformal stationary vacuum field equations is an overdetermined system, we have to choose a subsystem of it. In the rest of this section we choose a particular subsystem, writing the chosen equations in our gauge, and at the end we see how a formal expansion is determined by these equations and the initial data.

\subsection{The $A_{00}=0$ equation}
The first equation that needs particular attention is the equation $A_{00}=0$. In our gauge it reads
\begin{equation*}
\partial_{u}\phi=\phi_{00}.
\end{equation*}
This equation is used in the following to calculate $\phi_{00}$ each time we know $\phi$ as a function of $u$. In particular, as $\phi$ will be prescribed on $W_0$ as part of the initial data, this equation allows us to calculate $\phi_{00}$ there inmediately.

\subsection{The `$\partial_u$-equations'}\label{dUequations}
We now present what we will refer to as the `$\partial_u$-equations'. These equations are chosen because they have the following features. They are a system of PDE's for the set of functions $\hat{e}^a\,_{A1}$, $\hat{\Gamma}_{A1CD}$, $\Omega$, $\Omega_{AB}$, $\phi_{A1}$, $R$, $S_{1}$, $S_{2}$, $S_{3}$ and $S_{4}$, which comprise all the unknowns with the exceptions of the free data $\phi$, $S_{0}$ and the derived function $\phi_{00}$. They are all interior equations on the hypersurfaces $\{w=w_0\}$ in the sense that only derivatives in the directions of $u$ and $v$ are involved, in particular, if we consider the hypersurface $\{w=0\}$, they are all inner equations in ${\cal N}_i$. Also they split into a hierarchy that will be presented in the next section.\\
The $\partial_u$-equations:\\
Equations $t_{AB}\,^{EF}\,_{00}e^{a}\,_{EF}=0:$
\begin{eqnarray*}
&& \partial_{u}\hat{e}^{1}\,_{01}+\frac{1}{u}\hat{e}^{1}\,_{01}=-2\hat{\Gamma}_{0101}+2\hat{\Gamma}_{0100}\hat{e}^{1}\,_{01},\\
&& \partial_{u}\hat{e}^{2}\,_{01}+\frac{1}{u}\hat{e}^{2}\,_{01}=\frac{1}{u}\hat{\Gamma}_{0100}+2\hat{\Gamma}_{0100}\hat{e}^{2}\,_{01},\\
&& \partial_{u}\hat{e}^{1}\,_{11}=-2\hat{\Gamma}_{1101}+2\hat{\Gamma}_{1100}\hat{e}^{1}\,_{01},\\
&& \partial_{u}\hat{e}^{2}\,_{11}=\frac{1}{u}\hat{\Gamma}_{1100}+2\hat{\Gamma}_{1100}\hat{e}^{2}\,_{01}.
\end{eqnarray*}
Equations $R_{AB00EF}=0:$
\begin{eqnarray*}
&& \partial_{u}\hat{\Gamma}_{0100}+\frac{2}{u}\hat{\Gamma}_{0100}-2\hat{\Gamma}^{2}_{0100}=\frac{1}{2}S_{0},\\
&& \partial_{u}\hat{\Gamma}_{0101}+\frac{1}{u}\hat{\Gamma}_{0101}-2\hat{\Gamma}_{0100}\hat{\Gamma}_{0101}=\frac{1}{2}S_{1},\\
&& \partial_{u}\hat{\Gamma}_{0111}+\frac{1}{u}\hat{\Gamma}_{0111}-2\hat{\Gamma}_{0100}\hat{\Gamma}_{0111}=\frac{1}{2}S_{2}-\frac{1}{12}R,\\
&& \partial_{u}\hat{\Gamma}_{1100}+\frac{1}{u}\hat{\Gamma}_{1100}-2\hat{\Gamma}_{0100}\hat{\Gamma}_{1100}=S_{1},\\
&& \partial_{u}\hat{\Gamma}_{1101}-2\hat{\Gamma}_{1100}\hat{\Gamma}_{0101}=S_{2}+\frac{1}{12}R,\\
&& \partial_{u}\hat{\Gamma}_{1111}-2\hat{\Gamma}_{1100}\hat{\Gamma}_{0111}=S_{3}.
\end{eqnarray*}
Equation $\Sigma_{00}=0:$
\begin{eqnarray*}
\partial_{u}\Omega=\Omega_{00}.
\end{eqnarray*}
Equations $\Phi_{A0}=0:$
\begin{equation*}
\partial_{u}\phi_{01}=\frac{1}{2u}(\partial_{v}\phi_{00}-2\phi_{01})+\hat{e}^{1}\,_{01}\partial_{u}\phi_{00}+\hat{e}^{2}\,_{01}\partial_{u}\phi_{00}-2\hat{\Gamma}_{0101}\phi_{00}+2\hat{\Gamma}_{0100}\phi_{01},
\end{equation*}
\begin{eqnarray*}
\partial_{u}\phi_{11}-\frac{1}{2u}\left(\partial_{v}\phi_{01}-\phi_{11}\right)-\hat{e}^{1}\,_{01}\partial_{u}\phi_{01}-\hat{e}^{2}\,_{01}\partial_{v}\phi_{01}=-\hat{\Gamma}_{0111}\phi_{00}+\hat{\Gamma}_{0100}\phi_{11}\\
-\frac{1}{4}\phi\bigg\{R+\frac{10}{1+\Omega-\phi^2}\bigg[\frac{1}{2}\phi^2\Omega_{00}\Omega_{11}-\frac{1}{2}\left[\phi\Omega_{01}-2(1+\Omega)\phi_{01}\right]^2\\
-(1+\Omega)\phi\left(\Omega_{00}\phi_{11}+\Omega_{11}\phi_{00}\right)+2(1+\Omega)^2\phi_{00}\phi_{11}\bigg]\bigg\}.
\end{eqnarray*}
Equations $\Sigma_{00CD}=0:$
\begin{eqnarray*}
\partial_{u}\Omega_{00}&=&-\Omega S_{0}+\frac{1}{1+\Omega-\phi^2}\bigg\{\frac{1}{2}\left[1+(-1+\Omega)\phi^2\right]\Omega^{2}_{00}\\
&&-2\Omega^2\phi\Omega_{00}\phi_{00}+2\Omega^2(1+\Omega)\phi^2_{00}\bigg\},\\
\partial_{u}\Omega_{01}&=&-\Omega S_{1}+\frac{1}{1+\Omega-\phi^2}\bigg\{\frac{1}{2}\left[1+(-1+\Omega)\phi^2\right]\Omega_{00}\Omega_{01}\\
&&-\Omega^2\phi\left(\Omega_{00}\phi_{01}+\Omega_{01}\phi_{00}\right)+2\Omega^2(1+\Omega)\phi_{00}\phi_{01}\bigg\},
\end{eqnarray*}
\begin{eqnarray*}
\partial_{u}\Omega_{11}&=&-\Omega S_{2}-\frac{1}{3}(1+\Omega)R\\
&&+\frac{1}{3(1+\Omega-\phi^2)}\bigg\{\frac{1}{2}\left[3-(11+9\Omega)\phi^2\right]\Omega_{00}\Omega_{11}\\
&&+2(2+3\Omega)\left[\phi\Omega_{01}-2(1+\Omega)\phi_{01}\right]^2\\
&&+(8+20\Omega+9\Omega^2)\left[\phi(\Omega_{00}\phi_{11}+\Omega_{11}\phi_{00})-2(1+\Omega)\phi_{00}\phi_{11}\right]\bigg\}.
\end{eqnarray*}
Equation $\Pi_{00}=0:$
\begin{eqnarray*}
\partial_{u}R-\frac{1}{1+\Omega-\phi^2}\bigg\{\frac{}{}2(4+7\Omega)\phi(\Omega_{11}\partial_{u}\phi_{00}-2\Omega_{01}\partial_{u}\phi_{01}+\Omega_{00}\partial_{u}\phi_{11})\\
-4(1+\Omega)(4+7\Omega)(\phi_{11}\partial_{u}\phi_{00}-2\phi_{01}\partial_{u}\phi_{01}+\phi_{00}\partial_{u}\phi_{11})\bigg\}\\
=\frac{1}{1+\Omega-\phi^2}\bigg\{[3+(-3+7\Omega)\phi^2](\Omega_{11}S_{0}-2\Omega_{01}S_{1}+\Omega_{00}S_{2})\\
-2\Omega(4+7\Omega)\phi(\phi_{11}S_{0}-2\phi_{01}S_{1}+\phi_{00}S_{2})\\
+\frac{1}{3}(4+7\Omega)\phi\left[\phi\Omega_{00}-2(1+\Omega)\phi_{00}\right]R\bigg\}\\
+\frac{1}{3(1+\Omega-\phi^2)^2}\bigg\{\frac{}{}\phi^2\left[-12+(40+21\Omega)\phi^2\right](\Omega_{00}\Omega_{11}-\Omega^2_{01})\Omega_{00}\\
-2\phi\left[-18(1+\Omega)+(46+61\Omega+21\Omega^2)\phi^2\right](\Omega_{00}\phi_{11}-2\Omega_{01}\phi_{01}+\Omega_{11}\phi_{00})\Omega_{00}\\
+4(1+\Omega)\left[-24(1+\Omega)+(52+61\Omega+21\Omega^2)\phi^2\right](\phi_{00}\phi_{11}-\phi^2_{01})\Omega_{00}\\
-2\phi\left[12(1+\Omega)+(16+61\Omega+21\Omega^2)\phi^2\right](\Omega_{00}\Omega_{11}-\Omega^2_{01})\phi_{00}\\
+4(1+\Omega)\left[6(1+\Omega)+(22+61\Omega+21\Omega^2)\phi^2\right](\Omega_{00}\phi_{11}-2\Omega_{01}\phi_{01}+\Omega_{11}\phi_{00})\phi_{00}\\
-8(1+\Omega)^2(7+3\Omega)(4+7\Omega)\phi(\phi_{00}\phi_{11}-\phi^2_{01})\phi_{00}\bigg\}.
\end{eqnarray*}
Equations $H_{0(ABC)_{k}}=0,\mbox{\scriptsize{k=0,1,2,3}}:$
\begin{eqnarray*}
\partial_{u}S_{1}-\frac{1}{2u}\left(\partial_{v}S_{0}-4S_{1}\right)-\hat{e}^{1}\,_{01}\partial_{u}S_{0}-\hat{e}^{2}\,_{01}\partial_{v}S_{0}\\
+\frac{1}{1+\Omega-\phi^2}\big\{\Omega\left[\phi\Omega_{01}-2(1+\Omega)\phi_{01}\right]\partial_{u}\phi_{00}-\Omega\left[\phi\Omega_{00}-2(1+\Omega)\phi_{00}\right]\partial_{u}\phi_{01}\big\}\\
=-4\hat{\Gamma}_{0101}S_{0}+4\hat{\Gamma}_{0100}S_{1}\\
-\frac{1}{1+\Omega-\phi^2}\bigg\{\frac{1}{2}\left[1+(-1+\Omega)\phi^2\right](\Omega_{01}S_{0}-\Omega_{00}S_{1})-\Omega^2\phi(\phi_{01}S_{0}-\phi_{00}S_{1})\\
+2\left[\phi\Omega_{00}-2(1+\Omega)\phi_{00}\right](\Omega_{00}\phi_{01}-\Omega_{01}\phi_{00})\bigg\},
\end{eqnarray*}
\begin{eqnarray*}
\partial_{u}S_{2}-\frac{1}{2u}(\partial_{v}S_{1}-3S_{2})-\hat{e}^1\,_{01}\partial_{u}S_{1}-\hat{e}^2\,_{01}\partial_{v}S_{1}\\
+\frac{1}{3(1+\Omega-\phi^2)}\bigg\{\frac{}{}-(1+\Omega)\left[\phi\Omega_{11}-2(1+\Omega)\phi_{11}\right]\partial_{u}\phi_{00}\\
+(2+5\Omega)\left[\phi\Omega_{01}-2(1+\Omega)\phi_{01}\right]\partial_{u}\phi_{01}-(1+2\Omega)\left[\phi\Omega_{00}-2(1+\Omega)\phi_{00}\right]\partial_{u}\phi_{11}\\
-2\Omega\left[\phi\Omega_{00}-2(1+\Omega)\phi_{00}\right]\left[\frac{1}{2u}(\partial_{v}\phi_{01}-\phi_{11})+\hat{e}^1\,_{01}\partial_{u}\phi_{01}+\hat{e}^2\,_{01}\partial_{v}\phi_{01}\right]\bigg\}\\
=-\hat{\Gamma}_{0111}S_{0}-2\hat{\Gamma}_{0101}S_{1}+3\hat{\Gamma}_{0100}S_{2}\\
-\frac{1}{1+\Omega-\phi^2}\bigg\{\frac{2}{3}\Omega\left[\phi\Omega_{00}-2(1+\Omega)\phi_{00}\right]\left[\hat{\Gamma}_{0111}\phi_{00}-\hat{\Gamma}_{0100}\phi_{11}\right]\\
+\frac{1}{2}\left[1+(-1+\Omega)\phi^2\right](\Omega_{01}S_{1}-\Omega_{00}S_{2})-\Omega^2\phi(\phi_{01}S_{1}-\phi_{00}S_{2})\\
-\frac{1}{6}\Omega\phi^2(\Omega_{11}S_{0}-2\Omega_{01}S_{1}+\Omega_{00}S_{2})+\frac{1}{3}\Omega(1+\Omega)\phi(\phi_{11}S_{0}-2\phi_{01}S_{1}+\phi_{00}S_{2})\\
-\frac{1}{18}(1+\Omega)\phi\left[\phi\Omega_{00}-2(1+\Omega)\phi_{00}\right]R\\
+2\left[\phi\Omega_{01}-2(1+\Omega)\phi_{01}\right](\Omega_{00}\phi_{01}-\Omega_{01}\phi_{00})\bigg\}\\
+\frac{1}{9(1+\Omega-\phi^2)^2}\bigg\{\frac{1}{4}\phi^2\left[-6+(20+3\Omega)\phi^2\right](\Omega_{00}\Omega_{11}-\Omega^2_{01})\Omega_{00}\\
-\frac{1}{2}(14+23\Omega+3\Omega^2)\phi^3(\Omega_{00}\phi_{11}-2\Omega_{01}\phi_{01}+\Omega_{11}\phi_{00})\Omega_{00}\\
+(1+\Omega)\left[6(1+\Omega)+(8+23\Omega+3\Omega^2)\phi^2\right](\phi_{00}\phi_{11}-\phi^2_{01})\Omega_{00}\\
-\frac{1}{2}\phi\left[-12(1+\Omega)+(26+23\Omega+3\Omega^2)\phi^2\right](\Omega_{00}\Omega_{11}-\Omega^2_{01})\phi_{00}\\
+(1+\Omega)^2\left[-6+(20+3\Omega)\phi^2\right](\Omega_{00}\phi_{11}-2\Omega_{01}\phi_{01}+\Omega_{11}\phi_{00})\phi_{00}\\
-2(1+\Omega)^2\left(7+\Omega\right)\left(2+3\Omega\right)\phi(\phi_{00}\phi_{11}-\phi^2_{01})\phi_{00}\bigg\},
\end{eqnarray*}
\begin{eqnarray*}
\partial_{u}S_{3}-\frac{1}{2u}(\partial_{v}S_{2}-2S_{3})-\hat{e}^1\,_{01}\partial_{u}S_{2}-\hat{e}^2\,_{01}\partial_{v}S_{2}+\frac{1}{3(1+\Omega-\phi^2)}\\
\bigg\{-2(1+\Omega)\left[\phi\Omega_{11}-2(1+\Omega)\phi_{11}\right]\partial_{u}\phi_{01}+\Omega\left[\phi\Omega_{01}-2(1+\Omega)\phi_{01}\right]\partial_{u}\phi_{11}\\
+2(2+3\Omega)\left[\phi\Omega_{01}-2(1+\Omega)\phi_{01}\right]\left[\frac{1}{2u}(\partial_{v}\phi_{01}-\phi_{11})+\hat{e}^1\,_{01}\partial_{u}\phi_{01}+\hat{e}^2\,_{01}\partial_{v}\phi_{01}\right]\\
-(2+5\Omega)\left[\phi\Omega_{00}-2(1+\Omega)\phi_{00}\right]\left[\frac{1}{2u}\partial_{v}\phi_{11}+\hat{e}^1\,_{01}\partial_{u}\phi_{11}+\hat{e}^2\,_{01}\partial_{v}\phi_{11}\right]\bigg\}\\
=-2\hat{\Gamma}_{0111}S_{1}+2\hat{\Gamma}_{0100}S_{3}\\
-\frac{1}{1+\Omega-\phi^2}\bigg\{-\frac{2}{3}(2+3\Omega)\left[\phi\Omega_{01}-2(1+\Omega)\phi_{01}\right]\left[\hat{\Gamma}_{0111}\phi_{00}-\hat{\Gamma}_{0100}\phi_{11}\right]\\
+\frac{2}{3}(2+5\Omega)\left[\phi\Omega_{00}-2(1+\Omega)\phi_{00}\right]\left[\hat{\Gamma}_{0111}\phi_{01}-\hat{\Gamma}_{0101}\phi_{11}\right]\\
+\frac{1}{2}\left[1+(-1+\Omega)\phi^2\right](\Omega_{01}S_{2}-\Omega_{00}S_{3})-\Omega^2\phi(\phi_{01}S_{2}-\phi_{00}S_{3})\\
-\frac{1}{3}\Omega\phi^2(\Omega_{11}S_{1}-2\Omega_{01}S_{2}+\Omega_{00}S_{3})+\frac{2}{3}\Omega(1+\Omega)\phi(\phi_{11}S_{1}-2\phi_{01}S_{2}+\phi_{00}S_{3})\\
-\frac{1}{9}(1+\Omega)\phi\left[\phi\Omega_{01}-2(1+\Omega)\phi_{01}\right]R\\
\left.\frac{}{}+2\left[\phi\Omega_{11}-2(1+\Omega)\phi_{11}\right](\Omega_{00}\phi_{01}-\Omega_{01}\phi_{00})\right\}\\
+\frac{2}{9(1+\Omega-\phi^2)^2}\bigg\{\frac{1}{4}\phi^2\left[-6+(20+3\Omega)\phi^2\right](\Omega_{00}\Omega_{11}-\Omega^2_{01})\Omega_{01}\\
-\frac{1}{2}(14+23\Omega+3\Omega^2)\phi^3(\Omega_{00}\phi_{11}-2\Omega_{01}\phi_{01}+\Omega_{11}\phi_{00})\Omega_{01}\\
+(1+\Omega)\left[6(1+\Omega)+(8+23\Omega+3\Omega^2)\phi^2\right](\phi_{00}\phi_{11}-\phi^2_{01})\Omega_{01}\\
-\frac{1}{2}\phi\left[-12(1+\Omega)+(26+23\Omega+3\Omega^2)\phi^2\right](\Omega_{00}\Omega_{11}-\Omega^2_{01})\phi_{01}\\
+(1+\Omega)^2\left[-6+(20+3\Omega)\phi^2\right](\Omega_{00}\phi_{11}-2\Omega_{01}\phi_{01}+\Omega_{11}\phi_{00})\phi_{01}\\
-2(1+\Omega)^2\left(7+\Omega\right)\left(2+3\Omega\right)\phi(\phi_{00}\phi_{11}-\phi^2_{01})\phi_{01}\bigg\},
\end{eqnarray*}
\begin{eqnarray*}
\partial_{u}S_{4}-\frac{1}{2u}(\partial_{v}S_{3}-S_{4})-\hat{e}^1\,_{01}\partial_{u}S_{3}-\hat{e}^2\,_{01}\partial_{v}S_{3}+\\
\frac{1}{1+\Omega-\phi^2}\bigg\{-(1+\Omega)\left[\phi\Omega_{11}-2(1+\Omega)\phi_{11}\right]\partial_{u}\phi_{11}\\
+(2+3\Omega)\left[\phi\Omega_{01}-2(1+\Omega)\phi_{01}\right]\left[\frac{1}{2u}\partial_{v}\phi_{11}+\hat{e}^1\,_{01}\partial_{u}\phi_{11}+\hat{e}^2\,_{01}\partial_{v}\phi_{11}\right]\\
-(1+2\Omega)\left[\phi\Omega_{00}-2(1+\Omega)\phi_{00}\right]\left(\hat{e}^1\,_{11}\partial_u\phi_{11}+\hat{e}^2\,_{11}\partial_v\phi_{11}+\partial_w\phi_{11}\right)\bigg\}\\
=-3\hat{\Gamma}_{0111}S_{2}+2\hat{\Gamma}_{0101}S_{3}+\hat{\Gamma}_{0100}S_{4}\\
+\frac{1}{1+\Omega-\phi^2}\bigg\{2(2+3\Omega)\left[\phi\Omega_{01}-2(1+\Omega)\phi_{01}\right]\left[\hat{\Gamma}_{0111}\phi_{01}-\hat{\Gamma}_{0101}\phi_{11}\right]\\
-2(1+2\Omega)\left[\phi\Omega_{00}-2(1+\Omega)\phi_{00}\right]\left[\hat{\Gamma}_{1111}\phi_{01}-\hat{\Gamma}_{1101}\phi_{11}\right]\\
-\frac{1}{2}\left[1+(-1+\Omega)\phi^2\right](\Omega_{01}S_{3}-\Omega_{00}S_{4})+\Omega^2\phi(\phi_{01}S_{3}-\phi_{00}S_{4})\\
+\frac{1}{2}\Omega\phi^2(\Omega_{11}S_{2}-2\Omega_{01}S_{3}+\Omega_{00}S_{4})-\Omega(1+\Omega)\phi(\phi_{11}S_{2}-2\phi_{01}S_{3}+\phi_{00}S_{4})\\
+\frac{1}{6}(1+\Omega)\phi\left[\phi\Omega_{11}-2(1+\Omega)\phi_{11}\right]R\\
-2\phi\left[\Omega_{11}(\Omega_{01}\phi_{01}-\Omega_{11}\phi_{00})+\phi_{11}(\Omega_{00}\Omega_{11}-\Omega^2_{01})\right]\\
\left.\frac{}{}+4(1+\Omega)\left[\phi_{11}(\Omega_{01}\phi_{01}-\Omega_{00}\phi_{11})+\Omega_{11}(\phi_{00}\phi_{11}-\phi^2_{01})\right]\right\}\\
+\frac{1}{3(1+\Omega-\phi^2)^2}\bigg\{\frac{1}{4}\phi^2\left[-6+(20+3\Omega)\phi^2\right](\Omega_{00}\Omega_{11}-\Omega^2_{01})\Omega_{11}\\
-\frac{1}{2}(14+23\Omega+3\Omega^2)\phi^3(\Omega_{00}\phi_{11}-2\Omega_{01}\phi_{01}+\Omega_{11}\phi_{00})\Omega_{11}\\
+(1+\Omega)\left[6(1+\Omega)+(8+23\Omega+3\Omega^2)\phi^2\right](\phi_{00}\phi_{11}-\phi^2_{01})\Omega_{11}\\
-\frac{1}{2}\phi\left[-12(1+\Omega)+(26+23\Omega+3\Omega^2)\phi^2\right](\Omega_{00}\Omega_{11}-\Omega^2_{01})\phi_{11}\\
+(1+\Omega)^2\left[-6+(20+3\Omega)\phi^2\right](\Omega_{00}\phi_{11}-2\Omega_{01}\phi_{01}+\Omega_{11}\phi_{00})\phi_{11}\\
-2(1+\Omega)^2\left(7+\Omega\right)\left(2+3\Omega\right)\phi(\phi_{00}\phi_{11}-\phi^2_{01})\phi_{11}\bigg\}
\end{eqnarray*}

\subsection{The $\partial_u$-equations hierarchy}\label{hierarchy}
The system of $\partial_u$-equations splits into two groups, referred to as G1 and G2. Each of these groups splits into a hierarchy, which is seen as follows:\\
\\
G1.1: $R_{000001}=0$,\\
G1.2: $t_{01}\,^{EF}\,_{00}e^{2}\,_{EF}=0$,\\
G1.3: $t_{01}\,^{EF}\,_{00}e^{1}\,_{EF}=0,R_{010001}=0,\Sigma_{00}=0,\Sigma_{0000}=0,\Sigma_{0001}=0,H_{0000}=0$,\\
G1.4: $R_{110001}=0,\Sigma_{0011}=0,\Phi_{10}=0,\Pi_{00}=0,H_{0001}=0$,\\
G1.5: $R_{000011}=0$,\\
G1.6: $R_{010011}=0$,\\
G1.7: $t_{11}\,^{EF}\,_{00}e^{1}\,_{EF}=0$,\\
G1.8: $t_{11}\,^{EF}\,_{00}e^{2}\,_{EF}=0$,\\
\\
G2.1: $H_{0011}=0$,\\
G2.2: $R_{110011}=0$,\\
G2.3: $H_{0111}=0$.\\
\\
For dealing with the unknowns we separate them into three groups, $x_1$, $x_2$ and $x_3$. The unknowns involved in G1 are collected in $x_1$, that is $x_1$ $=$ ($\hat{e}^1\,_{01}$, $\hat{e}^2\,_{01}$, $\hat{e}^1\,_{11}$, $\hat{e}^2\,_{11}$, $\hat{\Gamma}_{0100}$, $\hat{\Gamma}_{0101}$, $\hat{\Gamma}_{0111}$, $\hat{\Gamma}_{1100}$, $\hat{\Gamma}_{1101}$, $\Omega$, $\Omega_{00}$, $\Omega_{01}$, $\Omega_{11}$, $\phi_{01}$, $\phi_{11}$, $R$, $S_{1}$, $S_{2}$). The set $x_2$ consist of the unknowns of $x_1$ plus $\phi$, $S_0$ and $\phi_{00}$. The unknowns in G2 are collected in $x_3$, that is $x_3$ $=$ ($\hat{\Gamma}_{1111}$, $S_3$, $S_4$). So all the unknowns are included in the union of $x_2$ and $x_3$.\\
The hierarchy is defined because it makes the following procedure possible. If $\phi$ and $S_{0}$ are prescribed on $\{w=w_0\}$ then G1.1 reduces to an ODE. Once we have its solution, G1.2 reduces to an ODE. Given its solution, G1.3 reduces to a system of ODE's, with coefficients that are calculated by operations interior to $\{w=w_0\}$ from the previously known or calculated functions. This procedure continues till G1.8. So, given $\phi$ and $S_{0}$ on $\{w=w_0\}$ and the appropriate inital data on $U_0\cap\{w=w_0\}$, the set $x_1$ can be determined on $\{w=w_0\}$ by solving a sequence of ODE's in the independent variable $u$.\\
The process to be followed with G2 is very similar, with the exception that to solve G2.3 it is necesary to know also $\partial_{w}\phi_{11}$ on $\{w=w_0\}$, this problem can be overcome solving G1 recursively and then analysing G2.

\subsection{The `$\partial_w$-equations'}
Our initial data, $\phi$ and $S_0$, is prescribed on $W_0$, and to determine their evolution off $W_0$ we need the equation $A_{11}=0$, which reads
\begin{eqnarray*}
\partial_{w}\phi+\hat{e}^1\,_{11}\partial_{u}\phi+\hat{e}^2\,_{11}\partial_{v}\phi=\phi_{11},
\end{eqnarray*}
and the equation $H_{1(ABC)_{0}}+H_{0(ABC)_{1}}=0$, which is given by
\begin{eqnarray*}
\partial_{w}S_{0}-\partial_{u}S_{2}+\hat{e}^1\,_{11}\partial_{u}S_{0}+\hat{e}^2\,_{11}\partial_{v}S_{0}\\
-\frac{1}{3(1+\Omega-\phi^2)}\bigg\{\frac{}{}(2+5\Omega)\left[\phi\Omega_{11}-2(1+\Omega)\phi_{11}\right]\partial_{u}\phi_{00}\\
-4(1+\Omega)\left[\phi\Omega_{01}-2(1+\Omega)\phi_{01}\right]\partial_{u}\phi_{01}+(2+\Omega)\left[\phi\Omega_{00}-2(1+\Omega)\phi_{00}\right]\partial_{u}\phi_{11}\\
-2\Omega\left[\phi\Omega_{00}-2(1+\Omega)\phi_{00}\right]\left[\frac{1}{2u}(\partial_{v}\phi_{01}-\phi_{11})+\hat{e}^1_{01}\partial_{u}\phi_{01}+\hat{e}^2_{01}\partial_{v}\phi_{01}\right]\bigg\}\\
=4\hat{\Gamma}_{1101}S_{0}-4\hat{\Gamma}_{1100}S_{1}\\
+\frac{1}{1+\Omega-\phi^2}\bigg\{\frac{2}{3}\Omega\left[\phi\Omega_{00}-2(1+\Omega)\phi_{00}\right]\left[\hat{\Gamma}_{0111}\phi_{00}-\hat{\Gamma}_{0100}\phi_{11}\right]\\
+\frac{1}{2}\left[1+(-1+\Omega)\phi^2\right](\Omega_{11}S_{0}-\Omega_{00}S_{2})-\Omega^2\phi(\phi_{11}S_{0}-\phi_{00}S_{2})\\
+\frac{1}{3}\Omega\phi^2(\Omega_{11}S_{0}-2\Omega_{01}S_{1}+\Omega_{00}S_{2})-\frac{2}{3}\Omega(1+\Omega)\phi(\phi_{11}S_{0}-2\phi_{01}S_{1}+\phi_{00}S_{2})\\
+\frac{1}{9}(1+\Omega)\phi\left[\phi\Omega_{00}-2(1+\Omega)\phi_{00}\right]R\\
+2\left[\phi\Omega_{00}-2(1+\Omega)\phi_{00}\right](\Omega_{00}\phi_{11}-\Omega_{11}\phi_{00})\bigg\}\\
+\frac{2}{9(1+\Omega-\phi^2)^2}\bigg\{\frac{1}{4}\phi^2\left[-6+(20+3\Omega)\phi^2\right](\Omega_{00}\Omega_{11}-\Omega^2_{01})\Omega_{00}\\
-\frac{1}{2}(14+23\Omega+3\Omega^2)\phi^3(\Omega_{00}\phi_{11}-2\Omega_{01}\phi_{01}+\Omega_{11}\phi_{00})\Omega_{00}\\
+(1+\Omega)\left[6(1+\Omega)+(8+23\Omega+3\Omega^2)\phi^2\right](\phi_{00}\phi_{11}-\phi^2_{01})\Omega_{00}\\
-\frac{1}{2}\phi\left[-12(1+\Omega)+(26+23\Omega+3\Omega^2)\phi^2\right](\Omega_{00}\Omega_{11}-\Omega^2_{01})\phi_{00}\\
+(1+\Omega)^2\left[-6+(20+3\Omega)\phi^2\right](\Omega_{00}\phi_{11}-2\Omega_{01}\phi_{01}+\Omega_{11}\phi_{00})\phi_{00}\\
-2(1+\Omega)^2\left(7+\Omega\right)\left(2+3\Omega\right)\phi(\phi_{00}\phi_{11}-\phi^2_{01})\phi_{00}\bigg\}.
\end{eqnarray*}
These two equations will be referred to as the `$\partial_w$-equations'.

\subsection{The initial conditions for the $\partial_u$-equations}\label{initialCond}
The initial conditions for the $\partial_u$-equations follow from our gauge conditions \eqref{FrameGauge}, \eqref{GammaGauge} which imply
\begin{eqnarray*}
&& \hat{e}^a\,_{A1}|_{I}=0,\,\,\,a=1,2,\,\,A=0,1,\\
&& \hat{\Gamma}_{A1CD}|_{I}=0,\,\,\,A,C,D=0,1.
\end{eqnarray*}
From \eqref{condOmega} we get
\begin{eqnarray*}
&& \Omega|_{I}=0,\\
&& \Omega_{AB}|_{I}=0,\,\,\,A,B=0,1,\\
&& R|_{I}=-6-8\partial_{u}\phi|_{I}\partial_{u}\partial_{v}^2\phi|_{I}+4\left(\partial_{u}\partial_{v}\phi|_{I}\right)^2,
\end{eqnarray*}
and from the required spinorial behaviour in order to have analytic solutions, as discussed in Section \ref{ExpansionType},
\begin{eqnarray}\label{spinRelations}
&& \phi_{A1}|_{I}=\frac{1}{2}\partial_{u}\partial^{1+A}_{v}\phi|_{I},\,\,\,A=0,1,\\
\nonumber && S_{k}|_{I}=\frac{(4-k)!}{4!}\partial_{v}^kS_{0}|_{I},
\end{eqnarray}
where $A_{00}=0$ has been used.

\subsection{Calculating the formal expansion}
As the system of equations is overdetermined, we have chosen a subsystem in order to calculate a formal expansion of the solution. It will be shown later on that the expansion obtained using this subsystem lead to a formal solution of the full system of equations.\\
We prescribe $\phi$ and $S_0$ on $W_0$ as our datum and the initial conditions on $I$ for the $\partial_u$-equations are given in \ref{initialCond}. Following what has been said in \ref{hierarchy} we successively integrate the subsystems on G1 to determine all components of $x_1$ on $W_0$.\\
We give now an inductive argument involving G1 and the $\partial_w$-equations to show that $\partial_w^kx_2|_{W_0}$ can be determined for all $k$.\\
From our initial data and what has been said we know already $\partial_w^kx_2|_{W_0}$ for $k=0$. As inductive hypothesis we assume as known
\begin{equation*}
\partial^p_{w}x_2|_{W_{0}},\,\,\,0\leq p \leq k-1,\,\,\,k\geq 1.
\end{equation*}
Applying formally $\partial_w^{k-1}$ to the $\partial_w$-equations, and restricting them to $W_0$, we find $\partial^k_{w}\phi|_{W_{0}}$ and $\partial^k_{w}S_{0}|_{W_{0}}$ in terms of known functions. We apply formaly $\partial_w^{k}$ to G1. This is a system of PDE's where the unknowns are $\partial_w^kx_1$. Keeping the hierarchy and considering the functions that we already know on $W_0$, it agains becames a sequence of ODE's, which can be integrated on $W_0$ given the appropriate initial conditions on $I$.\\
The initial conditions for the frame coefficients and the connection coefficients are obtained from the gauge requirements \eqref{FrameGauge}, \eqref{GammaGauge} which imply
\begin{eqnarray*}
\partial^k_{w}\hat{e}^a\,_{A1}|_{I}=0,\,\,\,a=1,2,\,\,A=0,1,\\
\partial^k_{w}\hat{\Gamma}_{A1CD}|_{I}=0,\,\,\,A,C,D=0,1.
\end{eqnarray*}
From the spinorial behaviour as discussed in Section \ref{ExpansionType} we obtain the following set of initial conditions.
\begin{eqnarray*}
\partial^k_{w}\phi_{01}|_{I}=\frac{1}{2}\partial_{u}\partial_{v}\partial^k_{w}\phi|_{I},\\
\partial^k_{w}\phi_{11}|_{I}=\frac{1}{2}\partial_{u}\partial^2_{v}\partial^k_{w}\phi|_{I},\\
\partial^k_{w}S_{1}|_{I}=\frac{1}{4}\partial_{v}\partial^k_{w}S_{0}|_{I},\\
\partial^k_{w}S_{2}|_{I}=\frac{1}{12}\partial^2_{v}\partial^k_{w}S_{0}|_{I}.
\end{eqnarray*}
By restricting the equations $\Sigma_{11}=0$, $\Sigma_{11CD}=0$ and $\Pi_{11}=0$ to $U_0$ and using that $\Omega|_{I}=0$, $\Omega_{AB}|_{I}=0$ we get
\begin{eqnarray*}
\partial^k_{w}\Omega|_{I}=0,\\
\partial^k_{w}\Omega_{A1}|_{I}=0,\,\,\,A=0,1,
\end{eqnarray*}
\begin{equation}\label{dOmegaU0}
\partial_{w}\Omega_{00}|_{U_0}=\left.\left[-\frac{1}{3}R+\frac{8}{3(1-\phi^2)}(\phi\Omega_{00}\phi_{11}-2\phi_{00}\phi_{11}+2\phi^2_{01})\right]\right|_{U_0},
\end{equation}
\begin{eqnarray}\label{dRU0}
\partial_{w}R|_{U_0} & = & \bigg[3\Omega_{00}S_{4}+\frac{8}{3(1-\phi^2)}\bigg\{\frac{}{}(\phi\Omega_{00}-2\phi_{00})\partial_w\phi_{11}\\
\nonumber&& +4\phi_{01}\partial_{w}\phi_{01}-2\phi_{11}\partial_{w}\phi_{00}-\frac{1}{3}\phi R \phi_{11}\bigg\}\\
\nonumber&& +\frac{8}{3(1-\phi^2)^2}\phi_{11}\left\{(3+11\phi^2)\Omega_{00}\phi_{11}-28\phi(\phi_{00}\phi_{11}-\phi^2_{01})\right\}\bigg]\bigg|_{U_0}.
\end{eqnarray}
Applying $\partial_w^{k-1}$ to \eqref{dOmegaU0}, \eqref{dRU0} and evaluating them at $I$ by using the known functions from the inductive hypothesis and the previously stated initial conditions we get $\partial^k_{w}\Omega_{00}|_I$ and $\partial^k_{w}R|_I$.\\
Now we have all the needed initial conditions, thus we know
\begin{equation*}
\partial^k_{w}x_2|_{W_{0}}
\end{equation*}
and the induction step is completed.\\
The procedure with G2 is quite similar. Once we know $\partial^k_{w}x_2|_{W_{0}}$ for all $k$, G2.1 reduces to an ODE, which can be integrated on $U_0$ given the corresponding initial condition. Once we know the solution of G2.1, G2.2 also reduces to an ODE, and finally also G2.3 reduces to an ODE. The initial conditions for G2 are given in Section \ref{initialCond}.\\
The inductive step is very similar to the inductive step for $x_2$. We assume
\begin{equation*}
\partial^p_{w}x_3|_{W_{0}},\,\,\,0\leq p \leq k-1,\,\,\,k\geq 1,
\end{equation*}
to be known. We apply formaly $\partial_w^{k}$ to the equations in G2. If we stick to the hierarchy this system again reduces in the prescribed order to a system of ODE's for $\partial_w^kx_3$, which can be integrated given the corresponding initial conditions. Those are
\begin{eqnarray*}
&& \partial^k_{w}\hat{\Gamma}_{1111}|_{I}=0,\\
&& \partial^k_{w}S_{3}|_{I}=\frac{1}{24}\partial^3_{v}\partial^k_{w}S_{0}|_{I},\\
&& \partial^k_{w}S_{4}|_{I}=\frac{1}{24}\partial^4_{v}\partial^k_{w}S_{0}|_{I},
\end{eqnarray*}
obtained from \eqref{GammaGauge} and Section \ref{ExpansionType}.\\
Now we know
\begin{equation*}
\partial^k_{w}x_3|_{W_{0}}
\end{equation*}
and the induction step is complete.\\
If we now call $X$ any of the quantities included in $x_2$ and $x_3$, that is, $X$ comprises all the unknown quantities that we are solving for, the procedure just stated shows that we know $\partial^k_{w}X|_{W_{0}}$ for all $k$. Expanding these functions around $i=\{u=0,v=0,w=0\}$ gives
\begin{equation*}
\partial_u^m\partial_v^n\partial_w^pX|_i\,\,\,\forall\,\,m,n,p,
\end{equation*}
and the procedure gives a unique sequence of expansion coefficients for all the functions in $X$.
\begin{lem}\label{expansionCoefficients}
The procedure described above determines at the point $O=(u=0,v=0,w=0)$ from the data $\phi$, $S_0$, given on $W_0$ according to \eqref{expansionPhi}, \eqref{expansionS}, a unique sequence of expansion coefficients
\begin{equation*}
\partial_u^m\partial_v^n\partial_w^pf(O),\,\,\,m,n,p=0,1,2,...,
\end{equation*}
where $f$ stands for any of the functions $\hat{e}^a\,_{AB}$,$\hat{\Gamma}_{ABCD}$,$\phi$,$\phi_{AB}$,$\Omega$,$\Omega_{AB}$,$R$,$S_k$.\\
If the corresponding Taylor series are absolutely convergent in some neighbourhood $P$ of $O$, they define a solution to the equation $A_{00}=0$, to the $\partial_u$-equations and to the $\partial_w$-equations on $P$ which satisfies on $P\cap U_0$ equations \eqref{spinRelations} and $\Sigma_{11}=0$, $\Sigma_{11CD}=0$, $\Pi_{11}=0$.
\end{lem}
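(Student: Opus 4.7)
My plan is to prove the two parts of the statement in order: first the uniqueness of the sequence of expansion coefficients at $O$, and then the fact that absolutely convergent Taylor series built from those coefficients actually solve the listed subsystem.

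For uniqueness I will essentially read off the construction already laid out in the preceding subsections. The data $\phi,S_0$ on $W_0$ given by \eqref{expansionPhi}, \eqref{expansionS} fix all coefficients $\partial_u^m\partial_v^n\phi(O)$ and $\partial_u^m\partial_v^n S_0(O)$. I will then observe that the hierarchy G1 followed by G2, along $W_0$, is a cascade of linear first-order ODEs in $u$: once the previously processed unknowns are substituted into the coefficients, each equation takes the form $\partial_u y+(\alpha/u)y=g$ with $g$ already known and an initial value on $I$ prescribed in Section \ref{initialCond}. The indicial structure is arranged so that this initial value selects a unique formal power series solution in $u$ compatible with the expansion type prescribed by Lemma \ref{expansionType}. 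The induction on $k$ is then driven by applying $\partial_w^{k-1}$ to the $\partial_w$-equations and restricting to $W_0$ to obtain $\partial_w^k\phi|_{W_0}$ and $\partial_w^k S_0|_{W_0}$ in terms of already known data, and then by solving $\partial_w^k$G1, $\partial_w^k$G2 as the same type of singular ODE with initial data on $I$ assembled from $\partial_w^{k-1}$ of \eqref{dOmegaU0}, \eqref{dRU0} together with the spinorial identities of Section \ref{initialCond}. Iterating in $k$ yields a unique sequence $\partial_u^m\partial_v^n\partial_w^p f(O)$ for every unknown $f$.

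For the second part I will use analyticity and a vanishing-to-infinite-order argument. Assuming that the Taylor series converge absolutely on a neighbourhood $P$ of $O$, I denote the resulting holomorphic functions again by $f$. Since $\Omega(O)=0$ and $\phi(O)=0$, the denominators $1+\Omega-\phi^2$ and $1-\phi^2$ do not vanish near $O$, so (after shrinking $P$ if needed) the residual $\mathcal R$ of any equation in the list, defined as $\mathrm{LHS}-\mathrm{RHS}$, is real-analytic on $P$. By construction, for every $p\ge 0$ the restriction $\partial_w^p\mathcal R\bigr|_{W_0}$ has vanishing Taylor coefficients at $O$ of every order in $(u,v)$, so it vanishes identically on $P\cap W_0$. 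Consequently the Taylor expansion of $\mathcal R$ in the $w$-direction about any point of $P\cap W_0$ is identically zero, and analyticity forces $\mathcal R\equiv 0$ on $P$. The same argument on $I$ in place of $W_0$ shows that \eqref{spinRelations} and $\Sigma_{11}=0$, $\Sigma_{11CD}=0$, $\Pi_{11}=0$ hold throughout $P\cap U_0$.

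The principal obstacle I foresee is the careful book-keeping of the singular ODE step along $W_0$: at each stage of the hierarchy one must check that the explicit $1/u$ coefficients listed in Section \ref{dUequations} have indicial roots compatible with the initial condition prescribed on $I$, so that a unique formal power series solution exists. This is a compatibility check between the equations of Section \ref{dUequations} and the expansion type of Lemma \ref{expansionType}, and it proceeds block by block through G1.1, G1.2, \ldots, G2.3 in the prescribed order. Everything else reduces to standard manipulations with real-analytic functions.
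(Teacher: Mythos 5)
Your proposal is correct and follows essentially the same route as the paper, which proves this lemma simply by the construction laid out in the preceding subsections: the G1/G2 hierarchy of singular ODEs in $u$ along $W_0$ with initial data on $I$, the induction in $k$ via $\partial_w^{k-1}$ applied to the $\partial_w$-equations, and the initial conditions assembled from the gauge requirements, the spinorial relations, and the restrictions of $\Sigma_{11}=0$, $\Sigma_{11CD}=0$, $\Pi_{11}=0$ to $U_0$. Your analyticity/vanishing-to-infinite-order argument for the second half is the standard (implicit) step the paper relies on; the only cosmetic caveat is that for the equations containing $1/u$ terms one should argue with $u\,\mathcal{R}$ rather than $\mathcal{R}$ itself, which does not affect the conclusion.
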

By Lemma \ref{expansionType} we know that all spinor-valued functions should have a specific $v$-finite expansion type. The following lemma, whose proof is quite similiar to the proof in \cite{Friedrich07}, will be important for the convergence proof.
\begin{lem}\label{expansionTypeFields}
If the data $\phi$, $S_0$ are given on $W_0$ as in \eqref{expansionPhi}, \eqref{expansionS}, the formal expansions of the fields obtained in Lemma \ref{expansionCoefficients} correspond to ones of functions of $v$-finite expansion types given by
\begin{eqnarray*}
&& k_{\hat{e}^1\,_{AB}}=-A-B,\,\,\,k_{\hat{e}^2\,_{AB}}=3-A-B,\,\,\,AB=01,11,\\
&& k_{\hat{\Gamma}_{01AB}}=2-A-B,\,\,\,k_{\hat{\Gamma}_{11AB}}=1-A-B,\,\,\,A,B=0,1,\\
&& k_\phi=0,\,\,\,k_{\phi_{AB}}=2-A-B,\,A,B=0,1,\\
&& k_\Omega=0,\,\,\,k_{\Omega_{AB}}=2-A-B,\,A,B=0,1,\\
&& k_R=0,\\
&& k_{S_j}=4-j,\,j=0,1,2,3,4.
\end{eqnarray*}
\end{lem}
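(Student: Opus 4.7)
The plan is to establish the expansion types by induction, following the hierarchy of equations used in Lemma \ref{expansionCoefficients}. The argument has two layers: an outer induction on the order of the $w$-derivative restricted to $W_0$ (as in the inductive construction of Lemma \ref{expansionCoefficients}), and, at each order, an inner induction following the stages G1.1--G1.8 and G2.1--G2.3. The goal at each inductive step is to show that the restriction to $W_0$ of the field being solved for is of the claimed $v$-finite expansion type; once this is done for all $w$-derivatives, the corresponding expansion type at the origin follows.

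The first task is to record the calculus of $v$-finite expansion types under the operations that appear in the equations. If $f$, $g$ are of expansion types $k_f$, $k_g$, then $k_{fg}\leq k_f+k_g$, $k_{f+g}\leq \max(k_f,k_g)$, $k_{\partial_v f}=k_f-1$, $k_{\partial_u f}=k_f+2$, $k_{\partial_w f}=k_f$, $k_{vf}=k_f+1$, and, whenever $f={\cal O}(u)$, the quotient $f/u$ is of type $k_f+2$. These rules are immediate from the definition and from the explicit Taylor expansion. Multiplication by $1/(1+\Omega-\phi^2)$ (expanded as a power series in $\Omega$ and $\phi^2$, both of type $0$) preserves the expansion type.

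Next I would verify that all initial data have the correct expansion types. The input $\phi|_{W_0}$ given by \eqref{expansionPhi} has $k_\phi=0$, and $S_0|_{W_0}$ given by \eqref{expansionS} has $k_{S_0}=4$, consistent with the claimed values. All initial conditions on $I$ listed in Section \ref{initialCond} are identically zero, except for $\phi|_I$, $R|_I$, the spinor relations for $\phi_{A1}|_I$ and $S_k|_I$, and later $\partial_w^k\Omega_{00}|_I$, $\partial_w^k R|_I$ coming from \eqref{dOmegaU0}, \eqref{dRU0}; in each case the expression has the expansion type claimed for the corresponding field (the factor $\partial_v$ in the spinor relations exactly accounts for the drop in type from $k_\phi$ to $k_{\phi_{01}}=1$ and from $k_{S_0}=4$ to $k_{S_1}=3$, etc.).

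The main step is to run the outer induction. Assume that $\partial_w^p X|_{W_0}$ has the claimed $v$-finite expansion types for all $p\leq k-1$. Applying $\partial_w^{k-1}$ to the $\partial_w$-equations and restricting to $W_0$ determines $\partial_w^k\phi|_{W_0}$ and $\partial_w^k S_0|_{W_0}$; a direct inspection of every term (using the rules above and the hypothesis) shows it has type $0$ and $4$ respectively. Then one applies $\partial_w^k$ to each equation of G1 in the prescribed order. For instance, the equation G1.1, which reads
\begin{equation*}
\partial_u\hat\Gamma_{0100}+\tfrac{2}{u}\hat\Gamma_{0100}-2\hat\Gamma_{0100}^2=\tfrac{1}{2}S_0,
\end{equation*}
has right-hand side of type $4$; since $\hat\Gamma_{0100}={\cal O}(u)$, both $\partial_u\hat\Gamma_{0100}$ and $\hat\Gamma_{0100}/u$ are of type $k_{\hat\Gamma_{0100}}+2=4$, confirming the balance. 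Proceeding through G1.2--G1.8 and then G2.1--G2.3, the same type-counting shows that each ODE is homogeneous in expansion type when the unknown is assigned the value predicted in the statement of the lemma. Combined with the vanishing initial conditions on $I$, this forces the solution of each ODE to be of the claimed type, closing the induction. The main difficulty, and the bulk of the verification, is the term-by-term bookkeeping in the more complicated equations involving $\Pi_{00}$, $H_{0(ABC)_k}$ and the $\partial_w$-equation for $S_0$, where one must carefully combine the factor $1/(1+\Omega-\phi^2)^2$, the singular $1/u$ terms, and the multilinear combinations of $\Omega_{AB},\phi_{AB},S_j$. This step is routine but tedious; it proceeds exactly as the analogous computation in \cite{Friedrich07}, the only novelty being the presence of $\phi$ and its derivatives, whose expansion types enter exactly as prescribed by Lemma \ref{expansionType}.
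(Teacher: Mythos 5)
Your overall strategy -- a double induction following the construction of Lemma \ref{expansionCoefficients}, driven by an arithmetic of $v$-finite expansion types applied term by term to the $\partial_u$- and $\partial_w$-equations -- is the intended one (the paper itself defers to the analogous argument in \cite{Friedrich07}), and your elementary rules for $\partial_u$, $\partial_v$, $\partial_w$, $f/u$ and $1/(1+\Omega-\phi^2)$ are correct. However, there is a genuine gap in your calculus: the product rule $k_{fg}\leq k_f+k_g$ is too weak for the terms in the equations that are components of \emph{contractions} of symmetric spinors, and several of the claimed types cannot be reached without a sharper rule. Concretely, $\Omega_{00}\Omega_{11}-\Omega_{01}^2=\tfrac{1}{2}\Omega_{AB}\Omega^{AB}$ is a scalar and hence of type $0$, whereas your rules only give $\max(2+0,1+1)=2$; similarly $\Omega_{11}S_0-2\Omega_{01}S_1+\Omega_{00}S_2=\Omega^{PQ}S_{PQ00}$ is the $00$-component of a symmetric $2$-index spinor and has type $2$, not the naive $4$. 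This already bites at the very first nontrivial check: the initial condition $R|_I=-6-16(\phi_{00}\phi_{11}-\phi_{01}^2)|_I$ must be \emph{constant} in $v$ for $k_R=0$, but your rules only give degree $\leq 2$ in $v$ (the cancellation is real: with $\phi_{00}|_I=a+bv+cv^2$ one finds $\tfrac12\phi_{00}\partial_v^2\phi_{00}-\tfrac14(\partial_v\phi_{00})^2=ca-\tfrac{b^2}{4}$, but it is not seen by $k_{fg}\leq k_f+k_g$). It bites again in the $\Pi_{00}$-equation, where terms such as $(\Omega_{00}\Omega_{11}-\Omega_{01}^2)\Omega_{00}$ get naive type $4$ while $k_{\partial_u R}=k_R+2=2$ is required, and likewise in the $\Sigma_{0011}$, $\Phi_{10}$ and $H$-equations for the bottom components $\Omega_{11}$, $\phi_{11}$, $S_3$, $S_4$.

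The missing ingredient is the statement that symmetrized contractions of restrictions of symmetric spinor-valued functions again obey the ``type $=$ rank minus number of indices set to $1$'' rule of Lemma \ref{expansionType}, which can be strictly smaller than the sum of the types of the individual component factors. You need to add this to your list of rules and justify it for the \emph{formal} solution (either by verifying the explicit cancellations, as above, or by showing inductively that the relations $\partial_v\psi_k=(j-k)\psi_{k+1}$ propagate off $U_0$ in the appropriate sense so that the fields really are components of spinor-valued functions on $SL(N_c)$ order by order). Without it, your induction only yields types that are too large, and in particular does not establish $k_\Omega=k_R=k_\phi=0$ or $k_{\phi_{11}}=k_{\Omega_{11}}=k_{S_4}=0$ as asserted in the lemma. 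The remainder of your argument (initial data on $I$ and $W_0$, the hierarchy G1.1--G2.3, the role of the $U_0$-relations in making the $u^{-1}$-numerators vanish at $u=0$) is sound once this rule is in place.
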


\section{Convergence of the formal expansion}\label{convergence}
In the previous section we have seen how to calculate a formal expansion for $\hat{e}^a\,_{AB}$, $\hat{\Gamma}_{ABCD}$, $\phi$, $\phi_{AB}$, $\Omega$, $\Omega_{AB}$, $R$, $S_k$ given $\phi|_{W_0}$ and $S_0|_{W_0}$, or, what is the same, given the null data. From Lemma \ref{estimatesNullData} we know which are the necessary conditions on the null data in order to have analytic solutions of the conformal field equations. In this section we show that those conditions, \eqref{estimates-psi} and \eqref{estimates-Psi}, are also sufficient for the formal expansion determined in the previous section to be absolutely convergent.\\
So we start considering the abstract null data as given by two sequencies
\begin{equation*}
\hat{\cal D}^\phi_n=\{\psi_{A_1B_1},\psi_{A_2B_2A_1B_1},\psi_{A_3B_3A_2B_2A_1B_1},...\},
\end{equation*}
\begin{equation*}
\hat{\cal D}^S_n=\{\Psi_{A_2B_2A_1B_1},\Psi_{A_3B_3A_2B_2A_1B_1},\Psi_{A_4B_4A_2A_3B_3A_2B_2A_1B_1},...\},
\end{equation*}
of totally symmetric spinors satisfying the reality condition \eqref{realityCond} and we construct $\phi|_{W_0}$ and $S_0|_{W_0}$, by setting in the expansions \eqref{expansionPhi},\eqref{expansionS}
\begin{equation*}
D^*_{(A_mB_m}...D^*_{A_1B_1)}\phi(i)=\psi_{A_mB_m...A_1B_1},\,\,\,m\geq 1,
\end{equation*}
\begin{equation*}
D^*_{(A_pB_p}...D^*_{A_1B_1}S^*_{ABCD)}(i)=\Psi_{A_mB_m...A_1B_1ABCD},\,\,\,m\geq 0.
\end{equation*}
Observing Lemma \ref{estimatesNullData}, one finds as a necessary condition for the functions $\phi,S_0$ on $W_0$ to determine an analytic solution to the conformal static vacuum field equations that its non-vanishing Taylor coefficients at the point $O$ satisfy estimates of the form
\begin{equation}\label{firstEstimatePhi}
|\partial_u^m\partial_v^n\phi(0)|\leq\binom{2m}{n}m!n!\frac{M}{r^m},\,\,\,m\geq 0,\,\,\,0\leq n\leq 2m,
\end{equation}
\begin{equation}\label{firstEstimateS}
|\partial_u^m\partial_v^nS_0(0)|\leq\binom{2m+4}{n}m!n!\frac{M}{r^m},\,\,\,m\geq 0,\,\,\,0\leq n\leq 2m+4.
\end{equation}
This conditions are also sufficient for $\phi(u,v)$ and $S_0(u,v)$ to be holomorphic functions on $W_0$. So the null data gives rise to two analytic functions, $\phi$ and $S_0$, on $W_0$.\\
From $A_{00}=0$ we have $\phi_{00}=\partial_u\phi$, so having $\phi|_{W_0}$ we have $\phi_{00}|_{W_0}$, which is also an analytic function on $W_0$.\\
Following Lemma 6.1 in \cite{Friedrich07}, we can derive from \eqref{firstEstimatePhi},\eqref{firstEstimateS}, slightly different type of estimates for $\phi(u,v)$, $S_0(u,v)$, which are more convenient in our case.
\begin{lem}
Let $e$ be the Euler number. For given $\rho_\phi$, $\rho_{S_{0}}$, both in $\mathbb{R}$, such that $0<\rho_\phi<e^2$, $0<\rho_{S_{0}}<e^2$, there exist positive constants $\tilde{c}_\phi$, $r_\phi$, $\tilde{c}_{S_{0}}$, $r_{S_{0}}$, so that \eqref{firstEstimatePhi},\eqref{firstEstimateS}, imply estimates of the form
\begin{equation}\label{EstimatePhi}
|\partial^m_u\partial^n_v\phi|\leq \tilde{c}_\phi \frac{r^{m-1}_\phi m! \rho^n_\phi n!}{(m+1)^2 (n+1)^2},\,\,\,m\geq 0,\,\,\,0\leq n\leq 2m,
\end{equation}
\begin{equation}\label{EstimateS}
|\partial^m_u\partial^n_vS_0|\leq \tilde{c}_{S_{0}} \frac{r^m_{S_{0}} m! \rho^n_{S_{0}} n!}{(m+1)^2 (n+1)^2},\,\,\,m\geq 0,\,\,\,0\leq n\leq 2m+4.
\end{equation}
\end{lem}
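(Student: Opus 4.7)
The plan is to pass from the Cauchy-type bounds \eqref{firstEstimatePhi}, \eqref{firstEstimateS}, in which the binomial coefficient couples $m$ with $n$, to the cleaner factorized form \eqref{EstimatePhi}, \eqref{EstimateS} with a convergence-friendly $(m+1)^{-2}(n+1)^{-2}$ denominator. The only essential step is to bound $\binom{2m}{n}$ (and $\binom{2m+4}{n}$) by a product of a geometric factor in $m$ alone and $\rho_\phi^{\,n}$ (resp.\ $\rho_{S_0}^{\,n}$). The natural device is the binomial theorem: for any $\rho>0$, each term in $\sum_{k=0}^{2m}\binom{2m}{k}\rho^{-k}=(1+\rho^{-1})^{2m}$ dominates no more than the whole sum, so for $0\leq n\leq 2m$,
\[
\binom{2m}{n}\leq \rho^{\,n}\left(\frac{1+\rho}{\rho}\right)^{\!2m}.
\]

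Applying this with $\rho=\rho_\phi$ in \eqref{firstEstimatePhi} gives
\[
|\partial_u^m\partial_v^n\phi(0)|\leq M\left(\frac{(1+\rho_\phi)^2}{r\,\rho_\phi^{2}}\right)^{\!m}m!\,n!\,\rho_\phi^{\,n}.
\]
Since $0\leq n\leq 2m$, I would next bound $(m+1)^2(n+1)^2\leq(m+1)^2(2m+1)^2$, a polynomial in $m$ alone. Inserting the factor $(m+1)^{-2}(n+1)^{-2}$ into the right-hand side and compensating, I would then select $r_\phi$ large enough that
\[
\frac{(1+\rho_\phi)^2}{r\,\rho_\phi^{2}\,r_\phi}<1,
\]
so that geometric decay beats polynomial growth and the sequence $M r_\phi(m+1)^2(2m+1)^2\bigl((1+\rho_\phi)^2/(r\rho_\phi^{2}r_\phi)\bigr)^m$ is bounded uniformly in $m$; its supremum then serves as $\tilde c_\phi$, yielding \eqref{EstimatePhi}. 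The hypothesis $\rho_\phi<e^2$ I expect to be a compatibility condition inherited from the later majorant-type arguments that consume the resulting bound, rather than from the present manipulation, and one checks it remains consistent with the $r_\phi$ just produced.

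The derivation of \eqref{EstimateS} from \eqref{firstEstimateS} is entirely parallel. The only change is that $\binom{2m}{n}$ is replaced by $\binom{2m+4}{n}$ with $0\leq n\leq 2m+4$. The same identity yields $\binom{2m+4}{n}\leq\rho_{S_0}^{\,n}\bigl((1+\rho_{S_0})/\rho_{S_0}\bigr)^{2m+4}$; the fixed factor $((1+\rho_{S_0})/\rho_{S_0})^4$ together with the shifted polynomial estimate $(n+1)^2\leq(2m+5)^2$ are absorbed into $\tilde c_{S_0}$, and one picks any $r_{S_0}>(1+\rho_{S_0})^2/(r\rho_{S_0}^2)$. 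I do not anticipate any real obstacle: the whole argument is bookkeeping comparing two majorant-style bounds, with the only substantive step being the choice of the binomial estimate—and the binomial theorem provides the natural one that precisely trades the asymmetric factor $\binom{2m}{n}$ for an expression factorized over $m$ and $n$.
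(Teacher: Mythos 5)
Your proof is correct, and it is worth noting that the paper itself supplies no argument for this lemma: it is imported wholesale from Lemma 6.1 of \cite{Friedrich07}. Your key inequality $\binom{2m}{n}\leq\rho^{n}\,(1+\rho^{-1})^{2m}$, read off from the binomial theorem, is exactly the right device: it trades the coupling factor $\binom{2m}{n}$ for a product of $\rho^{n}$ and a purely geometric factor in $m$, after which the polynomial loss $(m+1)^2(n+1)^2\leq(m+1)^2(2m+1)^2$ is absorbed by taking $r_\phi$ strictly larger than $(1+\rho_\phi)^2/(r\rho_\phi^{2})$ and letting $\tilde c_\phi$ be the supremum of the resulting bounded sequence; the $S_0$ case only adds the harmless constant $\bigl((1+\rho_{S_0})/\rho_{S_0}\bigr)^{4}$ and the shift $(n+1)^2\leq(2m+5)^2$. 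The threshold $\rho<e^2$ in the hypothesis evidently originates in the route taken in \cite{Friedrich07}, which passes through the cruder bound $\binom{2m}{n}\leq(2m)^{n}/n!$ together with a Stirling-type estimate $n!\geq(n/e)^{n}$ — hence the appearance of the Euler number in the statement. Your argument is more elementary and in fact establishes the conclusion for every $\rho>0$, so the restriction $\rho<e^2$ is vacuous for it; as you correctly observe, the binding restriction on $\rho$ in this paper only enters later, where Lemma \ref{mainEstimates} requires $\rho<\tfrac{1}{3}$, so nothing is lost by proving the stronger statement.
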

We can present our estimates.
\begin{lem}\label{mainEstimates}
Assume $\phi=\phi(u,v)$, $S_0=S_0(u,v)$ are holomorphic funtions defined on some open neighbourhood $U$ of $O=\{u=0,v=0,w=0\}$ in $W_0=\{w=0\}$ which have expansions of the form
\begin{equation*}
\phi(u,v)=\sum_{m=0}^\infty\sum_{n=0}^{2m}\psi_{m,n}u^mv^n,
\end{equation*}
\begin{equation*}
S_0(u,v)=\sum_{m=0}^\infty\sum_{n=0}^{2m+4}\Psi_{m,n}u^mv^n,
\end{equation*}
so that its Taylor coefficients at the point $O$ satisfy estimates of the type \eqref{EstimatePhi},\eqref{EstimateS} with some positive constants $\tilde{c}_\phi$, $r_\phi$, $\tilde{c}_{S_{0}}$, $r_{S_{0}}$, and $\rho_\phi<\tfrac{1}{3}$, $\rho_{S_{0}}<\tfrac{1}{3}$. Then there exist positive constants
\begin{equation*}
r,\,\rho,\,c_{\hat{e}^a\,_{AB}},\,c_{\hat{\Gamma}_{ABCD}},\,c_{\phi},\,c_{\phi_{AB}},\,c_{\Omega},\,c_{\Omega_{AB}},\,c_R,\,c_{S_k}
\end{equation*}
so that the expansion coefficients determined from $\phi$ and $S_0$ in Lemma \ref{expansionCoefficients} satisfy for $m,n,p=0,1,2,...$
\begin{equation}\label{convergenceEstimates}
|\partial_u^m\partial_v^n\partial_w^pf(O)|\leq c_f \frac{r^{m+p+q_f}(m+p)!\rho^nn!}{(m+1)^2(n+1)^2(p+1)^2},
\end{equation}
where $f$ stands for any of the functions
\begin{equation*}
\hat{e}^a\,_{AB},\,\hat{\Gamma}_{ABCD},\,\phi,\,\phi_{AB},\,\Omega,\,\Omega_{AB},\,R,\,S_k, 
\end{equation*}
and
\begin{equation*}
q_{\hat{e}^a\,_{AB}}=q_{\hat{\Gamma}_{ABCD}}=q_{\phi}=q_{\Omega}=q_{\Omega_{AB}}=-1,\,\,\,\,q_{\phi_{AB}}=q_R=q_{S_k}=0.
\end{equation*}
\end{lem}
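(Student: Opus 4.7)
The plan is to prove the estimates \eqref{convergenceEstimates} by a nested induction modeled on the argument in \cite{Friedrich07}, with outer induction on the ``radial'' order $m+p$ and inner induction along the equation hierarchy G1.1--G1.8 and G2.1--G2.3 at fixed $m+p$. The base case $m+p=0$ is trivial: the relevant coefficients at $O$ are fixed by the initial conditions of Section \ref{initialCond} (gauge for $\hat{e}$, $\hat{\Gamma}$; the conditions \eqref{condOmega}, \eqref{initialCondR} for $\Omega,\Omega_{AB},R$; and the leading null data for $\phi,\phi_{AB},S_k$), so one can simply choose the constants $c_f$ large enough to accommodate them. For the induction step, the strategy is to read the $\partial_u$- and $\partial_w$-equations as recursion relations for $\partial_u^m\partial_v^n\partial_w^p f(O)$ in terms of derivatives of strictly lower order in $m+p$, and then show that the right-hand sides, when estimated by the inductive hypothesis, produce bounds of the same form with a constant that can be absorbed into $c_f$ by a suitable choice of $r$ and $\rho$.

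The technical backbone is an auxiliary ``Cauchy product'' lemma: if two formal series in $(u,v,w)$ centered at $O$ have Taylor coefficients bounded by \eqref{convergenceEstimates} with constants $(c_a,q_a)$ and $(c_b,q_b)$, then the coefficients of their product are bounded by a similar expression with constant of the form $K\,c_ac_b$ and exponent $q_a+q_b$, where $K$ only depends on $\sum_{k\geq 1}(k+1)^{-2}$ and similar convergent sums. This is the standard majorant estimate for the weight $r^{m+p}\rho^n(m+p)!n!/((m+1)^2(n+1)^2(p+1)^2)$; the factors $(k+1)^{-2}$ are precisely what makes the convolutions finite. One should also note that multiplication by $u$, $v$, $w$ and differentiation in $u$, $v$, $w$ each only change the constants by a factor bounded by a numerical constant (using $(m+2)/(m+1)^2\leq 1$ and similar), which is why the ansatz is stable under the operations appearing in the equations. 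Since $\Omega(O)=0$ and $\phi(O)=0$, the denominators $(1+\Omega-\phi^2)^{-1}$ and $(1+\Omega-\phi^2)^{-2}$ are given by convergent geometric series whose coefficients are again controlled via the product lemma, provided $c_\Omega\cdot r$ and $c_\phi\cdot r$ are chosen sufficiently small (which is permissible, since the overall $r$ can be shrunk after $c_f$ are fixed).

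With these tools in hand, one executes the inductive step as follows. First, applying $\partial_w^{p-1}$ to the $\partial_w$-equations and evaluating on $W_0$ expresses $\partial_w^p\phi$ and $\partial_w^p S_0$ in terms of quantities of order $\leq m+p-1$; these are estimated using the inductive hypothesis, the product lemma, and Step~2. Second, applying $\partial_w^p$ to the hierarchy G1.1--G1.8 produces, in the prescribed order, equations of the schematic form $\partial_u F + (k/u)F = N$, where $N$ depends only on quantities already estimated at the current stage. Inserting the $v$-finite expansion types from Lemma \ref{expansionTypeFields} into these ODEs yields explicit polynomial recursions for the Taylor coefficients $\partial_u^m\partial_v^n\partial_w^p F(O)$ in which the singular factor $1/u$ is absorbed (the homogeneous operator $\partial_u+k/u$ acts on a monomial $u^m v^n$ as $(m+k)\,u^{m-1}v^n$, and the condition on the expansion type of $F$ guarantees $m+k\neq 0$ when it matters). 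The remaining initial conditions on $I$ come from the gauge together with the restrictions of $\Sigma_{11}=0$, $\Sigma_{11CD}=0$, $\Pi_{11}=0$ to $U_0$ as in Section \ref{initialCond}, and each of these is again a polynomial in lower-order coefficients. Third, one treats G2.1--G2.3 in the same fashion. Finally, combining the inductive estimates on the right-hand sides through the product lemma, one verifies that each $\partial_u^m\partial_v^n\partial_w^p f(O)$ satisfies \eqref{convergenceEstimates} with the chosen $c_f$, completing the step.

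The main obstacle is bookkeeping rather than genuine novelty: one must simultaneously (i) keep track of the $v$-finite expansion types so that division by $u$ in the $\partial_u$-equations never produces a genuinely singular contribution at $O$, (ii) tame the nonlinearities arising from the denominators $(1+\Omega-\phi^2)^{-k}$ and from the large number of quadratic and cubic terms in $\phi$, $\Omega$, $\phi_{AB}$, $\Omega_{AB}$ that appear in the stationary case (compared with Friedrich's static case, where the $\phi$-sector is absent), and (iii) ensure that the hierarchical structure is strictly respected so that every right-hand side truly involves only quantities of order $\leq m+p-1$ in the radial induction. Once the auxiliary product lemma is in place, each individual verification is routine, but the sheer number of terms means the inductive step is bookkeeping-heavy. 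As indicated in the text, the actual execution parallels Section~6 of \cite{Friedrich07} very closely, with the added $\phi$-sector handled by the same mechanism as the $S_{ABCD}$-sector there.
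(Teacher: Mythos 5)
Your overall architecture --- an induction organized around the equation hierarchy, a majorant/product lemma for the weight $r^{m+p}(m+p)!\,\rho^n n!/\big((m+1)^2(n+1)^2(p+1)^2\big)$, geometric-series control of $(1+\Omega-\phi^2)^{-k}$, and a final adjustment of the constants so that every correction factor is $\leq 1$ --- is exactly the strategy of the paper's proof (itself modeled on Friedrich's static case), and most of your outline matches it. However, two concrete steps as you state them would fail. First, your outer induction variable is wrong. You claim that applying $\partial_w^{p-1}$ to the $\partial_w$-equations expresses $\partial_w^p\phi$ and $\partial_w^p S_0$ in terms of quantities of radial order $\leq m+p-1$; for $S_0$ this is false, because the equation $H_{1(ABC)_0}+H_{0(ABC)_1}=0$ contains the term $\partial_u S_2$, so $\partial_u^m\partial_v^n\partial_w^p S_0(O)$ involves $\partial_u^{m+1}\partial_v^n\partial_w^{p-1}S_2(O)$, which has the \emph{same} total order $m+p$. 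An induction on $m+p$ is therefore circular at this point. The paper instead inducts on the $w$-order $p$ alone: at stage $p$ one first obtains $\partial_w^p\phi|_{W_0}$ and $\partial_w^p S_0|_{W_0}$ for \emph{all} $u$- and $v$-orders from data of $w$-order $\leq p-1$ (which covers $\partial_u^{m+1}\partial_w^{p-1}S_2$), and only then runs the inner induction on $m$ through G1 and G2.

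Second, you propose to control the geometric series for $(1+\Omega-\phi^2)^{-1}$ by making ``$c_\Omega\cdot r$ and $c_\phi\cdot r$ sufficiently small'', achieved by \emph{shrinking} $r$ after the $c_f$ are fixed. This is backwards. Because $q_\Omega=q_\phi=-1$, each factor of $\Omega$ or $\phi$ contributes $c_\Omega C^3/r$ resp.\ $c_\phi C^3/r$ to a product estimate, so the series is tamed by taking $r$ \emph{large}; the paper requires $r\geq C^3\big[c_\Omega+(c_\Omega^2+2c_\phi^2)^{1/2}\big]$ and, at the very end, chooses $r$ large enough that all residual $\alpha_i/r^i$ corrections are absorbed and every $A_f\leq 1$. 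Shrinking $r$ is moreover not permissible: the hypotheses only guarantee the data estimates for $r\geq\max\{r_\phi,r_{S_0},r_{\phi_{00}}\}$ (Cauchy-type bounds are stable under increasing $r$, not decreasing it), and all the nonlinear contributions generated in the induction carry positive powers of $1/r$ that would blow up as $r\to 0$. With the induction reorganized by $w$-order and with $r$ sent large rather than small, the remainder of your outline does coincide with the paper's proof.
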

\begin{remark}
Taking into account the $v$-finite expansion types of the functions $f$ (Lemma \ref{expansionTypeFields}), we can replace the right hand sides in the estimates above by zero if $n$ is large enough relative to $m$. This will not be pointed out at each step and for convenience the estimates will be written as above.
\end{remark}
We take the following four lemmas from \cite{Friedrich07}. The first states the necessary part of the estimates, and the other three are needed in order to manipulate the estimates in the proof of Lemma \ref{mainEstimates}.
\begin{lem}
If $f$ is holomorphic near $O$, there exist positive constants $c$, $r_0$, $\rho_0$ such that
\begin{equation*}
|\partial_u^m\partial_v^n\partial_w^pf(O)|\leq c \frac{r^{m+p}(m+p)!\rho^nn!}{(m+1)^2(n+1)^2(p+1)^2},\,\,\,m,n,p=0,1,2,...
\end{equation*}
for any $r\geq r_0$, $\rho\geq\rho_0$. If in addition $f(0,v,0)=0$, the constants can be chosen such that
\begin{equation*}
|\partial_u^m\partial_v^n\partial_w^pf(O)|\leq c \frac{r^{m+p-1}(m+p)!\rho^nn!}{(m+1)^2(n+1)^2(p+1)^2},\,\,\,m,n,p=0,1,2,...
\end{equation*}
for any $r\geq r_0$, $\rho\geq\rho_0$.
\end{lem}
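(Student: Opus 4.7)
The estimate is a refinement of the standard Cauchy inequalities rewritten in a convenient form, so the plan is to begin with the classical Cauchy estimates on a polydisc and then absorb everything else into the constants $c, r_0, \rho_0$. Concretely, since $f$ is holomorphic near $O$, there is a closed polydisc $\{|u|,|v|,|w|\le R_0\}$ on which $|f|\le M$, and Cauchy's formula gives
\[
|\partial_u^m\partial_v^n\partial_w^p f(O)| \leq \frac{M\,m!\,n!\,p!}{R_0^{m+n+p}}.
\]
The elementary inequality $m!\,p!\le(m+p)!$ already replaces $m!\,p!$ by the desired $(m+p)!$, so what remains is to convert $R_0^{-(m+n+p)}$ into $r^{m+p}\rho^n/((m+1)^2(n+1)^2(p+1)^2)$ up to a multiplicative constant.

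This conversion is the second step and is routine. Fix any $\alpha>1$ and use the elementary fact that there exists $K_\alpha>0$ with $(k+1)^2\le K_\alpha\,\alpha^k$ for every $k\ge 0$. Choose $r_0,\rho_0>0$ large enough that $r_0 R_0\ge\alpha$ and $\rho_0 R_0\ge\alpha$. Then for any $r\ge r_0$ and $\rho\ge\rho_0$,
\[
\frac{(m+1)^2(n+1)^2(p+1)^2}{R_0^{m+n+p}}\le\frac{K_\alpha^{3}\,\alpha^{m+n+p}}{R_0^{m+n+p}}\le K_\alpha^{3}\,r^{m+p}\rho^{n},
\]
and combining this with the Cauchy bound and setting $c=MK_\alpha^{3}$ yields the first inequality.

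For the second inequality I would exploit the vanishing condition $f(0,v,0)\equiv 0$ to obtain a decomposition $f(u,v,w)=u\,g_1(u,v,w)+w\,g_2(v,w)$ with $g_1,g_2$ holomorphic on a slightly smaller polydisc of radius $R_1<R_0$. Here $g_1$ comes from writing $f(u,v,w)-f(0,v,w)=u\,g_1$, and $g_2$ comes from $f(0,v,w)=w\,g_2$, which is holomorphic precisely because $f(0,v,0)$ vanishes identically. Differentiating at $O$ and noting that the remaining $u$- and $w$-factors kill themselves, one obtains for $m+p\ge 1$
\[
\partial_u^m\partial_v^n\partial_w^p f(O)=m\,\partial_u^{m-1}\partial_v^n\partial_w^p g_1(O)+p\,\partial_u^m\partial_v^n\partial_w^{p-1}g_2(O),
\]
while $\partial_v^n f(O)=0$ when $m=p=0$, so the claimed estimate holds trivially there. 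Applying Cauchy to $g_1,g_2$ on the polydisc of radius $R_1$ produces a bound of the form $\text{const}\cdot m!\,n!\,p!/R_1^{m+n+p-1}$, i.e.\ one factor of $R_1$ cheaper than before, and repeating the absorption argument of the previous paragraph yields the improved estimate with $r^{m+p-1}$.

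The whole argument is routine and I do not foresee a substantive obstacle; the only step requiring mild care is the decomposition in the second part, where one must verify that $g_1,g_2$ are genuinely holomorphic—this relies on $f$ vanishing on the full complex line $\{u=w=0\}$, not just at the origin—and properly track the shrunk polydisc radius through the final absorption.
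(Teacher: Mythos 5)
Your proof is correct; the paper gives no proof of this lemma (it is quoted from Friedrich \cite{Friedrich07}), and the argument there is essentially the one you give: Cauchy estimates on a polydisc, $m!\,p!\le(m+p)!$, absorption of the polynomial factors into the geometric ones (e.g.\ via $(k+1)^2\le 4^k$), and, for the sharpened estimate, factoring out $u$ and $w$ using the hypothesis that $f$ vanishes on the whole line $\{u=w=0\}$. I see no gaps.
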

\begin{lem}
For any non-negative integer $n$ there is a positive constant $C$, $C>1$, independent of $n$ so that
\begin{equation*}
\sum_{k=0}^n\frac{1}{(k+1)^2(n-k+1)^2}\leq C\frac{1}{(n+1)^2}.
\end{equation*}
\end{lem}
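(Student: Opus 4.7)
The plan is a direct ``split the sum and bound'' argument, exploiting the symmetry $k\leftrightarrow n-k$ of the summand. First I would write
\begin{equation*}
\sum_{k=0}^{n}\frac{1}{(k+1)^{2}(n-k+1)^{2}}=\sum_{k=0}^{\lfloor n/2\rfloor}\frac{1}{(k+1)^{2}(n-k+1)^{2}}+\sum_{k=\lfloor n/2\rfloor+1}^{n}\frac{1}{(k+1)^{2}(n-k+1)^{2}},
\end{equation*}
and observe that the substitution $k\mapsto n-k$ carries the second sum into a sum of the same shape as the first, so it suffices to bound
\begin{equation*}
\Sigma_{1}(n):=\sum_{k=0}^{\lfloor n/2\rfloor}\frac{1}{(k+1)^{2}(n-k+1)^{2}}
\end{equation*}
and multiply by $2$ at the end.

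For the terms in $\Sigma_{1}(n)$ the index $k$ stays small while $n-k$ stays large, so the key step is the pointwise estimate: for $0\le k\le\lfloor n/2\rfloor$ one has $n-k\ge n/2$, hence $n-k+1\ge (n+1)/2$, and therefore
\begin{equation*}
\frac{1}{(n-k+1)^{2}}\le\frac{4}{(n+1)^{2}}.
\end{equation*}
Substituting this uniform bound on the large factor and pulling it outside the sum gives
\begin{equation*}
\Sigma_{1}(n)\le\frac{4}{(n+1)^{2}}\sum_{k=0}^{\lfloor n/2\rfloor}\frac{1}{(k+1)^{2}}\le\frac{4}{(n+1)^{2}}\sum_{k=0}^{\infty}\frac{1}{(k+1)^{2}}=\frac{4\,\zeta(2)}{(n+1)^{2}}=\frac{2\pi^{2}/3}{(n+1)^{2}}.
\end{equation*}

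Combining this with the symmetric bound for the second half of the original sum yields
\begin{equation*}
\sum_{k=0}^{n}\frac{1}{(k+1)^{2}(n-k+1)^{2}}\le\frac{4\pi^{2}/3}{(n+1)^{2}},
\end{equation*}
so the conclusion holds with any constant $C\ge 4\pi^{2}/3$, e.g.\ $C=5$, which is independent of $n$ and satisfies $C>1$ as required. There is no genuine obstacle here; the only mild subtlety is the verification that the convention $\lfloor n/2\rfloor$ together with $k\mapsto n-k$ covers all indices $0,\dots,n$ (at worst with a harmless double count of a single middle term when $n$ is even, which only enlarges the right-hand side by a bounded amount and does not affect the conclusion).
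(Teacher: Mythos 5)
Your argument is correct and is the standard symmetry-splitting proof; note that the paper itself gives no proof of this lemma, simply importing it (together with three neighbouring lemmas) from \cite{Friedrich07}, where essentially the same split-at-the-midpoint estimate is used. The split into $k\le\lfloor n/2\rfloor$ and $k>\lfloor n/2\rfloor$ is disjoint, so your worry about a double-counted middle term is moot (the reindexed second half is in fact a sub-sum of $\Sigma_1(n)$, so bounding the total by $2\Sigma_1(n)$ is safe). The only slip is numerical: $4\pi^2/3\approx 13.16$, so $C=5$ does \emph{not} satisfy $C\ge 4\pi^2/3$; you should quote e.g.\ $C=14$, or simply observe that any $C\ge 4\pi^2/3$ works, which is all the lemma requires since it only asserts the existence of some $C>1$ independent of $n$.
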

In the following $C$ will always denote the constant above.
\begin{lem}
For any integers $m$, $n$, $k$, $j$, with $0\leq k\leq m$, and $0\leq j\leq n$ resp. $0\leq j \leq n-1$ holds
\begin{equation*}
\binom{m}{k}\binom{n}{j}\leq\binom{m+n}{k+j}\mbox{   resp.   }\binom{m}{k}\binom{n-1}{j}\leq\binom{m+n}{k+j}.
\end{equation*}
\end{lem}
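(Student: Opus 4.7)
The plan is to prove the two inequalities by a straightforward combinatorial/algebraic argument based on Vandermonde's identity, with the second being an easy consequence of the first.

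For the first inequality, I would invoke the Vandermonde convolution
\begin{equation*}
\binom{m+n}{k+j} = \sum_{i=0}^{k+j}\binom{m}{i}\binom{n}{k+j-i},
\end{equation*}
where we adopt the standard convention that $\binom{a}{b}=0$ whenever $b<0$ or $b>a$. Since $0\le k\le m$ and $0\le j\le n$, the term corresponding to $i=k$ equals exactly $\binom{m}{k}\binom{n}{j}$, and every summand is a non-negative integer. Dropping the remaining non-negative summands yields the desired bound $\binom{m}{k}\binom{n}{j}\le\binom{m+n}{k+j}$. Equivalently, and perhaps more transparently, one can give a direct combinatorial injection: choosing a $k$-subset of $\{1,\dots,m\}$ and a $j$-subset of $\{m+1,\dots,m+n\}$ produces a distinguished $(k+j)$-subset of $\{1,\dots,m+n\}$, so the number of such ordered pairs is at most the total number of $(k+j)$-subsets.

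For the second inequality, I would reduce to the first by noting Pascal's rule $\binom{n}{j}=\binom{n-1}{j}+\binom{n-1}{j-1}$, which for $0\le j\le n-1$ immediately gives $\binom{n-1}{j}\le\binom{n}{j}$. Combined with the first inequality applied to $(m,n,k,j)$, this yields
\begin{equation*}
\binom{m}{k}\binom{n-1}{j}\le\binom{m}{k}\binom{n}{j}\le\binom{m+n}{k+j},
\end{equation*}
which is the claim. There is no real obstacle here — the only thing to be careful about is the range of validity of $j$ in the second statement (so that $\binom{n-1}{j}$ is meaningful as a non-zero quantity), which is precisely the hypothesis $0\le j\le n-1$.
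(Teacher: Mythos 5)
Your proof is correct and complete: the Vandermonde convolution $\binom{m+n}{k+j}=\sum_{i}\binom{m}{i}\binom{n}{k+j-i}$ contains the term $\binom{m}{k}\binom{n}{j}$ at $i=k$ (well-defined and contributing since $0\leq k\leq m$, $0\leq j\leq n$), and dropping the remaining non-negative summands gives the first inequality; the second then follows from $\binom{n-1}{j}\leq\binom{n}{j}$ via Pascal's rule, with the hypothesis $0\leq j\leq n-1$ making the range vacuous when $n=0$. The paper itself offers no proof of this lemma --- it is imported verbatim from Friedrich's work \cite{Friedrich07} --- so there is nothing to compare against in the text; your argument (equivalently, the injection sending a pair consisting of a $k$-subset of $\{1,\dots,m\}$ and a $j$-subset of $\{m+1,\dots,m+n\}$ to their union) is the standard one and fully settles the claim.
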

\begin{lem}\label{multiplEstimates}
Let $m$, $n$, $p$ be non-negative integers and $f_i$, $i=1,...,N$, be smooth complex valued functions of $u$, $v$, $w$ on some neighbourhood $U$ of $O$ whose derivatives satisfy on $U$ (resp. at a given point $p\in U$) estimates of the form
\begin{equation*}
|\partial_u^j\partial_v^k\partial_w^lf_i|\leq c_i \frac{r^{j+l+q_i}(j+l)!\rho^kk!}{(j+1)^2(k+1)^2(l+1)^2}
\end{equation*}
for $0\leq j\leq m$, $0\leq k\leq n$, $0\leq l\leq p$, with some positive constants $c_i$, $r$, $\rho$ and some fixed integers $q_i$ (independent of $j$, $k$, $l$). Then one has on $U$ (resp. at $p$) the estimates
\begin{equation}\label{estimateMultiplication}
|\partial_u^m\partial_v^n\partial_w^p(f_1\cdot ... \cdot f_N)|\leq C^{3(N-1)}c_1\cdot ... \cdot c_N \frac{r^{m+p+q_1+...q_N}(m+p)!\rho^nn!}{(m+1)^2(n+1)^2(p+1)^2}.
\end{equation}
\end{lem}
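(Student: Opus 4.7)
The plan is to prove Lemma \ref{multiplEstimates} by induction on $N$. The base case $N=1$ is exactly the hypothesis, with $C^{3(N-1)} = C^0 = 1$. For the induction step, it suffices to handle $N=2$: once we know the product of two functions satisfies an estimate of the claimed form with the exponent of $C$ increasing by $3$, the general case follows by grouping $f_1 \cdots f_{N+1} = (f_1 \cdots f_N)\cdot f_{N+1}$ and applying the inductive hypothesis to the first factor (with constant $C^{3(N-1)} c_1 \cdots c_N$ and exponent $q_1+\cdots+q_N$) and the two-factor estimate, yielding the new exponent $C^{3} \cdot C^{3(N-1)} = C^{3N}$, as required.

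For the $N=2$ case, I would apply the Leibniz rule to $\partial_u^m \partial_v^n \partial_w^p(f_1 f_2)$, obtaining the triple sum
\begin{equation*}
\sum_{j=0}^{m}\sum_{k=0}^{n}\sum_{l=0}^{p} \binom{m}{j}\binom{n}{k}\binom{p}{l}\,\partial_u^j\partial_v^k\partial_w^l f_1 \cdot \partial_u^{m-j}\partial_v^{n-k}\partial_w^{p-l} f_2.
\end{equation*}
Substituting the individual bounds, the powers of $r$ and $\rho$ collapse to $r^{m+p+q_1+q_2}\rho^n$, the factor $\binom{n}{k} k!(n-k)! = n!$ comes out, and the key combinatorial step is to bound
\begin{equation*}
\binom{m}{j}\binom{p}{l}(j+l)!(m-j+p-l)! \leq (m+p)!,
\end{equation*}
which is exactly the content of the binomial inequality $\binom{m}{j}\binom{p}{l}\leq\binom{m+p}{j+l}$ stated in the preceding lemma. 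What remains after factoring out $c_1 c_2 \, r^{m+p+q_1+q_2}\rho^n (m+p)!\, n!$ is the product of three independent one-dimensional sums of the form $\sum_{k=0}^{n} \frac{1}{(k+1)^2(n-k+1)^2}$ (in the variables $j$, $k$, $l$), and each is bounded by $C/(m+1)^2$, $C/(n+1)^2$, $C/(p+1)^2$ respectively, using the sum estimate lemma. Collecting these three factors yields exactly $C^3$ times the claimed right-hand side.

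No step here is genuinely hard: the only real bookkeeping is to verify that the $r$- and $\rho$-exponents and the $q_i$-offsets add up correctly, and that one only ever needs the three sum-estimate applications and the single binomial rearrangement. The mildest subtlety is that the indices $j,l$ and the associated factorials $(j+l)!$ and $(m-j+p-l)!$ are linked across the two variables $u,w$, so the binomial inequality must be applied jointly in $(j,l)$, rather than separately; by contrast, the $k$-sum for the $v$-derivatives separates cleanly because the $\rho^k k!$ structure depends only on $k$. Once this grouping is made, the inductive passage from $N$ to $N+1$ is mechanical and produces the stated power $C^{3(N-1)}$ in the final estimate.
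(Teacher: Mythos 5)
Your proof is correct and is essentially the intended argument: the paper takes this lemma from Friedrich's work without reproving it, but the two auxiliary lemmas stated immediately before it (the convolution bound $\sum_k (k+1)^{-2}(n-k+1)^{-2}\leq C(n+1)^{-2}$ and the binomial inequality $\binom{m}{j}\binom{p}{l}\leq\binom{m+p}{j+l}$) are exactly the ingredients you use, and your accounting of the joint $(j,l)$ grouping versus the separable $k$-sum, as well as the induction from $N=2$, matches the standard proof; note also that the remark following the lemma (replacing $C^{3(N-1)}$ by $C^{(3-r)(N-1)}$ when $r$ of $m,n,p$ vanish) falls out of your three-sum structure, confirming the approach.
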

\begin{remark}
This lemma remains true if $m$, $n$, $p$ are replaced in \eqref{estimateMultiplication} by integers $m'$, $n'$, $p'$ with $0\leq m'\leq m$, $0\leq n'\leq n$, $0\leq p'\leq p$.\\
The factor $C^{3(N-1)}$ in \eqref{estimateMultiplication} can be replaced by $C^{(3-r)(N-1)}$ if $r$ of the integers $m$, $n$, $p$ vanish.
\end{remark}
\begin{proof}[Proof of Lemma \ref{mainEstimates}]
The proof is by induction, following the procedure which led to Lemma \ref{expansionCoefficients}. A general outline is as follows. We start leaving the choice of the constants $r$, $\rho$, $c_f$, open. We use the induction hypothesis and the equations that lead to Lemma \ref{expansionCoefficients} to derive estimates for the derivatives of the next order. These estimates are of the form
\begin{equation}\label{generalEstimateForm}
|\partial_u^m\partial_v^n\partial_w^pf(O)|\leq c_f \frac{r^{m+p+q_f}(m+p)!\rho^nn!}{(m+1)^2(n+1)^2(p+1)^2}A_f,
\end{equation}
with certain constans $A_f$ which depend on $m$, $n$, $p$ and the constants $c_f$, $r$ and $\rho$. Sometimes superscripts will indicate to which order of differentiability particular constants $A_f$ refer. In the way we will have to make assumptions on $r$ to proceed with the induction step. We shall collect these conditions and the constants $A_f$, or estimates for them, and at the end it will be shown that the constants $c_f$, $r$ and $\rho$ can be adjusted so that all conditions are satisfied and $A_f\leq 1$. This will complete the induction proof.\\
In order not to write long formulas that do not add to the understanding of the procedure, we state here some properties that are used to simplify the estimates:
\begin{itemize}
\item{As a corollary of Lemma \ref{multiplEstimates} we have:\\
If
\begin{equation*}
|\partial_u^j\partial_v^k\partial_w^lg|\leq c_g \frac{r^{j+l-1}(j+l)!\rho^kk!}{(j+1)^2(k+1)^2(l+1)^2}
\end{equation*}
for $0\leq j\leq m$, $0\leq k\leq n$, $0\leq l\leq p$, where $g$ is $\phi$ or $\Omega$, and if $r>\tfrac{C^3}{2}[c_{\Omega}+(c_{\Omega}^2+4c_{\phi}^2)^\frac{1}{2}]$, then
\begin{eqnarray*}
&& \left|\partial_u^m\partial_v^n\partial_w^p\left(\frac{1}{1+\Omega-\phi^2}\right)\right|\\
&& \leq \frac{1}{C^3}\frac{1}{1-\tfrac{C^3}{r}\left(c_{\Omega}+\tfrac{C^3}{r}c_{\phi}^2\right)} \frac{r^{m+p}(m+p)!\rho^nn!}{(m+1)^2(n+1)^2(p+1)^2}.
\end{eqnarray*}
}
\item{If $r\geq C^3[c_{\Omega}+(c_{\Omega}^2+2c_{\phi}^2)^\frac{1}{2}]$ then
\begin{equation}\label{estimateF}
\frac{1}{1-\tfrac{C^3}{r}\left(c_{\Omega}+\tfrac{C^3}{r}c_{\phi}^2\right)}\leq 2.
\end{equation}
}
\item{After calculating the estimates and using \eqref{estimateF} we find that all the $A$'s satisfy inequalities of the form
\begin{equation*}
A\leq\alpha+\sum_{i=1}^9 \frac{\alpha_i}{r^i},
\end{equation*}
where $\alpha,\alpha_i$ are constants that don't depend on $r$. If $\alpha_i=0$ then we have to show that we can make $\alpha\leq 1$. If the $\alpha_i$'s not zero we can take a constant $a$, $0<a<1$, and require that $\alpha\leq a$ and then choose $r$ large enough such that $\sum_{i=1}^8 \frac{\alpha_i}{r^i}\leq 1-a$. In the estimates that follows, we shall not write the explicit expresions for the $\alpha_i$'s, as they do not play any role if we are able to make $r$ big enough at the end of the procedure.
}
\end{itemize}
From now on we consider that a function in a modulus sign is evaluated at the origin $O$.\\
From the analyticity of $\phi_{00}(u,v)$ we also get that, for given $\rho_{\phi_{00}}\in\mathbb{R}$, $0<\rho_{\phi_{00}}<\tfrac{1}{3}$, there exist positive constants $c_{\phi_{00}}$, $r_{\phi_{00}}$, such that
\begin{equation*}
|\partial^m_u\partial^n_v\phi_{00}|\leq c_{\phi_{00}} \frac{r^m_{\phi_{00}} m! \rho^n_{\phi_{00}} n!}{(m+1)^2 (n+1)^2},\,\,\,m\geq 0,\,\,\,0\leq n\leq 2m+2.
\end{equation*}
As $\phi(0,v)=0$ the inequalities \eqref{EstimatePhi},\eqref{EstimateS} are mantained if we change the constants for bigger constants. We choose
\begin{eqnarray}
\label{conditionsC1}& c_\phi=max\{\tilde{c}_{\phi},c_{\phi_{00}}\},\\
\label{conditionsC2}& c_{S_{0}}=max\left\{\tilde{c}_{S_{0}},\tfrac{64}{3}C^3c_{\phi_{00}}^2\right\}.
\end{eqnarray}
Also we require the constants $r,\rho$ to saisfy
\begin{eqnarray}\label{conditionsRrho}
\nonumber & r\geq max\{r_\phi,r_{S_{0}},r_{\phi_{00}}\}, \\
& \rho\geq max\{\rho_\phi,\rho_{S_{0}},\rho_{\phi_{00}}\},
\end{eqnarray}
but we leave the choice of the precise value open.
So we have
\begin{equation*}
|\partial^m_u\partial^n_v\partial_w^0\phi|\leq c_\phi \frac{r^{m-1} m! \rho^n n!}{(m+1)^2 (n+1)^2},\,\,\,m\geq 0,\,\,\,0\leq n\leq 2m,
\end{equation*}
\begin{equation*}
|\partial^m_u\partial^n_v\partial_w^0S_0|\leq c_{S_{0}} \frac{r^m m! \rho^n n!}{(m+1)^2 (n+1)^2},\,\,\,m\geq 0,\,\,\,0\leq n\leq 2m+4,
\end{equation*}
\begin{equation*}
|\partial^m_u\partial^n_v\partial_w^0\phi_{00}|\leq c_{\phi_{00}} \frac{r^m m! \rho^n n!}{(m+1)^2 (n+1)^2},\,\,\,m\geq 0,\,\,\,0\leq n\leq 2m+2.
\end{equation*}
From the frame properties $\hat{e}^a\,_{AB}|_{U_{0}}=0$, $\hat{\Gamma}_{ABCD}|_{U_{0}}=0$ follows
\begin{equation*}
|\partial_u^0\partial_v^n\partial_w^p\hat{\Gamma}_{ABCD}|=0,\,\,\,|\partial_u^0\partial_v^n\partial_w^p\hat{e}^a\,_{AB}|=0.
\end{equation*}
The conditions on the conformal factor, $\Omega|_I=0$, $\Omega_{AB}|_I=0$, give
\begin{equation*}
|\partial_u^0\partial_v^n\partial_w^0\Omega|=0,\,\,\,|\partial_u^0\partial_v^n\partial_w^0\Omega_{AB}|=0.
\end{equation*}
Using Lemma \ref{expansionType} we get the relations:
\begin{equation}\label{phiA1enU0}
\phi_{A1}|_{U_{0}}=\frac{1}{2}\partial^{1+A}_v\phi_{00}|_{U_{0}},\,\,\,A=0,1,
\end{equation}
\begin{equation}\label{SkenU0}
S_k|_{U_{0}}=\frac{(4-k)!}{4!}\partial^k_v S_0|_{U_{0}},\,\,\,k=1,2,3,4,
\end{equation}
which imply
\begin{eqnarray*}
|\partial^0_u\partial^n_v\partial^0_w\phi_{A1}| & \leq & \left\{ 
\begin{array}{c c}
  \frac{1}{2}c_{\phi_{00}}\frac{\rho^{n+1+A}(n+1+A)!}{(n+2+A)^2}, & n\leq 1-A\\
  0, & n>1-A\\ \end{array} \right\} \\
& = & c_{\phi_{A1}}\frac{\rho^n n!}{(n+1)^2}A^{m=0,p=0}_{\phi_{A1}},
\end{eqnarray*}
\begin{eqnarray}\label{AphiA1}
&& A^{m=0,p=0}_{\phi_{A1}}=\frac{1}{2}\frac{c_{\phi_{00}}}{c_{\phi_{A1}}}\rho^{1+A}h_{A,n}\leq \frac{1}{2}\frac{c_{\phi_{00}}}{c_{\phi_{A1}}}\rho^{1+A},\\
\nonumber&& h_{A,n}=\left\{
\begin{array}{c c}
\frac{(n+1+A)!}{n!}\frac{(n+1)^2}{(n+2+A)^2}, & 0\leq n\leq 1-A \\
0, & n>1-A
\end{array}
\right\}\leq 1,
\end{eqnarray}
and similarly
\begin{equation*}
|\partial^0_u\partial^n_v\partial^0_wS_k| \leq c_{S_{k}}\frac{\rho^n n!}{(n+1)^2}A^{m=0,p=0}_{S_k},
\end{equation*}
\begin{equation}\label{ASk}
A^{m=0,p=0}_{S_k}\leq \frac{c_{S_{0}}}{c_{S_{k}}}\rho^k.
\end{equation}
Taking into account that $R$ is a scalar and the initial condition $R|_{i}=-6-16\left(\phi_{00}\phi_{11}-\phi_{01}^2\right)|_i$, we get
\begin{eqnarray*}
|\partial^0_u\partial^n_v\partial^0_wR| & \leq & \left\{
\begin{array}{c c}
6+\frac{73}{36}\rho^2 c_{\phi_{00}}^2, & n=0 \\
0, & n>0
\end{array} \right\} \\
& = & c_{R}\frac{\rho^n n!}{(n+1)^2}A^{m=0,p=0}_{R},
\end{eqnarray*}
\begin{equation*}
A^{m=0,p=0}_{R}\leq \frac{1}{c_{R}}\left(6+\frac{73}{36}\rho^2 c_{\phi_{00}}^2\right).
\end{equation*}
We have obtained so far the estimates for $m=0$, $p=0$ and general $n$. Now we should consider the equations in G1 to get in an inductive form estimates for the quantities in $x_1$, that means, estimates for $|\partial_u^m\partial_v^n\partial_w^0x_1|$, considering as known estimates of this type for $|\partial_u^l\partial_v^n\partial_w^0x_1|$ with $0\leq l <m$. And once we have this estimates we should do the same procedure with G2 to get estimates for $|\partial_u^m\partial_v^n\partial_w^0x_3|$. These estimates, i.e. estimates for $p=0$, can be obtained from the estimates for general $p$ that appears later replacing $C^3$ by $C^2$ and $p$ by $0$. The estimates for general $p$ are also more restrictive, so we are not going to enumerate the estimates for $p=0$ here.\\
We continue with the induction procedure by considering that the estimates are satisfied for $|\partial_u^m\partial_v^n\partial_w^lX|$ for $0\leq l <p$, and try to determine conditions for performing the induction step.\\
We start by formally applying $\partial_u^m\partial_v^n\partial_w^{p-1}$ to the equation $A_{11}=0$ and taking the modulus at the origin. We get
\begin{equation*}
|\partial_u^m\partial_v^n\partial_w^p\phi|\leq |\partial_u^m\partial_v^n\partial_w^{p-1}\phi_{11}|+|\partial_u^m\partial_v^n\partial_w^{p-1}(\hat{e}^1\,_{11}\partial_u\phi)|+|\partial_u^m\partial_v^n\partial_w^{p-1}(\hat{e}^2\,_{11}\partial_v\phi)|.
\end{equation*}
To estimate the terms in the r.h.s. of this inequality we have, using the induction hypothesis,
\begin{equation*}
|\partial_u^m\partial_v^n\partial_w^{p-1}\phi_{11}|\leq c_{\phi_{11}}\frac{r^{m+p-1}(m+p-1)!\rho^nn!}{(m+1)^2(n+1)^2p^2},
\end{equation*}
\begin{eqnarray*}
&& |\partial_u^m\partial_v^n\partial_w^{p-1}(\hat{e}^1\,_{11}\partial_u\phi)|\\
&& \leq\sum_{j=0}^m\sum_{k=0}^n\sum_{l=0}^{p-1}\binom{m}{j}\binom{n}{k}\binom{p-1}{l}|\partial_u^j\partial_v^k\partial_w^l\hat{e}^1\,_{11}||\partial_u^{m-j+1}\partial_v^{n-k}\partial_w^{p-l-1}\phi|\\
&& \leq\sum_{j=0}^m\sum_{k=0}^n\sum_{l=0}^{p-1}\tfrac{\binom{m}{j}\binom{p-1}{l}}{\binom{m+p}{j+l}}\\
&& \times\frac{c_{\hat{e}^1\,_{11}}c_\phi r^{m+p-2}(m+p)!\rho^nn!}{(j+1)^2(k+1)^2(l+1)^2(m-j+2)^2(n-k+1)^2(p-l)^2}\\
&& \leq C^3 c_{\hat{e}^1\,_{11}}c_\phi \frac{r^{m+p-2}(m+p)!\rho^nn!}{(m+2)^2(n+1)^2p^2},
\end{eqnarray*}
and similarly
\begin{equation*}
|\partial_u^m\partial_v^n\partial_w^{p-1}(\hat{e}^2\,_{11}\partial_v\phi)|\leq C^3 c_{\hat{e}^2\,_{11}}c_\phi \frac{r^{m+p-3}(m+p-1)!\rho^n(n+1)!}{(m+1)^2(n+2)^2p^2}.
\end{equation*}
Using these inequalities and writting $|\partial_u^m\partial_v^n\partial_w^p\phi|$ in the form \eqref{generalEstimateForm}, we obtain
\begin{eqnarray*}
A_{\phi}^{p\geq 1} & = & \frac{1}{c_{\phi}}\bigg[\frac{(p+1)^2}{p^2(m+p)}c_{\phi_{11}}+\frac{(p+1)^2(m+1)^2}{p^2(m+2)^2}\frac{C^3}{r}c_{\hat{e}^1\,_{11}}c_{\phi}\\
&& \frac{(p+1)^2(n+1)^3}{p^2(m+p)(n+2)^2}\frac{C^3}{r^2}\rho c_{\hat{e}^2\,_{11}}c_{\phi}\bigg].
\end{eqnarray*}
Taking into account the $v$-finite expansion types of the terms involved, we see that $A_{\phi}^{p\geq 1}=0$ if $n>2m$, and thus
\begin{equation*}
A_{\phi}^{p\geq 1}\leq\frac{4}{c_{\phi}}\left(c_{\phi_{11}}+\frac{C^3}{r}c_{\hat{e}^1\,_{11}}c_{\phi}+2\frac{C^3}{r^2}\rho c_{\hat{e}^2\,_{11}}c_{\phi}\right)=4\frac{c_{\phi_{11}}}{c_{\phi}}+\sum_{i=1}^9 \frac{(\alpha_{\phi}^{p\geq 1})_i}{r^i}.
\end{equation*}
The procedure with the rest of the equations is similar to the one presented for the equation $A_{11}=0$, the only difference being that if an equation is singular with $u^{-1}$ terms we have first to multiply it by $u$, formally apply $\partial_u^{m+1}\partial_v^n\partial_w^{p-1}$, and then estimate the modulus. Therefore we shall not repeat the details that led from the equations to the estimates, as we shall not state the $v$-finite expansion type at each step. What we will state is which equation is used for deriving that particular estimate.\\
Applying formally $\partial_u^m\partial_v^n\partial_w^{p-1}$ to the equation $D_{11}\phi_{00}=D_{00}\phi_{11}$, which follows from $A_{00}=0$ and $A_{11}=0$, we obtain
\begin{equation*}
A_{\phi_{00}}^{p\geq 1}\leq 4\frac{c_{\phi_{11}}}{c_{\phi_{00}}}+\sum_{i=1}^9 \frac{(\alpha_{\phi_{00}}^{p\geq 1})_i}{r^i}.
\end{equation*}
Multiplying $H_{1(ABC)_{0}}+H_{0(ABC)_{1}}=0$ by $u$ and formaly aplying $\partial_u^{m+1}\partial_v^n\partial_w^{p-1}$ we get
\begin{eqnarray*}
A_{S_{0}}^{p\geq 1} & \leq & \frac{4}{c_{S_{0}}}\left[c_{S_2}+\tfrac{16}{3}C^3\left(c_{\phi_{00}}c_{\phi_{11}}+c_{\phi_{01}}c_{\phi_{01}}\right)\right]+\sum_{i=1}^9 \frac{(\alpha_{S_{0}}^{p\geq 1})_i}{r^i}.
\end{eqnarray*}
In the same way as we used \eqref{phiA1enU0}, \eqref{SkenU0} to obtain \eqref{AphiA1}, \eqref{ASk} we get
\begin{equation*}
A_{\phi_{A1}}^{m=0,p\geq 1}\leq \frac{1}{2}\frac{c_{\phi_{00}}}{c_{\phi_{A1}}}\rho^{1+A},
\end{equation*}
\begin{equation*}
A_{S_{k}}^{m=0,p\geq 1}\leq \frac{c_{S_{0}}}{c_{S_{k}}}\rho^k.
\end{equation*}
Restricting $\Sigma_{11}=0$ and $\Sigma_{11CD}=0$ to $U_{0}$ we find that on $U_0$
\begin{eqnarray*}
& \Omega=0,\,\,\Omega_{01}=0,\,\,\Omega_{11}=0,\\
& \partial_{w}\Omega_{00}=-\frac{1}{3}R+\frac{8}{3(1-\phi^2)}(\phi\Omega_{00}\phi_{11}-2\phi_{00}\phi_{11}+2\phi^2_{01}).
\end{eqnarray*}
Taking formal derivatives of these equations we get
\begin{eqnarray*}
A_{\Omega}^{m=0}=0, & A_{\Omega_{01}}^{m=0}=0, & A_{\Omega_{11}}^{m=0}=0,
\end{eqnarray*}
and
\begin{equation*}
A_{\Omega_{00}}^{m=0,p\geq 1} \leq \frac{4}{3}\frac{1}{c_{\Omega_{00}}}\left[c_{R}+32C^2\left(c_{\phi_{00}}c_{\phi_{11}}+c_{\phi_{01}}^2\right)\right]+\sum_{i=1}^9 \frac{(\alpha_{\Omega_{00}}^{m=0,p\geq 1})_i}{r^i}.
\end{equation*}
Restricting $\Pi_{11}=0$ to $U_{0}$ gives
\begin{eqnarray*}
\partial_{w}R = 3\Omega_{00}S_{4}+\frac{8}{1-\phi^2}\bigg[\frac{}{}-2\phi_{11}\partial_{w}\phi_{00}+4\phi_{01}\partial_{w}\phi_{01}+(\phi\Omega_{00}-2\phi_{00})\partial_w\phi_{11}\\
 -\frac{1}{3}\phi R \phi_{11}\bigg]+\frac{8}{3(1-\phi^2)^2}\phi_{11}\left[(3+11\phi^2)\Omega_{00}\phi_{11}-28\phi(\phi_{00}\phi_{11}-\phi^2_{01})\right],
\end{eqnarray*}
so that
\begin{equation*}
A_{R}^{m=0,p\geq 1} \leq \frac{64C^2}{c_{R}}(c_{\phi_{00}}c_{\phi_{11}}+c_{\phi_{01}}^2)+\sum_{i=1}^9 \frac{(\alpha_{R}^{m=0,p\geq 1})_i}{r^i}.
\end{equation*}
We complete the calculation of the $A$'s by using the $\partial_{u}$-equations. We have to calculate the estimates in the order given by the hierarchy presented in Section \ref{hierarchy} but for simplicity we present the estimates in the order the $\partial_u$-equations were stated in Section \ref{dUequations}.\\
\\
$t_{AB}\,^{EF}\,_{00}e^a\,_{EF}=0$:
\begin{eqnarray*}
A^{m\geq 1}_{\hat{e}^1\,_{01}} & \leq & \sum_{i=1}^9 \frac{(\alpha_{\hat{e}^1\,_{01}}^{m\geq 1})_i}{r^i}, \\
A^{m\geq 1}_{\hat{e}^2\,_{01}} & \leq & \frac{1}{2}\frac{c_{\hat{\Gamma}_{0100}}}{c_{\hat{e}^2\,_{01}}}+\sum_{i=1}^9 \frac{(\alpha_{\hat{e}^2\,_{01}}^{m\geq 1})_i}{r^i}, \\
A^{m\geq 1}_{\hat{e}^1\,_{11}} & \leq & \sum_{i=1}^9 \frac{(\alpha_{\hat{e}^1\,_{11}}^{m\geq 1})_i}{r^i}, \\
A^{m\geq 1}_{\hat{e}^2\,_{11}} & \leq & \frac{c_{\hat{\Gamma}_{1100}}}{c_{\hat{e}^2\,_{11}}}+\sum_{i=1}^9 \frac{(\alpha_{\hat{e}^2\,_{11}}^{m\geq 1})_i}{r^i}.
\end{eqnarray*}
$R_{AB00EF}=0$:
\begin{eqnarray*}
A^{m\geq 1}_{\hat{\Gamma}_{0100}} & \leq & \frac{2}{3}\frac{c_{S_{0}}}{c_{\hat{\Gamma}_{0100}}}+\sum_{i=1}^9 \frac{(\alpha_{\hat{\Gamma}_{0100}}^{m\geq 1})_i}{r^i}, \\
A^{m\geq 1}_{\hat{\Gamma}_{0101}} & \leq & \frac{c_{S_{1}}}{c_{\hat{\Gamma}_{0101}}}+\sum_{i=1}^9 \frac{(\alpha_{\hat{\Gamma}_{0101}}^{m\geq 1})_i}{r^i}, \\
A^{m\geq 1}_{\hat{\Gamma}_{0111}} & \leq & \frac{c_{S_{2}}}{c_{\hat{\Gamma}_{0111}}}+\frac{c_{R}}{6c_{\hat{\Gamma}_{0111}}}+\sum_{i=1}^9 \frac{(\alpha_{\hat{\Gamma}_{0111}}^{m\geq 1})_i}{r^i}, \\
A^{m\geq 1}_{\hat{\Gamma}_{1100}} & \leq & 2\frac{c_{S_{1}}}{c_{\hat{\Gamma}_{1100}}}+\sum_{i=1}^9 \frac{(\alpha_{\hat{\Gamma}_{1100}}^{m\geq 1})_i}{r^i}, \\
A^{m\geq 1}_{\hat{\Gamma}_{1101}} & \leq & \frac{4}{c_{\hat{\Gamma}_{1101}}}\left(c_{S_{2}}+\frac{1}{12}c_{R}\right)+\sum_{i=1}^9 \frac{(\alpha_{\hat{\Gamma}_{1101}}^{m\geq 1})_i}{r^i}, \\
A^{m\geq 1}_{\hat{\Gamma}_{1111}} & \leq & \frac{4c_{S_{3}}}{c_{\hat{\Gamma}_{1111}}}+\sum_{i=1}^9 \frac{(\alpha_{\hat{\Gamma}_{1111}}^{m\geq 1})_i}{r^i}.
\end{eqnarray*}
$\Sigma_{00}=0$:
\begin{equation*}
A^{m\geq 1}_{\Omega} \leq \sum_{i=1}^9 \frac{(\alpha_{\Omega}^{m\geq 1})_i}{r^i}.
\end{equation*}
$\Phi_{00}=0$:
\begin{equation*}
A^{m\geq 1}_{\phi_{01}} \leq \frac{c_{\phi_{00}}}{c_{\phi_{01}}}\rho+\sum_{i=1}^9 \frac{(\alpha_{\phi_{01}}^{m\geq 1})_i}{r^i}.
\end{equation*}
$\Phi_{10}=0$:
\begin{equation*}
A^{m \geq 1}_{\phi_{11}} \leq \frac{c_{\phi_{01}}}{c_{\phi_{11}}}\rho+\sum_{i=1}^9 \frac{(\alpha_{\phi_{11}}^{m\geq 1})_i}{r^i}.
\end{equation*}
$\Sigma_{00CD}=0$:
\begin{eqnarray*}
A^{m \geq 1}_{\Omega_{00}} & \leq & \sum_{i=1}^9 \frac{(\alpha_{\Omega_{00}}^{m\geq 1})_i}{r^i}, \\
A^{m \geq 1}_{\Omega_{01}} & \leq & \sum_{i=1}^9 \frac{(\alpha_{\Omega_{01}}^{m\geq 1})_i}{r^i}, \\
A^{m \geq 1}_{\Omega_{11}} & \leq & \frac{4}{3c_{\Omega_{11}}}\left[c_{R}+32C^3\left(c_{\phi_{00}}c_{\phi_{11}}+c_{\phi_{01}}^2\right)\right]+\sum_{i=1}^9 \frac{(\alpha_{\Omega_{11}}^{m\geq 1})_i}{r^i}.
\end{eqnarray*}
$\Pi_{00}=0$:
\begin{equation*}
A^{m\geq 1}_{R} \leq \frac{64C^3}{c_{R}}(c_{\phi_{00}}c_{\phi_{11}}+c_{\phi_{01}}^2)+\sum_{i=1}^9 \frac{(\alpha_{R}^{m\geq 1})_i}{r^i}.
\end{equation*}
$H_{0(ABC)_{k}}=0$:
\begin{eqnarray*}
&& A^{m\geq 1}_{S_{1}} \leq \frac{c_{S_{0}}}{c_{S_{1}}}\rho+\sum_{i=1}^9 \frac{(\alpha_{S_1}^{m\geq 1})_i}{r^i}, \\
&& A^{m\geq 1}_{S_{2}} \leq \frac{1}{c_{S_2}}\left[\rho c_{S_{1}}+\tfrac{8}{3}C^3\left(c_{\phi_{00}}c_{\phi_{11}}+c_{\phi_{01}}c_{\phi_{01}}\right)\right]+\sum_{i=1}^9 \frac{(\alpha_{S_2}^{m\geq 1})_i}{r^i}, \\
&& A^{m\geq 1}_{S_{3}} \leq \frac{1}{c_{S_{3}}}\left\{\rho c_{S_{2}}+\tfrac{8}{3}C^3\left[\rho\left(c_{\phi_{00}}c_{\phi_{11}}+2c_{\phi_{01}}^2\right)+\tfrac{3}{2}c_{\phi_{01}}c_{\phi_{11}}
\right]\right\}+\sum_{i=1}^9 \frac{(\alpha_{S_3}^{m\geq 1})_i}{r^i}, \\
&& A^{m\geq 1}_{S_{4}} \leq \frac{1}{c_{S_{4}}}\left[\rho c_{S_{3}}+4C^3c_{\phi_{11}}\left(2\rho c_{\phi_{01}}+\tfrac{8}{3}c_{\phi_{00}}+c_{\phi_{11}}\right)\right]+\sum_{i=1}^9 \frac{(\alpha_{S_4}^{m\geq 1})_i}{r^i}.
\end{eqnarray*}
We now have to show that all the constants can be chosen in a way that makes all the $A$'s less or equal than $1$. So, introducing a constant $a$, $0<a<1$, the following inequalities need to be satisfied:
\begin{eqnarray}
&& \label{in1}\frac{1}{2}\frac{c_{\phi_{00}}}{c_{\phi_{01}}}\rho\leq 1,\\
&& \label{in2}\frac{1}{2}\frac{c_{\phi_{00}}}{c_{\phi_{11}}}\rho^2\leq 1,\\
&& \label{in3}\frac{c_{S_{0}}}{c_{S_{1}}}\rho\leq 1,\,\,\,\,\, \frac{c_{S_{0}}}{c_{S_{2}}}\rho^2\leq 1,\,\,\,\,\, \frac{c_{S_{0}}}{c_{S_{3}}}\rho^3\leq 1,\,\,\,\,\, \frac{c_{S_{0}}}{c_{S_{4}}}\rho^4\leq 1,\\
&& \label{in4}\frac{1}{c_R}\left(6+\tfrac{73}{36}\rho^2 c_{\phi_{00}}^2\right)\leq 1,\\
&& \label{in5}4\frac{c_{\phi_{11}}}{c_\phi}\leq a,\,\,\,\,\, 4\frac{c_{\phi_{11}}}{c_{\phi_{00}}}\leq a,\,\,\,\,\, \frac{4}{c_{S_{0}}}\left[c_{S_{2}}+\tfrac{16}{3}C^3\left(c_{\phi_{00}}c_{\phi_{11}}+c_{\phi_{01}}^2\right)\right]\leq a,\\
&& \label{in6}\frac{4}{3}\frac{1}{c_{\Omega_{00}}}\left[c_R+32C^2\left(c_{\phi_{00}}c_{\phi_{11}}+c_{\phi_{01}}^2\right)\right]\leq a,\\
&& \label{in7}\frac{1}{c_R}64C^2\left(c_{\phi_{00}}c_{\phi_{11}}+c_{\phi_{01}}^2\right)\leq a,\\
&& \label{in8}\frac{1}{2}\frac{c_{\hat{\Gamma}_{0100}}}{c_{\hat{e}^2\,_{01}}}\leq a,\,\,\,\,\, \frac{c_{\hat{\Gamma}_{1100}}}{c_{\hat{e}^2\,_{11}}}\leq a,\,\,\,\,\, \frac{2}{3}\frac{c_{S_0}}{c_{\hat{\Gamma}_{0100}}}\leq a,\,\,\,\,\, \frac{c_{S_1}}{c_{\hat{\Gamma}_{0101}}}\leq a,
\end{eqnarray}
\begin{eqnarray}
&& \label{in9}\frac{1}{c_{\hat{\Gamma}_{0111}}}\left(c_{S_2}+\tfrac{1}{6}c_R\right)\leq a,\,\,\,\,\, 2\frac{c_{S_1}}{c_{\hat{\Gamma}_{1100}}}\leq a,\\
&& \label{in10}\frac{4}{c_{\hat{\Gamma}_{1101}}}\left(c_{S_2}+\tfrac{1}{12}c_R\right)\leq a,\,\,\,\,\, 4\frac{c_{S_3}}{c_{\hat{\Gamma}_{1111}}}\leq a,\\
&& \label{in11}\frac{c_{\phi_{00}}}{{c_{\phi_{01}}}}\rho\leq a,\\
&& \label{in12}\frac{c_{\phi_{01}}}{{c_{\phi_{11}}}}\rho\leq a,\\
&& \label{in13}\frac{4}{3}\frac{1}{c_{\Omega_{11}}}\left[c_R+32C^3\left(c_{\phi_{00}}c_{\phi_{11}}+c_{\phi_{01}}^2\right)\right]\leq a,\\
&& \label{in14}\frac{1}{c_R}64C^3\left(c_{\phi_{00}}c_{\phi_{11}}+c_{\phi_{01}}^2\right)\leq a,\\
&& \label{in15}\frac{c_{S_0}}{c_{S_1}}\rho\leq a,\,\,\,\,\, \frac{1}{c_{S_2}}\left[c_{S_1}\rho+\tfrac{8}{3}C^3\left(c_{\phi_{00}}c_{\phi_{11}}+c_{\phi_{01}}^2\right)\right]\leq a,\\
&& \label{in16}\frac{1}{c_{S_3}}\left[c_{S_2}\rho+\tfrac{4}{3}C^3\left(2\rho c_{\phi_{00}}c_{\phi_{11}}+4\rho c_{\phi_{01}}^2+3c_{\phi_{01}}c_{\phi_{11}}\right)\right]\leq a,\\
&& \label{in17}\frac{1}{c_{S_4}}\left[c_{S_3}\rho+\tfrac{4}{3}C^3c_{\phi_{11}}\left(8c_{\phi_{00}}+6\rho c_{\phi_{01}}+3c_{\phi_{11}}\right)\right]\leq a.
\end{eqnarray}
Now we have to show that we can choose the constants such that these inequalities will be satisfied.\\
We start by setting
\begin{equation*}
c_{\phi_{01}}\equiv\frac{\rho}{a}c_{\phi_{00}},
\end{equation*}
with which we satisfy \eqref{in1} and \eqref{in11}. Next we set
\begin{equation*}
c_{\phi_{11}}\equiv\frac{\rho^2}{a^2}c_{\phi_{00}},
\end{equation*}
so that \eqref{in2} and \eqref{in12} are satisfied.\\
We continue by setting
\begin{eqnarray*}
&& c_{S_1}\equiv\frac{\rho}{a}c_{S_0},\,\,\,\,\,c_{S_2}\equiv\frac{\rho^2}{a^2}\left(c_{S_0}+\frac{16}{3}\frac{C^3}{a}c_{\phi_{00}}^2\right),\\
&& c_{S_3}\equiv\frac{\rho^3}{a^3}\left[c_{S_0}+\frac{8}{3}\left(3+\frac{7}{2a}\right)C^3c_{\phi_{00}}^2\right],\\
&& c_{S_4}\equiv\frac{\rho^4}{a^4}\left[c_{S_0}+\frac{8}{3}\left(6+\frac{5}{a}+4\frac{a}{\rho^2}\right)C^3c_{\phi_{00}}^2\right].
\end{eqnarray*}
With this we satisfy \eqref{in3}, \eqref{in15}, \eqref{in16} and \eqref{in17}.\\
Inequalities \eqref{in4}, \eqref{in7} and \eqref{in14} are satisfied with
\begin{equation*}
c_R\equiv max\left\{128\frac{\rho^2}{a^3}C^3c_{\phi_{00}}^2,6+\frac{73}{36}\rho^2c_{\phi_{00}}^2\right\}.
\end{equation*}
With this definition for $c_R$ we set
\begin{equation*}
c_{\Omega_{00}}\equiv\frac{4}{3a}\left(c_R+64C^2\frac{\rho^2}{a^2}c_{\phi_{00}}^2\right),\,\,\,\,\,c_{\Omega_{11}}\equiv\frac{4}{3a}\left(c_R+64C^3\frac{\rho^2}{a^2}c_{\phi_{00}}^2\right),
\end{equation*}
so \eqref{in6} and \eqref{in13} are respectively satisfied.\\
Using the previous definitions we set also
\begin{eqnarray*}
&& c_{\hat{\Gamma}_{0100}}\equiv\frac{2}{3a}c_{S_0},\,\,\,\,\,c_{\hat{\Gamma}_{0101}}\equiv\frac{1}{a}c_{S_1},\\
&& c_{\hat{\Gamma}_{0111}}\equiv\frac{1}{a}\left(c_{S_2}+\frac{1}{6}c_R\right),\,\,\,\,\,c_{\hat{\Gamma}_{1100}}\equiv\frac{2}{a}c_{S_1},\\
&& c_{\hat{\Gamma}_{1101}}\equiv\frac{4}{a}\left(c_{S_2}+\frac{1}{12}c_R\right),\,\,\,\,\,c_{\hat{\Gamma}_{1111}}\equiv\frac{4}{a}c_{S_3},\\
&& c_{\hat{e}^2\,_{01}}\equiv\frac{1}{3a^2}c_{S_0},\,\,\,\,\,c_{\hat{e}^2\,_{11}}\equiv\frac{2}{a^2}c_{S_1},
\end{eqnarray*}
and \eqref{in8}, \eqref{in9} and \eqref{in10} are satisfied.\\
There are three inequalities that we have not yet considered, \eqref{in5}. These are now reduced to
\begin{eqnarray*}
&& 4\rho^2\frac{c_{\phi_{00}}}{c_\phi}\leq a^3,\\
&& 4\rho^2\leq a^3,\\
&& 4\rho^2\left[1+\frac{1}{c_{S_0}}\frac{16}{3}C^3c_{\phi_{00}}^2\left(2+\frac{1}{a}\right)\right]\leq a^3.
\end{eqnarray*}
Taking into consideration now \eqref{conditionsC1}, \eqref{conditionsC2}, \eqref{conditionsRrho} we see that these inequalities can be satisfied if we define
\begin{eqnarray*}
&& \rho\equiv max\{\rho_\phi,\rho_{S_0},\rho_{\phi_{00}}\}<\frac{1}{3},\\
&& a\equiv max\left\{\frac{1}{2},(8\rho^2)^\frac{1}{3}\right\}<1.
\end{eqnarray*}
Now we choose some positive constants $c_\Omega,c_{\Omega_{01}},c_{\hat{e}^1\,_{01}},c_{\hat{e}^1\,_{11}}$, that are not restricted by the procedure.\\
Finally we choose r so large that
\begin{equation*}
r>max\left\{r_\phi,r_{S_{0}},r_{\phi_{00}},C^3\left[c_\Omega+\left(c_\Omega^2+2c_\phi^2\right)^\frac{1}{2}\right]\right\}
\end{equation*}
and that all the $A$'s are less or equal than $1$. The induction proof is completed.
\end{proof}
The following lemma states the convergent result. The proof follows as the one given in \cite{Friedrich07}.
\begin{lem}\label{holomorphicSolutions}
The estimates \eqref{convergenceEstimates} for the derivatives of the functions $f$ and the expansion types given in Lemma \ref{expansionTypeFields} imply that the associated Taylor series are absolutely convergent in the domain $|v|<\tfrac{1}{\alpha\rho}$, $|u|+|w|<\tfrac{\alpha^2}{r}$, for any real number $\alpha$, $0<\alpha\leq 1$. It follows that the formal expansion determined in Lemma \ref{expansionCoefficients} defines indeed a (unique) holomorphic solution to the conformal static vacuum field equations which induces the data $\phi$, $S_0$ on $W_0$.
\end{lem}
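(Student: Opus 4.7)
The plan is to bound the absolute value of the general Taylor term using the estimate \eqref{convergenceEstimates}, then perform the sums over $n$ (first) and over $m,p$ (second), using the $v$-finite expansion type as the crucial ingredient that tames the growth in the $v$-direction. For a fixed field $f$ with expansion type $k_f$ and multiplier $q_f$, the formal Taylor series at $O$ is
\[
\sum_{m,n,p\ge 0}\frac{1}{m!\,n!\,p!}\,\partial_u^m\partial_v^n\partial_w^p f(O)\,u^m v^n w^p,
\]
and by \eqref{convergenceEstimates} together with Lemma \ref{expansionTypeFields} the modulus of the generic term is dominated by
\[
c_f\,r^{q_f}\binom{m+p}{m}\frac{1}{(m+1)^2(p+1)^2}\,(r|u|)^m(r|w|)^p\,\frac{(\rho|v|)^n}{(n+1)^2},
\]
with the restriction $0\le n\le 2m+k_f$.

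Given $0<\alpha\le 1$ and assuming $|v|<1/(\alpha\rho)$, I would first perform the $n$-sum. Since $\rho|v|<1/\alpha$, I factor out the largest possible value $(1/\alpha)^{2m+k_f}$ and reindex by $j=2m+k_f-n$:
\[
\sum_{n=0}^{2m+k_f}\frac{(\rho|v|)^n}{(n+1)^2}\le (1/\alpha)^{2m+k_f}\sum_{j=0}^{2m+k_f}\frac{\alpha^j}{(2m+k_f+1-j)^2}\le \frac{(1/\alpha)^{2m+k_f}}{1-\alpha}
\]
(for $\alpha=1$ use instead the trivial bound $\pi^2/6$). This converts the extra growth in $v$ into a factor $\alpha^{-2m}$, which is the only reason the argument survives: without the $v$-finite expansion type this step would fail. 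The remaining double series reduces to
\[
\sum_{m,p\ge 0}\binom{m+p}{m}\frac{1}{(m+1)^2(p+1)^2}\left(\frac{r|u|}{\alpha^2}\right)^{\!m}(r|w|)^p,
\]
which is dominated by $\sum_{N\ge 0}(A+B)^N=(1-(A+B))^{-1}$ with $A=r|u|/\alpha^2$, $B=r|w|$. The hypothesis $|u|+|w|<\alpha^2/r$ gives $A+B\le (r|u|+r|w|)/\alpha^2<1$ because $\alpha\le 1$, establishing the absolute convergence in the claimed bidisc.

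For the second assertion, Lemma \ref{expansionCoefficients} guarantees that the coefficients $\partial_u^m\partial_v^n\partial_w^p f(O)$ are determined so that all formal derivatives of $A_{00}=0$, the $\partial_u$-equations and the $\partial_w$-equations hold at $O$. Since the series are now seen to converge absolutely on the domain $P=\{|v|<1/(\alpha\rho),\ |u|+|w|<\alpha^2/r\}$, they define holomorphic functions on $P$; the equations in question are polynomial (with rational coefficients) in the fields and their derivatives and therefore themselves holomorphic, so the identical vanishing of their Taylor expansions at $O$ implies their identical vanishing on $P$. Uniqueness is immediate because the entire coefficient sequence is fixed by the data $\phi,S_0$ on $W_0$. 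The main obstacle is exactly the balancing of the $v$-truncation bound $\alpha^{-2m}$ against the $(u,w)$ convergence: this is what forces the peculiar shape of the domain $|v|<1/(\alpha\rho)$, $|u|+|w|<\alpha^2/r$ with $\alpha^2$ (not $\alpha$) on the right-hand side.
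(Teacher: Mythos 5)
Your proposal is correct, and it supplies in full the argument that the paper itself omits (the paper simply refers the reader to the corresponding lemma in Friedrich's static-case work): bound the generic Taylor term via \eqref{convergenceEstimates}, use the $v$-finite expansion type to truncate the $n$-sum at $2m+k_f$ and trade the growth in $v$ for a factor $\alpha^{-2m}$, and then sum the remaining binomial double series, with the second assertion following from the second statement of Lemma \ref{expansionCoefficients}. This is exactly the intended route, and your identification of the $\alpha^{-2m}$ versus $\alpha^{2}/r$ balancing as the reason for the shape of the domain is the right way to see it.
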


\section{The complete set of equations on $\hat{N}$}\label{completeSet}
We have seen in Section \ref{sectionConformalEquations} how to calculate a formal expansion for our fields using a subset of the conformal stationary vacuum field equations. In the previous section we have shown that these formal expansions are convergent in a neighbourhood of infinity. In this section we shall show that these fields satisfy the complete system of conformal stationary vacuum field equations. First, we prove that the conformal stationary vacuum field equations are satisfied in the limit as $u\rightarrow 0$. Second, we derive a subsidiary system of equations, for which the first result provides the initial conditions, and which allows us to prove that the complete system is satisfied.
\begin{lem}\label{limitCompleteSystem}
The functions $\hat{e}^a\,_{AB}$, $\hat{\Gamma}_{ABCD}$, $\phi$, $\Omega$, $R$, $S_{ABCD}$, whose expansion coefficients are determined by Lemma \ref{expansionCoefficients}, with expansions that converge on an open neighbourhood of the point $0$, neighbourhood that we assume to coincide with $\hat{N}$, satisfy the complete set of conformal field equations on the set $U_0$ in the sense that the fields $t_{AB}\,^{CD}\,_{EF}$, $R_{ABCDEF}$, $A_{AB}$, $\Sigma_{AB}$, $\Phi_{AB}$, $\Pi_{AB}$, $\Sigma_{ABCD}$, $H_{ABCD}$ calculated from these functions on $\hat{S}\backslash U_0$ have vanishing limit as $u\rightarrow 0$.
\end{lem}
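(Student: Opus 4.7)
The plan is to verify, component by component, that each tensorial equation in the conformal system has vanishing $u\to 0$ limit. First I would catalogue the components that are already forced to be identically zero on $\hat N\setminus U_0$ by the construction, and then evaluate the remaining components on $U_0$ using the singular gauge, the spinor relations on $U_0$, and the $v$-finite expansion types.

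For the first step, every equation appearing in the hierarchies G1.1--G1.8 and G2.1--G2.3, together with $A_{00}=0$ from Section 5.1, vanishes identically on $W_0$ by construction and is propagated throughout $\hat N\setminus U_0$ by the inductive application of the $\partial_w$-equations. The $\partial_w$-equations themselves, namely $A_{11}=0$ and $H_{1(ABC)_0}+H_{0(ABC)_1}=0$, vanish identically by direct imposition. On $U_0$ the three further relations $\Sigma_{11}=0$, $\Sigma_{11CD}=0$, $\Pi_{11}=0$ are built in as initial conditions for the inductive construction, so their limits are trivially zero.

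What remains are the components not covered above: $A_{01}$; the un-imposed components of $\Sigma_{AB}$ and $\Sigma_{ABCD}$; $\Phi_{A1}$; the remaining components of $\Pi_{AB}$ and $H_{ABCD}$; and those components of $t_{AB}{}^{CD}{}_{EF}$ and $R_{ABCDEF}$ that do not involve a pair of $0$'s in the differentiating slots. For these I would exploit three structural inputs: (i) the singular splittings $e^a{}_{AB}=e^{*a}{}_{AB}+\hat e^a{}_{AB}$ and $\Gamma_{ABCD}=\Gamma^*_{ABCD}+\hat\Gamma_{ABCD}$ with $\hat e^a{}_{AB},\hat\Gamma_{ABCD}=\mathcal O(u)$; (ii) the gauge conditions $\hat e^a{}_{A1}|_{U_0}=0$ and $\hat\Gamma_{A1CD}|_{U_0}=0$; and (iii) Lemma \ref{expansionTypeFields} combined with the identity $\partial_v\phi_k=(j-k)\phi_{k+1}$ of Lemma \ref{expansionType} on $U_0$, which relates each essential component of a symmetric spinor to $v$-derivatives of the next lower one. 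As a representative calculation, for $A_{01}=D_{01}\phi-\phi_{01}$ the expansion type $k_\phi=0$ forces $\partial_v\phi=u\cdot(\mbox{regular})$, and hence
\[
D_{01}\phi\big|_{U_0}=\lim_{u\to 0}\Bigl(\tfrac{1}{2u}\,\partial_v\phi+\hat e^a{}_{01}\partial_a\phi\Bigr)=\tfrac12\,\partial_v\phi_{00}\big|_{U_0}=\phi_{01}\big|_{U_0}.
\]
The same pattern handles $\Sigma_{01}$ (using additionally $\Omega|_{U_0}=0$, which follows from $\Omega|_I=0$ and $\partial_w^k\Omega|_I=0$), and analogously the remaining components of $\Sigma_{ABCD}$, $\Phi_{A1}$, $\Pi_{AB}$, $R_{ABCDEF}$, and $t_{AB}{}^{CD}{}_{EF}$.

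The main obstacle is the algebraic bookkeeping. The singular pieces $\Gamma^*_{ABCD}\sim u^{-1}$ and $e^{*a}{}_{AB}\sim u^{-1}$ enter prominently in the curvature, in commutators of covariant derivatives, and in the principal parts of the conformal equations; they must conspire with the regular parts so that, upon using the spinor relations and the data $\Omega|_{U_0}=0$, all genuinely singular contributions cancel. The components of $\Pi_{AB}$, $H_{ABCD}$ and $\Sigma_{ABCD}$ carrying the index $1$ in the free slot are expected to be the most delicate: for these the previously imposed equations $\Sigma_{11}=0$, $\Sigma_{11CD}=0$, $\Pi_{11}=0$ on $U_0$, the reality conditions \eqref{realityCond}, and the full symmetric-trace decomposition \eqref{symmetricCont} are precisely the extra algebraic input required to close the argument.
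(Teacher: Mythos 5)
Your outline is the right one in broad strokes, and your representative computation for $A_{01}$ is exactly the paper's: it works because $\phi_{01}=\tfrac12\partial_v\phi_{00}$ is one of the relations directly imposed on $U_0$, so the $\tfrac1{2u}\partial_v\phi$ singularity resolves immediately. The gap is that you treat the remaining components --- $\Pi_{01}$, $\Sigma_{01CD}$, $H_{1(BCD)_{k=1,2,3}}$ --- as if they yield to the same direct evaluation, when for these there is no directly imposed relation on $U_0$ that supplies the needed $\partial_v$-derivatives of $R$, $\Omega_{AB}$, $\phi_{AB}$ and $S_k$. The device the paper uses, and which your proposal never names, is to evaluate not the component itself but a specific linear combination of it with $\partial_v$-derivatives of equations already known to hold identically: e.g.\ $\Pi_{01}-\tfrac12\partial_v\Pi_{00}$, the combinations $\Sigma_{0100}-\Sigma_{0001}$, $2\Sigma_{0101}-\partial_v\Sigma_{0001}+\Sigma_{0011}$, $2\Sigma_{0111}-\partial_v\Sigma_{0011}$, and for $H$ combinations such as $4H_{1(ABC)_1}-\partial_vH_{1(ABC)_0}-\partial_vH_{0(ABC)_1}+2H_{0(ABC)_2}$. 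Differentiating the imposed equations in $v$ along $U_0$ and using $\partial_v\phi_k=(2-k)\phi_{k+1}$, $\partial_vS_k=(4-k)S_{k+1}$ is precisely what manufactures the missing relations; since $\Pi_{00}=0$, $H_{0(ABC)_k}=0$, etc.\ hold identically, vanishing of the combination gives vanishing of the new component. Without articulating this, your ``direct evaluation'' of $\Pi_{01}$ stalls at terms like $D_{01}R=\tfrac12\partial_u\partial_vR$ on $U_0$, for which you have no independent handle.

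Two smaller points. First, you list $\Phi_{A1}$ among the components to be evaluated head-on, but the paper disposes of it by a short structural argument: $A_{AB}=0$ on $U_0$ forces $D^P{}_B\phi_{AP}$ to be antisymmetric in $AB$ there, hence $\Phi_{AB}=-\Phi_{BA}$ on $U_0$, and together with the imposed $\Phi_{10}=0$ this gives $\Phi_{AB}=0$; this is why the lemma only needs antisymmetry of $\Phi_{AB}$ in the limit. Second, the closing ingredients you name for the delicate cases are partly misidentified: the reality conditions \eqref{realityCond} and the decomposition \eqref{symmetricCont} play no role in this lemma, whereas $\Sigma_{11}=0$, $\Sigma_{11CD}=0$, $\Pi_{11}=0$ on $U_0$ enter only through the consequences $\Omega|_{U_0}=0$, $\Omega_{A1}|_{U_0}=0$; the real extra input is the family of $\partial_v$-differentiated imposed equations described above.
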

\begin{proof}
Taking into account which equations have already been used to determine the formal expansions, and the symmetries of the equations, it is left to show that
$t_{01}\,^{EF}\,_{11},R_{AB0111},A_{01},\Sigma_{01},\Pi_{01},\Sigma_{01CD},H_{1(BCD)_{k=1,2,3}},$
have vanishing limit on $\hat{N}\backslash U_0$ as $u\rightarrow 0$, and that in the same limit $\Phi_{AB}=-\Phi_{BA}$.\\
Because $\langle\sigma^{AB},e_{EF}\rangle=h^{AB}\,_{EF}$ then
\begin{equation*}
t_{01}\,^{EF}\,_{11}=2\Gamma_{01}\,^{(E}\,_{1}\epsilon_{1}\,^{F)}-2\Gamma_{11}\,^{(E}\,_{(0}\epsilon_{1)}\,^{F)}-\sigma^{EF}\,_{a}\left(e^a\,_{11,b}e^b\,_{01}-e^a\,_{01,b}e^b\,_{11}\right),
\end{equation*}
and using the way in which the coordinates and the frame field were constructed, we see that
\begin{equation*}
t_{01}\,^{EF}\,_{11}=O(u),\,\,\,\mbox{as}\,\,u\rightarrow 0.
\end{equation*}
We now consider
\begin{eqnarray*}
R_{AB0111} & = & -\frac{1}{2}\left(S_{AB11}-\frac{1}{6}R\epsilon_{A1}\epsilon_{B1}\right)+\frac{1}{2u}\partial_{v}\hat{\Gamma}_{11AB}-\frac{1}{u}\hat{\Gamma}_{111(A}\epsilon_{B)}\,^{0}\\
&& +\epsilon_{A}\,^{0}\epsilon_{B}\,^{0}\left(-\frac{1}{2u^2}\hat{e}^1\,_{11}+\frac{1}{u}\hat{\Gamma}_{0111}-\frac{1}{u}t_{01}\,^{01}\,_{11}\right)+O(u).
\end{eqnarray*}
Using that
\begin{equation*}
t_{01}\,^{01}\,_{11}=\hat{\Gamma}_{0111}-\frac{1}{2}\partial_v\hat{e}^2\,_{11}-\frac{1}{2u}\hat{e}^1\,_{11}+O(u^2)
\end{equation*}
we get
\begin{eqnarray*}
R_{AB0111} & = & \frac{1}{2u}\bigg[\partial_v\hat{\Gamma}_{11AB}-2\hat{\Gamma}_{111(A}\epsilon_{B)}\,^0+\\
&& \epsilon_A\,^0\epsilon_B\,^0\left(-\frac{2}{u}\hat{e}^1\,_{11}-\partial_v\hat{e}^2_{11}+4\hat{\Gamma}_{0111}\right)\bigg]\\
&& -\frac{1}{2}\left(S_{AB11}-\frac{1}{6}R\epsilon_{A1}\epsilon_{B1}\right)+O(u),
\end{eqnarray*}
so that
\begin{eqnarray*}
&& \lim_{u\rightarrow 0}R_{AB0111} = \frac{1}{2}\bigg[\partial_u\partial_v\hat{\Gamma}_{11AB}-2\partial_u\hat{\Gamma}_{111(A}\epsilon_{B)}\,^0\\
&& +\epsilon_A\,^0\epsilon_B\,^0\left(-\partial_u^2\hat{e}^1\,_{11}-\partial_u\partial_v\hat{e}^2\,_{11}+4\partial_u\hat{\Gamma}_{0111}\right)-S_{AB11}+\frac{1}{6}R\epsilon_{A1}\epsilon_{B1}\bigg]\bigg|_{u=0}.
\end{eqnarray*}
For the case $A=B=0$ we get from the $\partial_u$-equations that
\begin{equation*}
\partial_u^2\hat{e}^1\,_{11}=-2\partial_u\hat{\Gamma}_{1101},\,\,\,\partial_u\partial_v\hat{e}^2\,_{11}=\partial_u\partial_v\hat{\Gamma}_{1100},\,\,\,\partial_u\hat{\Gamma}_{0111}=\frac{1}{4}\left(S_2-\frac{1}{6}R\right),
\end{equation*}
on $U_0$, and so $\lim_{u\rightarrow 0}R_{000111}=0$.\\
Using the $\partial_u$-equations and that $\partial_vS_2=2S_3$ on $U_0$,
\begin{equation*}
\partial_u\hat{\Gamma}_{1111}=S_3,\,\,\,\partial_u\partial_v\hat{\Gamma}_{1101}=2S_3,
\end{equation*}
on $U_0$, and so $\lim_{u\rightarrow 0}R_{010111}=0$. As $\partial_vS_3=S_4$ on $U_0$, $\lim_{u\rightarrow 0}R_{110111}=0$.\\
We take now the limit of $A_{01}$ as $u$ goes to $0$,
\begin{equation*}
\lim_{u\rightarrow 0}A_{01}=\left.\left[\frac{1}{2}\partial_{u}\partial_{v}\phi-\phi_{01}\right]\right|_{u=0}.
\end{equation*}
Using that $\phi_{01}=\frac{1}{2}\partial_v\phi_{00}$ on $U_0$ and that we have $A_{00}=0$ as part of the $\partial_u$-equations we get $\lim_{u\rightarrow 0}A_{01}=0$. With the same procedure we get $\lim_{u\rightarrow 0}\Sigma_{01}=0$.\\
Now we consider $\Phi_{AB}$. As $A_{AB}=0$ on $U_0$ then
\begin{equation*}
D^P\,_B\phi_{AP}|_{U_0}=-D^P\,_A\phi_{BP}|_{U_0},
\end{equation*}
so $\Phi_{AB}|_{U_0}=-\Phi_{BA}|_{U_0}$ and as we already have $\Phi_{10}=0$ then $\Phi_{AB}=0$ on $U_0$.\\
We now take the limit as $u$ goes to $0$ of the combination $\Pi_{01}-\frac{1}{2}\partial_v\Pi_{00}$. For the limits of the derivatives involved we have at $\{u=0\}$
\begin{eqnarray*}
&& D_{00}\phi_{AB}=\partial_u\phi_{AB},\\
&& D_{01}\phi_{01}=\frac{1}{2}\left(\partial_u\partial_v\phi_{01}-\partial_u\phi_{11}\right),\\
&& D_{01}\phi_{11}=\frac{1}{2}\partial_u\partial_v\phi_{11},\\
&& D_{11}\phi_{11}=\partial_w\phi_{11},\\
&& D_{00}R=\partial_u R,\\
&& D_{01}R=\frac{1}{2}\partial_u\partial_v R,\\
&& D_{11}R=\partial_wR.
\end{eqnarray*}
We also use that on $U_0$
\begin{eqnarray*}
\partial_v\phi_k=(2-k)\phi_{k+1},\\
\partial_v S_k=(4-k)S_{k+1},
\end{eqnarray*}
We have already used the equations $\Sigma_{11}=0,\Sigma_{11CD}=0$ restricted to $U_0$, finding that $\Omega,\Omega_{A1}$ are zero on $U_0$.\\
Furthermore we use $\Phi_{00}=0$, that says that on $U_0$
\begin{equation*}
\partial_u\partial_v\phi_{00}=4\partial_u\phi_{01}.
\end{equation*}
So we get for the limit
\begin{equation*}
\lim_{u\rightarrow 0}\left(\Pi_{01}-\frac{1}{2}\partial_v\Pi_{00}\right)=0.
\end{equation*}
Considering that from the $\partial_u$-equations we already have $\Pi_{00}=0$ we get
\begin{equation*}
\lim_{u\rightarrow 0}\Pi_{01}=0.
\end{equation*}
We apply a similar procedure to the $\Sigma_{01AB}$ equations. We take the limit as $u$ goes to zero of the combinations
\begin{eqnarray*}
& \Sigma_{0100}-\Sigma_{0001},\\
& 2\Sigma_{0101}-\partial_v\Sigma_{0001}+\Sigma_{0011},\\
& 2\Sigma_{0111}-\partial_v\Sigma_{0011}.
\end{eqnarray*}
Using what has already been said together with the following limits at $\{u=0\}$
\begin{eqnarray*}
&& D_{00}\Omega_{AB}=\partial_u\Omega_{AB},\\
&& D_{01}\Omega_{01}=\frac{1}{2}\left(\partial_u\partial_v\Omega_{01}-\partial_u\Omega_{11}\right),\\
&& D_{01}\Omega_{11}=\frac{1}{2}\partial_u\partial_v\Omega_{11},\\
&& D_{11}\Omega_{11}=\partial_w\Omega_{11},
\end{eqnarray*}
we see that the limits vanishes, which imply
\begin{equation*}
\lim_{u\rightarrow 0}\Sigma_{01AB}=0.
\end{equation*}
Finally we consider the limit as $u$ goes to zero of the combinations
\begin{eqnarray*}
& 4H_{1(ABC)_1}-\partial_vH_{1(ABC)_0}-\partial_vH_{0(ABC)_1}+2H_{0(ABC)_2},\\
& 12H_{1(ABC)_2}-\partial_v^2H_{1(ABC)_0}-\partial_v^2H_{0(ABC)_1}-2\partial_vH_{0(ABC)_2}+4H_{0(ABC)_3},\\
& 24H_{1(ABC)_3}-\partial_v^3H_{1(ABC)_0}-\partial_v^3H_{0(ABC)_1}-2\partial_v^2H_{0(ABC)_2}-8\partial_vH_{0(ABC)_3},
\end{eqnarray*}
and using what has been said together with:\\
the limits
\begin{eqnarray*}
&& D_{00}S_k=\partial_uS_k,\\
&& D_{01}S_k=\frac{1}{2}\left[\partial_u\partial_vS_k-(4-k)\partial_uS_{k+1}\right],\\
&& D_{11}S_k=\partial_wS_k,
\end{eqnarray*}
the equality on $U_0$
\begin{equation*}
\partial_vS_k=(4-k)S_{k+1},
\end{equation*}
and the equations
\begin{eqnarray*}
\Phi_{A0}=0,
\end{eqnarray*}
we find that those limits are all zero, and considering the equations that we have used to calculate the unknowns we get
\begin{equation*}
\lim_{u\rightarrow 0}H_{1(ABC)_k}=0,\,\,\,k=1,2,3.
\end{equation*}
This completes the proof that the complete system of conformal field equations are satisfied in the limit as $u\rightarrow 0$.
\end{proof}
\begin{lem}
The functions $\hat{e}^a\,_{AB},\hat{\Gamma}_{ABCD},\phi,\Omega,R,S_{ABCD}$ corresponding to the expansions determined in Lemma \ref{expansionCoefficients} satisfy the complete set of conformal vacuum field equations on the set $\hat{N}$.
\end{lem}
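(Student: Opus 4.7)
The plan is to complete the argument by the \emph{subsidiary system} method, which is standard for overdetermined systems of this type. We have already verified on all of $\hat{N}$ the equations $A_{00}=0$, $A_{11}=0$, the full set of $\partial_u$-equations (i.e.\ the groups G1 and G2), and the two $\partial_w$-equations. By Lemma~\ref{limitCompleteSystem}, each of the remaining ``zero-quantities'' of the system, namely
\begin{equation*}
Z=(t_{01}{}^{EF}{}_{11},\, R_{AB0111},\, A_{01},\, \Sigma_{01},\, \Pi_{01},\, \Sigma_{01CD},\, H_{1(BCD)_k},\, \Phi_{AB}+\Phi_{BA}),
\end{equation*}
vanishes in the limit $u\to 0$, so that $Z$ extends continuously to $\hat{N}$ with $Z|_{U_0}=0$. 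The task is to show $Z\equiv 0$ on all of $\hat{N}$.

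The key step is to derive a closed system of propagation equations for the components of $Z$ along the integral curves of $H_{00}$ (the $\partial_u$-direction). For this I would apply $\partial_u$ (equivalently $D_{00}$) to each component of $Z$, then use four tools in combination: the spinor commutation relation \eqref{spinorCommutator}--\eqref{spinorRicci}; the second Bianchi identity, which links the various trace and divergence parts of $R_{ABCDEF}$; the integrability constraints built into Cartan's first and second structural equations; and the equations already verified on $\hat{N}$ (including the $\partial_u$- and $\partial_w$-equations). The aim is to re-express each $\partial_u Z_I$ as a linear combination of the $Z_J$'s themselves, with coefficients that are holomorphic functions of the already-computed fields, possibly plus a regular singular term of the form $u^{-1}M_{IJ}Z_J$ coming from the $u^{-1}$-terms in $\hat{\Gamma}^{*}$ and $\hat{e}^{*a}{}_{AB}$.

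Once such a subsidiary system
\begin{equation*}
\partial_u Z = \tfrac{1}{u}\,M\,Z + N(u,v,w)\,Z
\end{equation*}
is obtained, with $Z|_{u=0}=0$, uniqueness of the holomorphic solution follows by inspecting the eigenvalues of the constant matrix $M$: provided no eigenvalue is a non-positive integer, the only holomorphic solution continuous at $u=0$ with trivial boundary value is $Z\equiv 0$. The eigenvalue structure of $M$ is dictated by the tensorial/spinorial weights of the entries of $Z$ together with the explicit singular part of the connection; one expects it to cause no obstruction, just as in the analogous static calculation of \cite{Friedrich07}. Propagating $Z\equiv 0$ from $W_0$ into $\hat{N}$ by integrating this ODE system in $u$ then yields the conclusion on $\hat{N}\setminus U_0$, and continuity extends it to $\hat{N}$.

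The main obstacle I anticipate is the bookkeeping: one must check that the list $Z$ is \emph{exhaustive}, in the sense that $\partial_u Z_I$ never introduces new independent zero-quantities beyond those already in $Z$ and the ones already known to vanish, and that the derived propagation equations are genuinely linear and homogeneous in $Z$. This requires a careful organization of the Bianchi and commutator identities exactly as in the exact-sets-of-fields formalism of Section~\ref{exactSets}. A second, technical point is verifying that the eigenvalues of the singular matrix $M$ lie in the regime where the Cauchy-Kowalevski-type uniqueness for equations with a regular singular point applies; this is the step that most directly parallels the static treatment and is where the careful choice of gauge on $\hat{N}$ pays off.
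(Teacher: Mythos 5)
Your proposal follows essentially the same route as the paper: the paper derives exactly such a subsidiary system of regular-singular ODEs in $u$ for the zero-quantities (homogeneous in them, with the $u^{-1}$-terms arranged so that the cascade closes, e.g.\ $(\partial_u+\tfrac{1}{u})A_{01}=2\hat{\Gamma}_{0100}A_{01}$ first forces $A_{01}=0$ and hence $\Phi_{AB}=0$), takes the vanishing limits from Lemma~\ref{limitCompleteSystem} as data on $U_0$, and concludes by homogeneity and holomorphy that all quantities vanish on $\hat{N}$. The only points to tidy are that the propagation is from $U_0=\{u=0\}$ rather than from $W_0$, and that the list of zero-quantities must also include the components $\Sigma_{11}$, $\Pi_{11}$, $\Sigma_{11CD}$ (imposed so far only on $U_0$) --- which is precisely the bookkeeping issue you flag.
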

\begin{proof}
We have to show that on $\hat{N}$ the quantities $t_{01}\,^{EF}\,_{11}$ ,$R_{AB0111}$, $A_{01}$, $\Sigma_{A1}$, $\Pi_{A1}$ , $\Sigma_{A1CD}$, $H_{1(BCD)_{k=1,2,3}}$ vanish, and that $\Phi_{AB}=-\Phi_{BA}$. For this we derive a system of subsidiary equations for these fields. The values of the fields at $U_0$, given by Lemma \ref{limitCompleteSystem}, are the initial conditions for the subsidiary system of equations, and they are used throughout the proof.\\
Using the definitions of $A_{AB}$ and $\Phi_{AB}$:
\begin{equation*}
D_{AB}A_{CD}-D_{CD}A_{AB}=-t_{AB}\,^{EF}\,_{CD}D_{EF}\phi+\epsilon_{AD}\Phi_{BC}+\epsilon_{BC}\Phi_{DA},
\end{equation*}
and in particular
\begin{equation*}
\left(\partial_u +\frac{1}{u}\right)A_{01}=2\hat{\Gamma}_{0100}A_{01},
\end{equation*}
which implies $A_{01}=0$, and from that $A_{AB}=0$. This also shows that $\Phi_{AB}=-\Phi_{AB}$, and as we already know that $\Phi_{10}=0$ then $\Phi_{AB}=0$.\\
Following the proof of Lemma 5.5 in \cite{Friedrich07} we find that
\begin{equation}\label{subsidiary1}
\left(\partial_u +\frac{1}{u}\right)t_{01}\,^{AB}\,_{11}=2\hat{\Gamma}_{0100}t_{01}\,^{AB}\,_{11}+2R^{(A}\,_{00111}\epsilon_0\,^{B)},
\end{equation}
which directly shows that $t_{01}\,^{11}\,_{11}=0$.\\
Also following the proof of Lemma 5.5 in \cite{Friedrich07} and taking into account that $S_{ABCD}$ and $R$ satisfy the the contracted Bianchi identity then
\begin{equation}\label{subsidiary2}
\left(\partial_u +\frac{1}{u}\right)R_{AB0111}=2\hat{\Gamma}_{0100}R_{AB0111}-\frac{1}{2}\left(H_{1AB0}-\frac{1}{6}\Pi_{AB}\right),
\end{equation}
from which $R_{000111}=0$, wich also gives $t_{01}\,^{01}\,_{11}=0$.\\
It is still left to show that
\begin{equation}\label{zeroRest}
t_{01}\,^{00}\,_{11},\,\,\,R_{A10111},\,\,\,\Sigma_{A1},\,\,\,\Pi_{A1},\,\,\,\Sigma_{A1CD},\,\,\,H_{1(BCD)_{k=1,2,3}}
\end{equation}
vanish on $\hat{N}$.\\
Using the definitions of $\Sigma_{AB}$ and $\Sigma_{ABCD}$,
\begin{equation*}
D_{AB}\Sigma_{CD}-D_{CD}\Sigma_{AB}=-t_{AB}\,^{EF}\,_{CD}D_{EF}\Omega-\Sigma_{ABCD}+\Sigma_{CDAB},
\end{equation*}
and from that
\begin{equation}\label{subsidiary3}
\partial_u\Sigma_{A1} +\frac{1}{u}\Sigma_{01}\epsilon_A\,^0=2\hat{\Gamma}_{A100}\Sigma_{01}+\Sigma_{A100}.
\end{equation}
At this point the expressions became to long to be treated by hand, so we resort to a computer progam for tensor manipulations.\\
For $\Sigma_{ABCD}$ we obtain
\begin{eqnarray*}
D_{EF}\Sigma_{CDAB}-D_{CD}\Sigma_{EFAB}=t_{CD}\,^{PQ}\,_{EF}D_{PQ}\Omega_{AB}-2\Omega^P\,_{(A}R_{B)PCDEF}\\
+\Omega\left(\epsilon_{DE}H_{FABC}+\epsilon_{CF}H_{DABE}\right)+S_{ABCD}\Sigma_{EF}-S_{ABEF}\Sigma_{CD}\\
+\frac{R}{3}\left(h_{ABCD}\Sigma_{EF}-h_{ABEF}\Sigma_{CD}\right)+\frac{1}{3}\left(1+\Omega\right)\left(h_{ABCD}\Pi_{EF}-h_{ABEF}\Pi_{CD}\right)\\
+\frac{1}{6\left(1+\Omega-\phi^2\right)}\bigg\{3\left[1+\left(1+\Omega\right)\phi^2\right]\big[\Sigma_{ABCD}\Omega_{EF}-\Sigma_{ABEF}\Omega_{CD}\\
-\Omega_{AB}\left(\Sigma_{CDEF}-\Sigma_{EFCD}\right)\big]\\
-4\left(2+3\Omega\right)\phi^2\left(h_{ABCD}\Sigma_{EFPQ}\Omega^{PQ}-h_{ABEF}\Sigma_{CDPQ}\Omega^{PQ}\right)\\
-8\left(1+\Omega\right)\left(2+3\Omega\right)\phi\left(h_{ABCD}\Sigma_{EFPQ}\phi^{PQ}-h_{ABEF}\Sigma_{CDPQ}\phi^{PQ}\right)\bigg\}\\
+\frac{1}{\left(1+\Omega-\phi^2\right)^2}\bigg\{\frac{1}{2}\left(1-\phi^2\right)^2\Omega_{AB}\left(\Omega_{CD}\Sigma_{EF}-\Omega_{EF}\Sigma_{CD}\right)\\
+\Omega\phi\left(2+\Omega-2\phi^2\right)\big[\phi_{AB}\left(\Omega_{CD}\Sigma_{EF}-\Omega_{EF}\Sigma_{CD}\right)\\
+\Omega_{AB}\left(\phi_{CD}\Sigma_{EF}-\phi_{EF}\Sigma_{CD}\right)\big]\\
-2\Omega\left[2\left(1+\Omega^2\right)^2-\left(2+3\Omega\right)\phi^2\right]\phi_{AB}\left(\phi_{CD}\Sigma_{EF}-\phi_{EF}\Sigma_{CD}\right)\\
-\frac{1}{3}\bigg[\frac{}{}\phi^2\left(-1+3\phi^2\right)\Omega^{PQ}\Omega_{PQ}+4\phi\left[3\left(1+\Omega\right)^2-\left(5+6\Omega\right)\phi^2\right]\Omega^{PQ}\phi_{PQ}\\
-4\left(1+\Omega\right)\left[\left(1+\Omega\right)\left(5+6\Omega\right)-\left(7+9\Omega\right)\phi^2\right]\phi^{PQ}\phi_{PQ}\bigg]\\
\left(h_{ABCD}\Sigma_{EF}-h_{ABEF}\Sigma_{CD}\right)\bigg\},
\end{eqnarray*}
which implies
\begin{eqnarray}\label{subsidiary4}
\partial_u\Sigma_{C1AB}+\frac{1}{u}\epsilon_C\,^0\Sigma_{01AB}=2\hat{\Gamma}_{C100}\Sigma_{01AB}-\Omega\epsilon_{0C}H_{10AB}-S_{00AB}\Sigma_{1C}\\
\nonumber-\frac{R}{3}h_{00AB}\Sigma_{1C}-\frac{1}{3}\left(1+\Omega\right)h_{00AB}\Pi_{1C}\\
\nonumber+\frac{1}{6\left(1+\Omega-\phi^2\right)}\bigg\{\frac{}{}3\left[1+\left(1+\Omega\right)\phi^2\right]\left(\Sigma_{1CAB}\Omega_{00}+\Omega_{AB}\Sigma_{1C00}\right)\\
\nonumber+4\left(2+3\Omega\right)\phi^2h_{00AB}\Sigma_{C1PQ}\Omega^{PQ}+8\left(1+\Omega\right)\left(2+3\Omega\right)\phi h_{00AB}\Sigma_{C1PQ}\phi^{PQ}\bigg\}\\
\nonumber+\frac{1}{\left(1+\Omega-\phi^2\right)^2}\bigg\{-\frac{1}{2}\left(1-\phi^2\right)^2\Omega_{AB}\Omega_{00}\Sigma_{1C}+\Omega_{AB}\phi_{00}\Sigma_{1C}\\
\nonumber-\Omega\phi\left(2+\Omega-2\phi^2\right)\phi_{AB}\Omega_{00}\Sigma_{1C}\\
\nonumber+2\Omega\left[2\left(1+\Omega^2\right)^2-\left(2+3\Omega\right)\phi^2\right]\phi_{AB}\phi_{00}\Sigma_{1C}\\
\nonumber+\frac{1}{3}\bigg[\frac{}{}\phi^2\left(-1+3\phi^2\right)\Omega^{PQ}\Omega_{PQ}+4\phi\left[3\left(1+\Omega\right)^2-\left(5+6\Omega\right)\phi^2\right]\Omega^{PQ}\phi_{PQ}\\
\nonumber-4\left(1+\Omega\right)\left[\left(1+\Omega\right)\left(5+6\Omega\right)-\left(7+9\Omega\right)\phi^2\right]\phi^{PQ}\phi_{PQ}\bigg]h_{00AB}\Sigma_{1C}\bigg\}.
\end{eqnarray}
Now with $\Pi_{AB}$
\begin{eqnarray*}
&& D_{CD}\Pi_{AB}-D_{AB}\Pi_{CD}=t_{AB}\,^{EF}\,_{CD}D_{EF}R\\
&& +\frac{1}{1+\Omega-\phi^2}\bigg\{\frac{}{}-2\left(4+7\Omega\right)\left[\phi\Omega^{GH}-2(1+\Omega)\phi^{GH}\right]D_{EF}\phi_{GH}t_{AB}\,^{EF}\,_{CD}\\
&& -4\left(4+7\Omega\right)\left[\phi\Omega^{GH}-2(1+\Omega)\phi^{GH}\right]\phi_{GE}R^{E}\,_{HABCD}\\
&& -\left[\left(3-3\phi^2+7\Omega\phi^2\right)\Omega^{EF}-2\Omega\phi\left(4+7\Omega\right)\phi^{EF}\right]\left(\epsilon_{BC}H_{DAEF}+\epsilon_{AD}H_{BCEF}\right)\\
&& +\frac{1}{3}\phi\left(4+7\Omega\right)\left[\phi\left(\Omega_{CD}\Pi_{AB}-\Omega_{AB}\Pi_{CD}\right)-2(1+\Omega)\left(\phi_{CD}\Pi_{AB}-\phi_{AB}\Pi_{CD}\right)\right]\\
&& -\frac{1}{3}\phi^2\left(4+7\Omega\right)R\left(\Sigma_{ABCD}-\Sigma_{CDAB}\right)+2\left(4+7\Omega\right)\phi\\
&& \left(D_{CD}\phi^{EF}\Sigma_{ABEF}-D_{AB}\phi^{EF}\Sigma_{CDEF}\right)\\
&& +\left(3-3\phi^2+7\Omega\phi^2\right)\left(S_{CD}\,^{EF}\Sigma_{ABEF}-S_{AB}\,^{EF}\Sigma_{CDEF}\right)\bigg\}\\
&& +\frac{1}{\left(1+\Omega-\phi^2\right)^2}\bigg\{-\frac{1}{6}\phi^2\left(-12+40\phi^2+21\Omega\phi^2\right)\Omega^{EF}\Omega_{EF}\\
&&\left(\Sigma_{ABCD}-\Sigma_{CDAB}\right)
\end{eqnarray*}
\begin{eqnarray*}
&& +\frac{2}{3}\phi\left[-18(1+\Omega)+\left(46+61\Omega+21\Omega^2\right)\phi^2\right]\Omega^{EF}\phi_{EF}\left(\Sigma_{ABCD}-\Sigma_{CDAB}\right)\\
&& -\frac{2}{3}(1+\Omega)\left[-24(1+\Omega)+\left(52+61\Omega+21\Omega^2\right)\phi^2\right]\phi^{EF}\phi_{EF}\\
&& \left(\Sigma_{ABCD}-\Sigma_{CDAB}\right)\\
&& +\frac{1}{3}\phi^2\left(-12+40\phi^2+21\Omega\phi^2\right)\Omega^{EF}\left(\Omega_{CD}\Sigma_{ABEF}-\Omega_{AB}\Sigma_{CDEF}\right)\\
&& -\frac{2}{3}\phi\left[18(1+\Omega)+\left(46+61\Omega+21\Omega^2\right)\phi^2\right]\phi^{EF}\left(\Omega_{CD}\Sigma_{ABEF}-\Omega_{AB}\Sigma_{CDEF}\right)\\
&& -\frac{2}{3}\phi\left[12(1+\Omega)+\left(16+61\Omega+21\Omega^2\right)\phi^2\right]\Omega^{EF}\left(\phi_{CD}\Sigma_{ABEF}-\phi_{AB}\Sigma_{CDEF}\right)\\
&& +\frac{4}{3}(1+\Omega)\left[6(1+\Omega)+\left(22+61\Omega+21\Omega^2\right)\phi^2\right]\phi^{EF}\\
&& \left(\phi_{CD}\Sigma_{ABEF}-\phi_{AB}\Sigma_{CDEF}\right)\\
&& +\frac{1}{3}\phi^2\left(-3+7\phi^2\right)R\left(\Omega_{AB}\Sigma_{CD}-\Omega_{CD}\Sigma_{AB}\right)\\
&& -\frac{2}{3}\phi\left[-7(1+\Omega)^2+(11+14\Omega)\phi^2\right]R\left(\phi_{AB}\Sigma_{CD}-\phi_{CD}\Sigma_{AB}\right)\\
&& +2\phi\left(-3+7\phi^2\right)\Omega^{EF}\left(D_{AB}\phi_{EF}\Sigma_{CD}-D_{CD}\phi_{EF}\Sigma_{AB}\right)\\
&& -4\left[-7(1+\Omega)^2+(11+14\Omega)\phi^2\right]\phi^{EF}\left(D_{AB}\phi_{EF}\Sigma_{CD}-D_{CD}\phi_{EF}\Sigma_{AB}\right)\\
&& +\left(-1+\phi^2\right)\left(-3+7\phi^2\right)\Omega^{EF}\left(S_{EFAB}\Sigma_{CD}-S_{EFCD}\Sigma_{AB}\right)\\
&& -2\phi\left[-4-14\Omega-7\Omega^2+2(2+7\Omega)\phi^2\right]\phi^{EF}\left(S_{EFAB}\Sigma_{CD}-S_{EFCD}\Sigma_{AB}\right)\bigg\}\\
&& +\frac{1}{3\left(1+\Omega-\phi^2\right)^3}\bigg\{\frac{1}{2}\phi^2\left[-24+(59+21\Omega)\phi^2+21\phi^4\right]\Omega^{EF}\Omega_{EF}\\
&& \left(\Omega_{AB}\Sigma_{CD}-\Omega_{CD}\Sigma_{AB}\right)\\
&& -2\phi\left[-18(1+\Omega)+(13+19\Omega)\phi^2+(61+42\Omega)\phi^4\right]\Omega^{EF}\phi_{EF}\\
&& \left(\Omega_{AB}\Sigma_{CD}-\Omega_{CD}\Sigma_{AB}\right)\\
&& +2\phi^2\left[-3(1+\Omega)\left(19+14\Omega+7\Omega^2\right)+\left(113+164\Omega+63\Omega^2\right)\phi^2\right]\phi^{EF}\phi_{EF}\\
&& \left(\Omega_{AB}\Sigma_{CD}-\Omega_{CD}\Sigma_{AB}\right)\\
&& -\phi\left[12(1+\Omega)+(-17+19\Omega)\phi^2+(61+42\Omega)\phi^4\right]\Omega^{EF}\Omega_{EF}\\
&& \left(\phi_{AB}\Sigma_{CD}-\phi_{CD}\Sigma_{AB}\right)\\
&& +4\phi^2\left[-3(1+\Omega)\left(9+14\Omega+7\Omega^2\right)+\left(83+164\Omega+63\Omega^2\right)\phi^2\right]\Omega^{EF}\phi_{EF}\\
&& \left(\phi_{AB}\Sigma_{CD}-\phi_{CD}\Sigma_{AB}\right)\\
&& +4\phi(1+\Omega)\left[(1+\Omega)^2(61+42\Omega)-3\left(39+75\Omega+28\Omega^2\right)\phi^2\right]\phi^{EF}\phi_{EF}\\
&& \left(\phi_{AB}\Sigma_{CD}-\phi_{CD}\Sigma_{AB}\right)\bigg\},
\end{eqnarray*}
and we get
\begin{eqnarray}\label{subsidiary5}
&& \partial_u\Pi_{A1}+\frac{1}{u}\epsilon_{A}\,^{0}\Pi_{01}=2\hat{\Gamma}_{A101}\\
\nonumber&& +\frac{1}{1+\Omega-\phi^2}\bigg\{\left[\left(3-3\phi^2+7\Omega\phi^2\right)\Omega^{EF}-2\Omega\phi\left(4+7\Omega\right)\phi^{EF}\right]\epsilon_{0A}H_{10EF}\\
\nonumber&& +\frac{1}{3}\phi\left(4+7\Omega\right)\left[\phi\Omega_{00}-2(1+\Omega)\phi_{00}\right]\Pi_{A1}-\frac{1}{3}\phi^2\left(4+7\Omega\right)R\Sigma_{A100}\\
\nonumber&& \left.\frac{}{}+2\left(4+7\Omega\right)\phi \partial_u\phi^{EF}\Sigma_{A1EF}+\left(3-3\phi^2+7\Omega\phi^2\right)S_{00}\,^{EF}\Sigma_{A1EF}\right\}\\
\nonumber&& +\frac{1}{\left(1+\Omega-\phi^2\right)^2}\bigg\{-\frac{1}{6}\phi^2\left(-12+40\phi^2+21\Omega\phi^2\right)\Omega^{EF}\Omega_{EF}\Sigma_{A100}\\
\nonumber&& +\frac{2}{3}\phi\left[-18(1+\Omega)+\left(46+61\Omega+21\Omega^2\right)\phi^2\right]\Omega^{EF}\phi_{EF}\Sigma_{A100}\\
\nonumber&& -\frac{2}{3}(1+\Omega)\left[-24(1+\Omega)+\left(52+61\Omega+21\Omega^2\right)\phi^2\right]\phi^{EF}\phi_{EF}\Sigma_{A100}\\
\nonumber&& +\frac{1}{3}\phi^2\left(-12+40\phi^2+21\Omega\phi^2\right)\Omega^{EF}\Omega_{00}\Sigma_{A1EF}\\
\nonumber&& -\frac{2}{3}\phi\left[18(1+\Omega)+\left(46+61\Omega+21\Omega^2\right)\phi^2\right]\phi^{EF}\Omega_{00}\Sigma_{A1EF}\\
\nonumber&& -\frac{2}{3}\phi\left[12(1+\Omega)+\left(16+61\Omega+21\Omega^2\right)\phi^2\right]\Omega^{EF}\phi_{00}\Sigma_{A1EF}\\
\nonumber&& +\frac{4}{3}(1+\Omega)\left[6(1+\Omega)+\left(22+61\Omega+21\Omega^2\right)\phi^2\right]\phi^{EF}\phi_{00}\Sigma_{A1EF}\\
\nonumber&& -\frac{1}{3}\phi^2\left(-3+7\phi^2\right)R\Omega_{00}\Sigma_{A1}+\frac{2}{3}\phi\left[-7(1+\Omega)^2+(11+14\Omega)\phi^2\right]R\phi_{00}\Sigma_{A1}\\
\nonumber&& -2\phi\left(-3+7\phi^2\right)\Omega^{EF}\partial_u\phi_{EF}\Sigma_{A1}\\
\nonumber&& +4\left[-7(1+\Omega)^2+(11+14\Omega)\phi^2\right]\phi^{EF}\partial_u\phi_{EF}\Sigma_{A1}\\
\nonumber&& -\left(-1+\phi^2\right)\left(-3+7\phi^2\right)\Omega^{EF}S_{EF00}\Sigma_{A1}\\
\nonumber&& +2\phi\left[-4-14\Omega-7\Omega^2+2(2+7\Omega)\phi^2\right]\phi^{EF}S_{EF00}\Sigma_{A1}\bigg\}\\
\nonumber&& -\frac{1}{3\left(1+\Omega-\phi^2\right)^3}\bigg\{\frac{1}{2}\phi^2\left[-24+(59+21\Omega)\phi^2+21\phi^4\right]\Omega^{EF}\Omega_{EF}\Omega_{00}\Sigma_{A1}\\
\nonumber&& -2\phi\left[-18(1+\Omega)+(13+19\Omega)\phi^2+(61+42\Omega)\phi^4\right]\Omega^{EF}\phi_{EF}\Omega_{00}\Sigma_{A1}\\
\nonumber&& +2\phi^2\left[-3(1+\Omega)\left(19+14\Omega+7\Omega^2\right)+\left(113+164\Omega+63\Omega^2\right)\phi^2\right]\\
\nonumber&& \phi^{EF}\phi_{EF}\Omega_{00}\Sigma_{A1}\\
\nonumber&& -\phi\left[12(1+\Omega)+(-17+19\Omega)\phi^2+(61+42\Omega)\phi^4\right]\Omega^{EF}\Omega_{EF}\phi_{00}\Sigma_{A1}\\
\nonumber&& +4\phi^2\left[-3(1+\Omega)\left(9+14\Omega+7\Omega^2\right)+\left(83+164\Omega+63\Omega^2\right)\phi^2\right]\\
\nonumber&& \Omega^{EF}\phi_{EF}\phi_{00}\Sigma_{A1}
\end{eqnarray}
\begin{eqnarray}
\nonumber&& +4\phi(1+\Omega)\left[(1+\Omega)^2(61+42\Omega)-3\left(39+75\Omega+28\Omega^2\right)\phi^2\right]\\
\nonumber&& \phi^{EF}\phi_{EF}\phi_{00}\Sigma_{A1}\bigg\}.
\end{eqnarray}
Finally, with $H_{ABCD}$,
\begin{eqnarray}\label{subsidiary6}
&& D^{EF}H_{EFCD}=-\frac{1}{2}t^{EFHI}\,_{E}\,^{G}D_{HI}S_{CDFG}-2S_{E(FGC}R^{E}\,_{D)}\,^{HF}\,_{H}\,^{G}\\
\nonumber && +\frac{1}{1+\Omega-\phi^2}\bigg\{\left[\phi\Omega^{EF}-2(1+\Omega)\phi^{EF}\right]\bigg[-\frac{1}{3}(1+\Omega)D_{GH}\phi_{EF}t^{I}\,_{C}\,^{GH}\,_{DI}\\
\nonumber&& -\Omega D_{GH}\phi_{FI}t_{E}\,^{IGH}\,_{CD}-\frac{2}{3}(1+\Omega)\phi_{FG}R^{G}\,_{E}\,^{H}\,_{CDH}+\Omega\phi_{FG}R^{GH}\,_{EHCD}\\
\nonumber&& -\Omega\phi^{GH}R_{GFEHCD}\bigg]+\frac{1}{3}\phi\Omega\left[\phi\Omega^{EF}-2(1+\Omega)\phi^{EF}\right]H_{(CD)EF}\\
\nonumber&&+\frac{1}{2}\left[\left(1-\phi^2+\Omega\phi^2\right)\Omega^{EF}-2\Omega^2\phi\phi^{EF}\right]H_{EFCD}\\
\nonumber&& +\frac{1}{18}(-2+\Omega)\phi\left[\phi\Omega^{E}\,_{(C}\Pi_{D)E}-2(1+\Omega)\phi^{E}\,_{(C}\Pi_{D)E}\right]\\
\nonumber&& +\frac{1}{18}(-2+\Omega)\phi^2 R\Sigma_{E(CD)}\,^{E}+\Omega\phi D_{CD}\phi^{EF}\Sigma_{GEF}\,^{G}\\
\nonumber&& -\frac{2}{3}(1+\Omega)\phi D^{E}\,_{(C}\phi^{FG}\Sigma_{D)EFG}+\frac{1}{2}\left(1-\phi^2+\Omega\phi^2\right)S_{CD}\,^{EF}\Sigma_{GEF}\,^{G}\\
\nonumber&& -\frac{1}{3}\Omega\phi^2 S^{EFG}\,_{(C}\Sigma_{D)EFG}-2\left[\phi\Omega_{CD}-2(1+\Omega)\phi_{CD}\right]\phi^{EF}\Sigma_{GEF}\,^{G}\\
\nonumber&& \left.\frac{}{}-4\phi\Omega^{EF}\phi_{CD}\Sigma_{GEF}\,^{G}+2\phi\Omega^{EF}\phi_{F}\,^{G}\Sigma_{EGCD}\right\}\\
\nonumber&& +\frac{1}{\left(1+\Omega-\phi^2\right)^2}\bigg\{\frac{1}{18}\phi^2\left(3-10\phi^2+6\Omega\phi^2\right)\Omega^{EF}\\
\nonumber&& \left(\Omega_{EF}\Sigma_{G(CD)}\,^{G}+2\Omega^{G}\,_{(C}\Sigma_{D)GEF}\right)\\
\nonumber&& +\frac{2}{9}\phi^3\left(7+4\Omega-6\Omega^2\right)\phi^{EF}\left(\Omega_{EF}\Sigma_{G(CD)}\,^{G}+\Omega^{G}\,_{(C}\Sigma_{D)GEF}\right)\\
\nonumber&& -\frac{2}{9}\phi\left[6(1+\Omega)-\left(13+4\Omega-6\Omega^2\right)\phi^2\right]\Omega^{EF}\phi^{G}\,_{(C}\Sigma_{D)GEF}\\
\nonumber&& -\frac{2}{9}(1+\Omega)\left[3(1+\Omega)+2\left(2+2\Omega-3\Omega^2\right)\phi^2\right]\phi^{EF}\phi_{EF}\Sigma_{G(CD)}\,^{G}\\
\nonumber&& +\frac{4}{9}(1+\Omega)^2\left(3-10\phi^2+6\Omega\phi^2\right)\phi^{EF}\phi^{G}\,_{(C}\Sigma_{D)GEF}\\
\nonumber&& -\frac{1}{18}\phi^2\left(-3+\phi^2\right)R\Omega^{E}\,_{(C}\Sigma_{D)E}-\frac{1}{9}\phi\left[(1+\Omega)^2+(1-2\Omega)\phi^2\right]R\phi^{E}\,_{(C}\Sigma_{D)E}\\
\nonumber&& +\frac{2}{3}\phi^3\Omega_{EF}D^{G}\,_{(C}\phi^{EF}\Sigma_{D)G}-\phi\left(1-\phi^2\right)\Omega^{EF}D_{CD}\phi_{F}\,^{G}\Sigma_{EG}
\end{eqnarray}
\begin{eqnarray}
\nonumber&& +\frac{4}{3}(1+\Omega)\left(1+\Omega-2\phi^2\right)\phi_{EF}D^{G}\,_{(C}\phi^{EF}\Sigma_{D)G}+2\left[(1+\Omega)^2-(1+2\Omega)\phi^2\right]\\
\nonumber&& \phi^{EF}D_{CD}\phi_{F}\,^{G}\Sigma_{EG}-\frac{1}{3}\left(1-\phi^2\right)\phi^2\Omega^{EF}S^{G}\,_{EF(C}\Sigma_{D)G}\\
\nonumber&& +\frac{1}{2}\left(1-\phi^2\right)^2\Omega^{EF}S^{G}\,_{FCD}\Sigma_{EG}\\
\nonumber&& -\frac{2}{3}\phi\left[-(1+\Omega)^2+(1+2\Omega)\phi^2\right]\phi^{EF}S^{G}\,_{EF(C}\Sigma_{D)G}\phi^{EF}S^{G}\,_{FCD}\Sigma_{EG}\\
\nonumber&& +\Omega\phi\left(2+\Omega-2\phi^2\right)+2\phi\left(\Omega_{CD}-2\phi\phi_{CD}\right)\Omega_{E}\,^{F}\phi^{EG}\Sigma_{FG}\bigg\}\\
\nonumber&& +\frac{1}{9\left(1+\Omega-\phi^2\right)^3}\bigg\{\frac{}{}-\phi^2\left(3-13\phi^3+3\Omega\phi^2+3\phi^4\right)\Omega^{EF}\Omega_{EF}\Omega^{G}\,_{(C}\Sigma_{D)G}\\
\nonumber&& -4\phi^3\left(5+8\Omega+2\phi^2-6\Omega\phi^2\right)\Omega^{EF}\phi_{EF}\Omega^{G}\,_{(C}\Sigma_{D)G}\\
\nonumber&& +4\phi^2\left[3(1+\Omega)^3+\left(4-2\Omega-9\Omega^2\right)\phi^2\right]\phi^{EF}\phi_{EF}\Omega^{G}\,_{(C}\Sigma_{D)G}\\
\nonumber&& +2\phi\left[(1+\Omega)\left(3-8\phi^2\right)-2(1-3\Omega)\phi^4\right]\Omega^{EF}\Omega_{EF}\phi^{G}\,_{(C}\Sigma_{D)G}\\
\nonumber&& +8(1+\Omega)\phi^2\left[3\Omega(2+\Omega)+(7-9\Omega)\phi^2\right]\Omega^{EF}\phi_{EF}\phi^{G}\,_{(C}\Sigma_{D)G}\\
\nonumber&& -8(1+\Omega)\phi\left[2(1+\Omega)^2(-1+3\Omega)+3\left(3-4\Omega^2\right)\phi^2\right]\phi^{EF}\phi_{EF}\phi^{G}\,_{(C}\Sigma_{D)G}\bigg\}
\end{eqnarray}
where the l.h.s. is
\begin{eqnarray*}
&& D^{EF}H_{EFCD}=\partial_u H_{11CD}+\frac{1}{u}\left(H_{11CD}+H_{110(C}\epsilon_{D)}\,^{0}\right)\\
&& -\left(\frac{1}{2u}\partial_v +\hat{e}^{a}\,_{01}\partial_a\right)H_{10CD}-2\hat{\Gamma}_{0100}H_{11CD}-\hat{\Gamma}_{010C}H_{110D}\\
&& -\hat{\Gamma}_{010D}H_{110C}+\hat{\Gamma}_{011C}H_{100D}+\hat{\Gamma}_{011D}H_{100C}+\hat{\Gamma}_{1100}H_{10CD}.
\end{eqnarray*}
Equations \eqref{subsidiary1}, \eqref{subsidiary2}, \eqref{subsidiary3}, \eqref{subsidiary4}, \eqref{subsidiary5}, \eqref{subsidiary6} are the system of subsidiary equations for the quantities \eqref{zeroRest}. The expressions on the right hand sides of these equations are homogeneous functions of the quantities \eqref{zeroRest}. Together with Lemma \ref{limitCompleteSystem} this implies that all the expansion coefficients of the quantities \eqref{zeroRest} vanish on $U_0$. As the functions \eqref{zeroRest} are necessarily holomorphic, this implies that they vanish on $\hat{N}$.
\end{proof}

\section{Analyticity at space-like infinity}\label{analyticity}
Our gauge is singular and thus the holomorphic solution of Lemma \ref{holomorphicSolutions} does not cover a full neighbourhood of the point $i$. To show that we can indeed get a holomorphic solution in a hole neighbourhood of $i$ we go to a normal frame field based on the frame $c_{AB}$ at $i$ and the corresponding normal coordinates $x^a$. The argument follows with some modifications the line of the corresponding argument in \cite{Friedrich07}.\\
The geodesic equation for $z^a(u(s),v(s),w(s))$, $D_{\dot{z}}\dot{z}=0$, can be written in the form
\begin{eqnarray*}
&& \dot{z}^a=m^{AB}e^a\,_{AB},\\
&& \dot{m}^{AB}=-2m^{CD}\Gamma_{CD}\,^{(A}\,_Bm^{B)E}.
\end{eqnarray*}
The initial conditions for the geodesics to start at $i$ are
\begin{equation*}
u|_{s=0}=0,\,\,\,\,\,w|_{s=0}=0,
\end{equation*}
and we have to prescribe
\begin{equation*}
v_0=v|_{s=0}=v_0,
\end{equation*}
in order to determine the $\partial_u$-$\partial_w$-plane where the tangent vector is.\\
The components of the tangent vector to the geodesic at $i$ are given by $m^{AB}|_{s=0}=m^{AB}_0$, and by regularity and the geodesic equations we have
\begin{equation*}
m^{00}_0=\dot{u}|_{s=0}\equiv \dot{u}_0,\,\,\,\,\,m^{01}_0=0,\,\,\,\,\,m^{11}_0=\dot{w}|_{s=0}\equiv \dot{w}_0.
\end{equation*}
We can identify the frame $e_{AB}$ with its projection into $T_iN_c$, then $m^{AB}_0e_{AB}=m^{*AB}c_{AB}=x^ac_{\bf a}$, where as defined $c_{AB}=\alpha^a\,_{AB}c_{\bf a}$, and we get
\begin{equation*}
x^1=\tfrac{1}{\sqrt{2}}\left(\dot{w}_0+(v_0^2-1)\dot{u}_0\right),\,\,\,x^2=\tfrac{i}{\sqrt{2}}\left(\dot{w}_0+(v_0^2+1)\dot{u}_0\right),\,\,\,x^3=\sqrt{2}v_0\dot{u}_0,
\end{equation*}
or, inverting the relations
\begin{equation*}
\dot{u}_0(x^a)=-\frac{x^1+ix^2}{\sqrt{2}},\,\,\,v_0(x^a)=-\frac{x^3}{x^1+ix^2},\,\,\,\dot{w}_0=\frac{\delta_{ab}x^ax^b}{\sqrt{2}(x^1+ix^2)}.
\end{equation*}
Here we see that in order to have a well defined vector we need $x^1+ix^2\neq 0$, or, what is the same, $\dot{u}_0\neq 0$. This correspond to the singular generator of ${\cal N}_i$ in the $c_{AB}$-gauge. The vectors $x^ac_{\bf a}$ cover all directions at $i$ except those tangent to the complex null hyperplane $(c_{\bf 1}+ic_{\bf 2})^\perp =\{a(c_{\bf 1}+ic_{\bf 2})+bc_{\bf 3}|a,b\in\mathbb{C}\}$.\\
As we have used a frame formalism, we need also to determine the normal frame centered at $i$ and based on the frame $c_{AB}$ at $i$. As we already have the frame fields $e_{AB}$, we write the equation for the normal frame $c_{AB}$, $D_{\dot{x}}c_{AB}=0$ as an equation for the transformacion $t^A\,_B\in SL(2,\mathbb{C})$ that relates the frames $e_{AB}$ and $c_{AB}$, $c_{AB}=t^C\,_At^D\,_Be_{CD}$. The equation can be written as
\begin{equation}\label{eqTransformation}
\dot{t}^A\,_{B}=-m^{DE}\Gamma_{DE}\,^{A}\,_{C}t^{C}\,_{B},
\end{equation}
and the initial condition cames from having to take $e_{AB}|_i(v)$ to $c_{AB}|_i$,
\begin{equation}\label{initialCondEqTransformation}
t^A\,_B|_{s=0}=s^A\,_B(-v_0).
\end{equation}
Following the proofs of Lemma 7.1, Lemma 7.2 and Lemma 7.3 in \cite{Friedrich07} we arrive at the following two lemmas.
\begin{lem}
For any given initial data $\dot{u}_0$, $v_0$, $\dot{w}_0$, with $\dot{u}_0\neq 0$, there exist a number $t=t(\dot{u}_0,v_0,\dot{w}_0)$ and unique holomorphic solutions $z^j(s)=z^j(s,\dot{u}_0,v_0,\dot{w}_0)$ of the initial value problem for the geodesic equations with initial conditions as described above which is defined for $|s|<1/t$. The functions $z^j(s,\dot{u}_0,v_0,\dot{w}_0)$ are in fact holomorphic functions of all four variables $(s,\dot{u}_0,v_0,\dot{w}_0)$  in a certain $P_{1/t}(0)\times U$, where $U$ is a compactly embedded subset of $(\mathbb{C}\backslash\{0\})\times\mathbb{C}\times\mathbb{C}$.
\end{lem}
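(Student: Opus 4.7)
The plan is to convert the geodesic system into a regular holomorphic initial value problem and then apply the classical Cauchy--Kowalewski theorem with parameters. The apparent obstruction is that $e^a{}_{AB}$ and $\Gamma_{ABCD}$ carry singular pieces $e^{*a}{}_{AB}$ and $\Gamma^*_{ABCD}$ proportional to $1/u$, and the prescribed initial data sit precisely on the singular locus $\{u=0\}$. The key structural observation, however, is that both singular pieces involve the spinor factor $\epsilon_{(A}{}^0\epsilon_{B)}{}^1$ in their free indices, so their contraction against $m^{AB}$ in the right-hand sides of the geodesic equations produces only combinations of the form $m^{01}/u$. Since $u|_{s=0}=0$ with $\dot u_0\ne 0$ and $m^{01}|_{s=0}=0$, one expects $u(s)\sim\dot u_0\,s$ and $m^{01}(s)\sim\alpha\,s$ near the origin, so these singular ratios extend analytically through $s=0$.

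Concretely, I would introduce new unknowns $U(s)=u(s)/s$ and $\mu(s)=m^{01}(s)/s$, with initial conditions $U(0)=\dot u_0$ and $\mu(0)$ read off from the limiting form of the $\dot m^{01}$-equation at $s=0$. Together with $v$, $w$, $m^{00}$, $m^{11}$ and the entries of the frame transformation $t^A{}_B$ governed by \eqref{eqTransformation}, \eqref{initialCondEqTransformation}, these assemble into a vector $y(s)$ satisfying a system $\dot y=F(s,y;\dot u_0,v_0,\dot w_0)$ in which all factors of $1/u$ have been absorbed, so that $F$ is holomorphic on a neighbourhood of $\{0\}\times\{y(0)\}\times U$ for any compactly embedded $U\subset(\mathbb{C}\backslash\{0\})\times\mathbb{C}\times\mathbb{C}$.

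The existence of a unique holomorphic solution on a disk $P_{1/t}(0)$, with radius $1/t$ uniform over $U$, then follows from the standard holomorphic Cauchy--Picard theorem applied to the regularized system; joint holomorphy in $(s,\dot u_0,v_0,\dot w_0)$ is a routine corollary obtained by treating the initial data as additional holomorphic parameters. Reverting to the original variables via $u=sU$ and $m^{01}=s\mu$ recovers the desired $z^j(s,\dot u_0,v_0,\dot w_0)$, which inherit the required holomorphic dependence on all four arguments on $P_{1/t}(0)\times U$.

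The main technical hurdle is the algebraic desingularization. One must verify that after substituting $u=sU$ and $m^{01}=s\mu$, every occurrence of $u^{-1}$ stemming from $e^{*a}{}_{AB}$, $\Gamma^*_{ABCD}$ or their combinations cancels cleanly, so that the transformed right-hand side is genuinely holomorphic at $s=0$ and so that $\mu(0)$ is self-consistently fixed by the equation for $\dot m^{01}$. This is a direct consequence of the explicit form of $e^{*a}{}_{AB}$, $\Gamma^*_{ABCD}$ and the structural role of the projector $\epsilon_{(A}{}^0\epsilon_{B)}{}^1$; it is tedious but routine. Once established, the remainder of the proof is parallel to that of Lemma 7.1 in \cite{Friedrich07}.
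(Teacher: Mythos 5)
Your key structural observation is correct and is indeed the heart of the matter (note the paper does not prove this lemma itself but defers to Lemmas 7.1--7.3 of \cite{Friedrich07}): every singular term in the geodesic system enters only through the combination $m^{01}/u$, the initial conditions force $u|_{s=0}=0$ and $m^{01}|_{s=0}=0$, and $\dot u_0\neq 0$ makes $u$ vanish to exactly first order. The gap is in the desingularization. Substituting $u=s\,u^*$, $m^{01}=s\,\mu$ does \emph{not} produce a system to which the holomorphic Cauchy--Picard theorem applies: the chain rule turns $\dot u=m^{00}+{\cal O}(u^2)$ into $s\,\dot u^*=m^{00}-u^*+{\cal O}(s^2)$, and the $\dot m^{01}$-equation (whose singular part works out to $-m^{00}m^{01}/u$) into $s\,\dot\mu=-\mu\left(1+m^{00}/u^*\right)+{\cal O}(s)$. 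The factors of $1/u$ are traded for factors of $1/s$, so the transformed system has the form $s\,\dot y=F(s,y)$ with a regular singular point at $s=0$, not $\dot y=F(s,y)$ with $F$ holomorphic. Your claim that ``all factors of $1/u$ have been absorbed'' is therefore false as stated, and the standard Picard iteration does not apply.

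The strategy is salvageable, but it requires a different theorem and an extra verification that you omit. For $s\,\dot y=F(s,y)$ with $F$ holomorphic one first needs the compatibility condition $F(0,y(0))=0$ --- which here fixes $u^*(0)=m^{00}(0)=\dot u_0$ and, less trivially, forces $\mu(0)=0$, i.e.\ $m^{01}={\cal O}(s^2)$, so $\mu(0)$ is not simply ``read off from the limiting form'' --- and then a Briot--Bouquet-type theorem: a unique holomorphic solution (depending holomorphically on parameters, with locally uniform radius) exists provided no eigenvalue of $\partial F/\partial y$ at $(0,y(0))$ is a positive integer. Here those eigenvalues come out non-positive ($-1$ from the $u^*$-row, $-\bigl(1+m^{00}/u^*\bigr)\big|_{s=0}=-2$ from the $\mu$-row, and $0$ for the unrescaled variables $v$, $w$, $m^{00}$, $m^{11}$), so the non-resonance condition holds and the conclusion follows; but without invoking this theorem and checking the spectrum --- or equivalently recasting the problem as a Volterra-type integral equation and running a contraction argument in a space of holomorphic functions --- the proof as written does not go through.
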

\begin{lem}
Along te geodesic corresponding to $s\rightarrow z^j(s,\dot{u}_0,v_0,\dot{w}_0)$ equations \eqref{eqTransformation} have a unique holomorphic solution $t^A\,_B(s)=t^A\,_B(s,\dot{u}_0,v_0,\dot{w}_0)$ satisfying the initial conditions \eqref{initialCondEqTransformation}. The functions $t^A\,_B(s)=t^A\,_B(s,\dot{u}_0,v_0,\dot{w}_0)$ are holomorphic in all four variables in the domain where the $z^j(s,\dot{u}_0,v_0,\dot{w}_0)$ are holomorphic.
\end{lem}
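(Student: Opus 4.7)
The strategy is to recognize \eqref{eqTransformation} as a linear ODE system in $t^A{}_B$ along the geodesic, and then invoke holomorphic linear ODE theory once it has been checked that the coefficient matrix
\begin{equation*}
M^A{}_C(s,\dot u_0,v_0,\dot w_0)\equiv -m^{DE}(s)\,\Gamma_{DE}{}^A{}_C(z(s,\dot u_0,v_0,\dot w_0))
\end{equation*}
is holomorphic on the domain $P_{1/t}(0)\times U$ supplied by the previous lemma. Because the ODE is linear, there is no finite-time blow-up obstruction, so existence, uniqueness, and holomorphy in the parameters follow from the standard Picard iteration scheme applied in the holomorphic category: each iterate is holomorphic in $(s,\dot u_0,v_0,\dot w_0)$ and the iteration converges uniformly on compact subsets of $P_{1/t}(0)\times U$ to a unique solution with the same domain of holomorphy as $M^A{}_C$. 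The initial condition $t^A{}_B|_{s=0}=s^A{}_B(-v_0)$ is polynomial in $v_0$ by \eqref{verticals}, so it contributes no additional singularity.

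The only substantive point, and the one I expect to be the main obstacle, is the verification that $M^A{}_C$ really is holomorphic at $s=0$, since the connection coefficients decompose as $\Gamma_{ABCD}=\Gamma^*_{ABCD}+\hat\Gamma_{ABCD}$ with an explicit $1/u$ singularity in $\Gamma^*_{ABCD}=-u^{-1}\epsilon_{(A}{}^0\epsilon_{B)}{}^1\epsilon_C{}^0\epsilon_D{}^0$, while $u(s)=\dot u_0\,s+O(s^2)$ vanishes at $s=0$. The cancellation is the index-structural one already exploited in the preceding lemma: the singular tensor is supported on $(AB)$-symmetrizations involving the mixed index pair $01$, so the only dangerous contribution to $M^A{}_C$ is proportional to $m^{01}(s)/u(s)$. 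From the initial data $m^{01}_0=0$ and the geodesic equation $\dot m^{AB}=-2m^{CD}\Gamma_{CD}{}^{(A}{}_Em^{B)E}$, one obtains $m^{01}(s)=O(s)$, so $m^{01}(s)/u(s)$ extends holomorphically to $s=0$. Combined with the holomorphy of $\hat\Gamma_{ABCD}$ on $\hat N$ and the holomorphy of $z^j(s,\dot u_0,v_0,\dot w_0)$ established in the previous lemma, this yields holomorphy of $M^A{}_C$ on $P_{1/t}(0)\times U$.

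Once these two ingredients are in place, uniqueness of the holomorphic solution follows by applying the linear uniqueness statement to the difference of two putative solutions of \eqref{eqTransformation} with identical initial data, and the joint holomorphy in $(s,\dot u_0,v_0,\dot w_0)$ follows directly from the parameter-holomorphic Picard construction. The entire argument parallels the proof of Lemma 7.3 in \cite{Friedrich07}; the only adaptation required in the stationary setting is that $\hat\Gamma_{ABCD}$ now depends on the enlarged unknown $(\phi,\phi_{AB},\Omega,\Omega_{AB},R,S_{ABCD})$, but since we only use its holomorphy on $\hat N$, already guaranteed by Lemma \ref{holomorphicSolutions}, this introduces no additional difficulty.
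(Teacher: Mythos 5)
Your proposal is correct and follows essentially the same route as the paper, which gives no independent proof but simply defers to Lemmas 7.1--7.3 of \cite{Friedrich07}; your reconstruction (linear holomorphic ODE with parameters, plus the observation that the only singular coefficient is proportional to $m^{01}/u$ and is regularized by $m^{01}(0)=0$ together with the simple zero of $u(s)$ guaranteed by $\dot u_0\neq 0$) is precisely the content of that argument. The one point worth making explicit is that the holomorphy of $m^{AB}(s)$ at $s=0$, which you use to conclude $m^{01}(s)=O(s)$, is already part of what the preceding lemma delivers, since the geodesic IVP is the coupled first-order system for $(z^a,m^{AB})$.
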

Following the discussion in \cite{Friedrich07} it can be seen that, as $|x|\equiv \sqrt{\delta_{ab}\bar{x}^ax^b}\rightarrow 0$, $x^1+ix^2\neq 0$,
\begin{eqnarray*}
&& u(x^c)=-\frac{x^1+ix^2}{\sqrt{2}}+{\cal O}(|x|^3),\\
&& v(x^c)=-\frac{x^3}{x^1+ix^2}+{\cal O}(|x|^2),\\
&& w(x^c)=\frac{\delta_{ab}x^ax^b}{\sqrt{2}(x^1+ix^2)}+{\cal O}(|x|^3).
\end{eqnarray*}
This gives for the forms $\chi^{AB}=\chi^{AB}\,_cdx^c$ dual to the normal frame $c_{AB}$
\begin{equation*}
\chi^{AB}(x^c)=\left(\alpha^{AB}\,_a+\hat{\chi}^{AB}\,_a\right)dx^a,
\end{equation*}
with holomorphic functions $\hat{\chi}^{AB}\,_a(x^c)$ which satisfy $\hat{\chi}^{AB}\,_a={\cal O}(|x|^2)$ as $|x|\rightarrow 0$. Also the coefficients $c^a\,_{AB}=\langle dx^a,c_{AB}\rangle $ of the normal frame in the normal coordinates satisfy
\begin{equation*}
c^a\,_{AB}(x^c)=\alpha^a\,_{AB}+\hat{c}^a\,_{AB},
\end{equation*}
with holomorphic  functions $\hat{c}^a\,_{AB}(x^c)$ which satisfy $\hat{c}^a\,_{AB}={\cal O}(|x|^2)$ as $|x|\rightarrow 0$.\\
The three 1-forms $\alpha^a\,_{AB}dx^a$ are linearly independent and thus for small $|x^c|$ the coordinate transformation $x^a\rightarrow z^a(x^c)$, where defined, is nondegenarate. This means that all the tensor fields entering the conformal stationary vacuum field equations can be expressed in term of the normal coordinates $x^c$ and the normal frame field $c_{AB}$.\\
Now we can derive our main result.
\begin{proof}[Proof of Theorem \ref{mainThm}]
The coordinates $x^a$ cover a domain $U$ in $\mathbb{C}^3$ on which the frame vector fields $c_{AB}=c^a\,_{AB}\partial/\partial_{x^a}$ exist, are linearly independent and holomorphic. Also in $U$ the other tensor fields expressed in terms of the $x^a$ and $c_{AB}$ are holomorphic. However $U$ does not contain the hypersurface $x^1+ix^2=0$ but the boundary of $U$ becames tangent to this hypersurface at $x^a=0$.\\
We want to see that the solution indeed cover a domain containing an open neighbourhood of the origin.\\
We still have the gauge freedom to perform with some $t^A\,_B\in SU(2)$ a rotation $\delta^*\rightarrow\delta^*\cdot t$ of the spin frame. Whit this rotation is associated the rotation
\begin{equation*}
c_{AB}\rightarrow c^t_{AB}=t^C\,_At^D\,_Bc_{CD}
\end{equation*}
of the frame $c_{AB}$ at $i$. The construction of the submanifold $\hat{N}$ was done based on the frame $c_{AB}$, starting now with $c^t_{AB}$ all the previous constructions and derivations can be repeated as far as the estimates for the null data in the $c_{AB}$-gauge can be translated to the same type of estimates for the null data in the $c^t_{AB}$-gauge.\\
We will denote $u'$, $v'$, $w'$ and $e^t_{AB}$ the analogues in the new gauge of the coordinates $u$, $v$, $w$ and the frame $e_{AB}$. The set ${\cal N}_i$ is invariant under this rotation. The sets $\{w=0\}$ and $\{w'=0\}$ are both lifts of the set ${\cal N}_i$ to the bundle of spin frames. The coordinates $u$ and $u'$ are both affine parameters on the null generators of ${\cal N}_i$, which vanish at $i$. The coordinats $v$, $v'$ both label the null generators of ${\cal N}_i$. The frame vectors $e_{00}$ and $e^t_{00}$ are auto-parallel vector fields tangent to the null generators.\\
If $v$ and $v'$ label the same generator $\eta$ of ${\cal N}_i$, then $e^t_{00}(v')=f^2e_{00}(v)$ at  $i$, with some $f\neq 0$. Furthermore, as $e_{00}$ and $e^t_{00}$ are auto-parallel, then $e^t_{00}=f^2e_{00}$ must hold along $\eta$, with $f$ constant along the geodesic. This means that at $i$
\begin{equation*}
s^C\,_0(v')s^D\,_0(v')t^E\,_Ct^F\,_Dc_{EF}=f^2s^C\,_0(v)s^D\,_0(v)c_{CD},
\end{equation*}
and absorbing the undetermined sign in $f$,
\begin{equation}\label{relf}
t^E\,_Cs^C\,_0(v')=fs^E\,_0(v).
\end{equation}
We can write $t^A\,_B\in SU(2)$ as
\begin{equation}\label{tTansformation}
(t^A\,_B)=\left(\begin{array}{cc}a&-\bar{c}\\c&\bar{a}\end{array}\right),\,\,\,\,\,a,c\in\mathbb{C},\,\,\,|a|^2+|c|^2=1.
\end{equation}
This gives with \eqref{relf}
\begin{equation}\label{relNullGenerators}
v'=\frac{-c+av}{\bar{a}+\bar{c}v},\,\,\,f=\frac{1}{\bar{a}+\bar{c}v},\,\,\,\text{resp.}\,\,\,\,\,v=\frac{c+\bar{a}v'}{a-\bar{c}v'},\,\,\,f=a-\bar{c}v'.
\end{equation}
As $\langle du,e_{00}\rangle=1=\langle du',e^t_{00}\rangle$ we have for the affine parameter along $\eta$
\begin{equation}\label{relAffineParameter}
u=f^2u'.
\end{equation}
With \eqref{relNullGenerators}, \eqref{relAffineParameter} holds $\eta(u',v')=\eta(u,v)$.\\
If $c\neq 0$ then $v\rightarrow\infty$ as $v'\rightarrow a/\bar{c}$. So the null generator in the $c_{AB}$-gauge, where we need information, is contained, excepting the origin, in the regular domain of the $c^t_{AB}$-gauge.\\
Let us consider now the abstract null data given in the $c_{AB}$-gauge $\hat{{\cal D}}^{\phi}_n$, $\hat{{\cal D}}^S_n$ satisfying estimates of the form \eqref{estimates-psi}, \eqref{estimates-Psi}. In the $c^t_{AB}$-gauge we have $\hat{{\cal D}}^{\phi t}_n$, $\hat{{\cal D}}^{St}_n$, with terms given by
\begin{equation*}
\psi^t_{A_mB_m...A_1B_1}=t^{G_m}\!_{A_m}t^{H_m}\!_{B_m}...t^{G_1}\!_{A_1}t^{H_1}\!_{B_1}\psi_{G_mH_m...G_1H_1},
\end{equation*}
\begin{equation*}
\Psi^t_{A_mB_m...A_1B_1CDEF}=t^{G_m}\!_{A_m}t^{H_m}\!_{B_m}...t^{G_1}\!_{A_1}t^{H_1}\!_{B_1}t^{I}\!_{C}t^{J}\!_{D}t^{K}\!_{E}t^{K}\!_{L}\Psi_{G_mH_m...G_1H_1IJKL}.
\end{equation*}
Using the essential components of $\psi$ and $\psi^t$
\begin{eqnarray*}
\psi^t_{(A_mB_m...A_1B_1)_n} & = & \sum_{j=0}^{2m}\binom{2m}{j}t^{(G_m}\!_{(A_m}t^{H_m}\!_{B_m}...t^{G_1}\!_{A_1}t^{H_1)_j}\!_{B_1)_n}\psi_{(G_mH_m...G_1H_1)_j}\\
& = & \binom{2m}{n}^{-\frac{1}{2}}\sum_{j=0}^{2m}\binom{2m}{j}^\frac{1}{2}T_{2m}\,^j\,_n(t)\psi_{(G_mH_m...G_1H_1)_j}.
\end{eqnarray*}
The numbers
\begin{equation*}
T_{2m}\,^j\,_n(t)=\binom{2m}{n}^\frac{1}{2}\binom{2m}{j}^\frac{1}{2}t^{(G_m}\!_{(A_m}t^{H_m}\!_{B_m}...t^{G_1}\!_{A_1}t^{H_1)_j}\!_{B_1)_n}
\end{equation*}
satisfy
\begin{equation*}
|T_{2m}\,^j\,_n(t)|\leq 1,\,\,\,\,\,m=0,1,2,...,\,\,\,\,\,0\leq j\leq 2m,\,\,\,\,\,0\leq n\leq 2m,
\end{equation*}
as they represent the matrix elements of a unitary representation of $SU(2)$.
So we get
\begin{equation*}
|\psi^t_{A_mB_m...A_1B_1}|\leq \frac{m!M}{r'^m},\,\,m=1,2,3,...,
\end{equation*}
where $r'=r/4$.\\
In the same way we get
\begin{equation*}
|\Psi^t_{A_mB_m...A_1B_1CDEF}|\leq \frac{m!M'}{r'^m},\,\,m=0,1,2,3,...,
\end{equation*}
where $M'=16M$.\\
So the estimates for the null data on the $c_{AB}$-gauge translate into the same type of estimates for the null data on the $c^t_{AB}$-gauge.\\
Assuming now $c\neq 0$ in \eqref{tTansformation}, we have two possibilities for getting the solution in the $c^t_{AB}$-gauge:
\begin{enumerate}
\item{Using the solution in the $c_{AB}$-gauge we can determine, where possible, the coordinate and frame transformation to the $c^t_{AB}$-gauge. In particular, the singular generator of ${\cal N}_i$ in the $c^t_{AB}$-gauge will coincide with the regular generator of ${\cal N}_i$ in the $c_{AB}$ gauge on which $v=-\bar{a}/\bar{c}$. We are thus able to determine near the singular generator in the $c^t_{AB}$-gauge the expansion of the solution in terms of the coordinates $u'$, $v'$, $w'$ and the frame field $e^t_{AB}$.}\label{firstMethod}
\item{Using the null data $\hat{{\cal D}}^{\phi t}_n$, $\hat{{\cal D}}^{St}_n$ in the $c^t_{AB}$-gauge, one can repeat all the steps of the previous sections to show the existence of a solution to the conformal stationary vacuum field equations in the coordinates $u'$, $v'$, $w'$ of the $c^t_{AB}$-gauge. All the statements made about the solution in the $c_{AB}$-gauge apply also to this solution, in particular statements about domains of convergence.}\label{secondMethod}
\end{enumerate}
The formal expansions of the fields in terms of $u'$, $v'$, $w'$ are uniquely determined by the data $\hat{{\cal D}}^{\phi t}_n$, $\hat{{\cal D}}^{St}_n$, thus the solutions obtained by the two methods are holomorphically related to each other on certain domains by the gauge transformation obtained in (\ref{firstMethod}). As done with the solution in the $c_{AB}$-gauge, the solution in the $c^t_{AB}$-gauge can be expressed in terms of the normal coordinates $x_t^a$ and the normal frame field $c^t_{AB}$. The $x^a_t$ cover a certain domain $U_t\in\mathbb{C}^3$ and the frame field $c^t_{AB}$ is non-degenerate. All the tensor fields expressed in terms of $x^a_t$ and $c^t_{AB}$ are holomorphic on $U_t$. Then the solution in the $c_{AB}$-gauge and the solution in the $c^t_{AB}$-gauge are related on certain domains by the transformation
\begin{equation*}
x_t^a=t^{-1a}\,_bx^b,\,\,\,\,\,c^t_{AB}=t^C\,_At^D\,_Bc_{CD},
\end{equation*}
which gives the transformation corresponding to the rotation of normal coordinates. We can extend this as a coordinate and frame transformation to the solution obtained in (\ref{secondMethod}) to express all fields in terms of $x^a$ and $c_{AB}$. With this extension all fields are defined and holomorphic on $t^{-1}U_t$. Then the solution obtained in the $c_{AB}$-gauge and the solution in the $c^t_{AB}$-gauge are genuine holomorphic extensions of each other, as one covers the singular generator of the other one away from the origin in a regular way.\\
Let now $x_*^a\neq 0$ be an arbitraty point in $\mathbb{C}^3$. We want to show that the solution extends in the coordinates $x^a$ to a domain which covers the set $sx_*^a$ for $0<s<\epsilon$ for some $\epsilon>0$. That is the case in the $c_{AB}$-gauge as far as $x_*^a\neq (\alpha,i\alpha,\beta)$, $\alpha,\beta\in\mathbb{C}$. We need to see what happens if $x_*^a=(\alpha,i\alpha,\beta)$, with $\alpha\neq 0$ or $\beta\neq 0$.\\
If $x_*^a=(\alpha,i\alpha,\beta)$ and $\alpha\neq 0$, we consider the $c^{t'}_{AB}$-gauge, where $t'_{AB}$ is given by \eqref{tTansformation} with $a=0$, $c=1$. The normal coordinates in the two gauges are related by
\begin{equation*}
x^1_{t'}=-x^1,\,\,\,\,\,x^2_{t'}=x^2,\,\,\,\,\,x^3_{t'}=-x^3.
\end{equation*}
The holomorphic transformation $(x^1_{t'},x^2_{t'},x^3_{t'})\rightarrow(-x^1,x^2,-x^3)$ maps $U_{t'}$ onto a subset of $\mathbb{C}^3$, denoted by $t'^{-1}U_{t'}$, which has nonempty intersection with $U$. After the transformation the two solutions coincide on $t'^{-1}U_{t'}\cap U_t$.\\
Under this transformation, the singular set $\{x^1+ix^2=0\}$ in the $c_{AB}$-gauge correspond to the set $\{x^1_{t'}-ix^2_{t'}=0\}$, which is covered in a regular way in a neighbourhood of $i$ in the $c^{t'}_{AB}$-gauge. So the set $t'^{-1}U_{t'}\cup U_t$ admits a holomorphic extension of our solution in the coordinates $x^a$ and the frame $c_{AB}$. In this extension there exist $\epsilon$ such that $sx_*^a$, $x_*^a=(\alpha,i\alpha,\beta)$ with $\alpha\neq 0$, is covered by the solution for $0<s<\epsilon$.\\
We need also to consider the case $\alpha=0$, that is, $x_*^a=(0,0,\beta)$, $b\neq 0$. In this case we use the $c^{t''}_{AB}$-gauge, where $t''_{AB}$ is given by \eqref{tTansformation} with $a=\tfrac{1}{\sqrt{2}}$, $c=\tfrac{i}{\sqrt{2}}$. The normal coordinates are related by
\begin{equation*}
x^1_{t''}=x^1,\,\,\,\,\,x^2_{t''}=-x^3,\,\,\,\,\,x^3_{t''}=x^2.
\end{equation*}
The argument follows the same lines as for the $a\neq 0$ case.\\
Thus the set $U$ can be extended so that the points $sx_*^a$ with $0<s<\epsilon$ are covered by $U$ and all fields are holomorphic on $U$ in the coordinates $x^a$. Then it can be assumed $U$ to contain a punctured neighbourhood of the origin in which the solution is holomorphic in the normal coordinates $x^a$ and the normal frame $c_{AB}$. Then the solution is in fact holomorphic on a full neighbourhood of the origin $x^a=0$, which represents the point $i$, as holomorphic functions in more than one dimension cannot have isolated singularities.\\
By Lemma \ref{formalExpansion} we have from null data satisfying the reality conditions a formal expansion of the solution with expansion coefficients satisfying the reality conditions. By the various uniqueness statemets obtained in the lemmas, this expansion must coincide with the expansion in normal coordinates of the solution obtained above. This implies the existence of a 3-dimensional real slice on which the tensor fields satisfy the reality conditions. It is obtained by requiring the coordinates $x^a$ to assume values in $\mathbb{R}^3$.
\end{proof}

\section{Conclusions}
We have seen how to determine a formal expansion of an asymptotically flat stationary vacuum solution to Einstein's field equations using a minimal set of freely specifyable data, the null data. This data are given by two sequences of symmetric trace free tensors at space-like infinity. We have obtained necessary and sufficient conditions on the null data for the formal expansion to be absolutely convergent, hence showing that the null data characterize all asymptotically flat stationary vacuum solutions to the field equations.\\
This work contains the static case as a particular case, and is ageneralization of Friedrich's work \cite{Friedrich07} from the static to the stationary case.\\
In relation with the works of Corvino and Schoen \cite{CorvinoSchoen06} and Chru\'siel and Delay \cite{ChruscielDelay03}, where they are able to deform given vacuum initial data in an annulus that encompasses the asymptotic end in order to glue that data to an asymptotically flat vacuum stationary solution of the field equations, our result shows that the null data provides a complete survey of all the asymptotics that can be attained. In particular, for performing the gluing they need families of solutions, it would be interesting to see what are the restriction imposed on the null data in order to form one of these families.\\
It is a long standing conjecture that Hansen`s multipoles \cite{Hansen74}, which are relevant because they have nice geometrical transformation properties under change of conformal factor, do characterize an asymptotically flat stationary vacuum solutions to the field equations in the way we have shown the null data do. This have been shown in the axisymmetric case \cite{Backdahl07} and some steps have been achieved in the general case, like showing that the multipoles determine a formal expansion of a solution \cite{BeigSimon81} \cite{Kundu81}, or necessary bounds on the multipoles if the solution exist \cite{BackdahlHerberthson06}, but general conditions on the multipoles for the expansion to be convergent has not been found yet. As there is a bijective correspondence between the null data and Hansen's multipoles, although the relation is highly non linear, it would be nice if this correspondence could be exploited to get necessary and sufficient conditions on the multipoles to determine a convergent expansion.

\section*{Acknowledgement}
I would like to thank Helmut Friedrich for presenting me this problem and for guidance during this work. The author is supported by a PhD scholarship from the International Max Planck Research School.


\end{document}